\documentclass[reqno]{amsart}
\usepackage{amssymb}

\numberwithin{equation}{section}
\newtheorem{theorem}{Theorem}[section]
\newtheorem{proposition}[theorem]{Proposition}
\newtheorem{lemma}[theorem]{Lemma}

\DeclareMathOperator{\supp}{supp}
\DeclareMathOperator{\tr}{tr}

\DeclareMathOperator{\esssupp}{ess \, supp}

\DeclareMathOperator{\essinf}{ess \, inf}

\renewcommand\H{\mathcal{H}}

\renewcommand\L{\mathrm{L}}
\newcommand\R{\mathbb R}
\newcommand\N{\mathbb N}
\newcommand\C{\mathbb C}
\newcommand\Z{\mathbb Z}

\newcommand\di{\mathrm d}

\newcommand\cF{\mathcal{F}}

\newcommand\e{\mathrm{e}}
\newcommand{\la}{\langle}
\newcommand{\ra}{\rangle}
\renewcommand\P{\mathbb P}
\newcommand\E{\mathbb E}

\newcommand\cE{\mathcal{E}}
\newcommand\cI{\mathcal{I}}

\newcommand\eps{\varepsilon}
\newcommand\vphi{\varphi}
\renewcommand{\d}{\mathrm{d}}

\newcommand{\pr}{\prime}

\newcommand{\bom}{{\boldsymbol{{\omega}}}}
\newcommand{\btau}{{\boldsymbol{{\tau}}}}

\newcommand\beq{\begin{equation}}
\newcommand\eeq{\end{equation}}

\newcommand{\abs}[1]{\left\lvert #1 \right\rvert}
\newcommand{\norm}[1]{\left\lVert #1 \right\rVert}
\newcommand{\scal}[1]{\left\langle #1 \right\rangle}
\newcommand{\set}[1]{\left\{ #1 \right\}}
\newcommand{\pa}[1]{\left( #1 \right)}

\newcommand{\hnorm}[1]{\left\{ \!\left\{ #1\right\}\! \right\}}

\newcommand{\eq}[1]{\eqref{#1}}

\newcommand{\up}[1]{^{(#1)}}


\begin{document}

\title[Poisson Statistics for Eigenvalues]
{Poisson Statistics for Eigenvalues of  Continuum  Random Schr\"odinger Operators}

\author[J.-M. Combes]{Jean-Michel Combes}
\address[Combes]{Universit\'e du Sud: Toulon et le Var, D\'epartement de Math\'ematiques,
F-83130 La Garde, France}
\email{combes@cpt.univ-mrs.fr}

\author[F. Germinet]{Fran\c cois Germinet}
\address[Germinet]{Universit\'e de Cergy-Pontoise,
CNRS UMR 8088, IUF, D\'epartement de Math\'ematiques,
F-95000 Cergy-Pontoise, France}
\email{germinet@math.u-cergy.fr}

\author[A. Klein]{Abel Klein}
\address[Klein]{University of California, Irvine,
Department of Mathematics,
Irvine, CA 92697-3875,  USA}
 \email{aklein@uci.edu}

\thanks{2000 \emph{Mathematics Subject Classification.}
Primary 82B44; Secondary  47B80, 60H25}
\thanks{A.K was  supported in part by NSF Grant DMS-0457474.}


\begin{abstract}
We show absence of energy levels repulsion for the eigenvalues of    random Schr\"odinger operators in the continuum.  We prove that, in the localization region at the bottom of the spectrum,  the properly rescaled eigenvalues of a continuum  Anderson Hamiltonian are  distributed as a Poisson point process with intensity measure given by the density of states.  We also obtain simplicity of the eigenvalues. We derive a Minami estimate for continuum  Anderson Hamiltonians.  We also give a simple and transparent proof of Minami's estimate for the (discrete) Anderson model.
\end{abstract}

\maketitle


\section{Introduction}

In this article we show absence of energy levels repulsion for the eigenvalues of    random Schr\"odinger operators in the continuum. 
We prove that, in the localization region at the bottom of the spectrum,  the properly rescaled eigenvalues of a continuum  Anderson Hamiltonian are  distributed as a Poisson point process with intensity measure given by the density of states. We also obtain simplicity of the eigenvalues in that region.

Local fluctuations of eigenvalues of   random  operators is believed to distinguish between  localized and delocalized regimes, indicating an Anderson metal-insulator transition. Exponential decay of eigenfunctions implies that disjoint regions of space are uncorrelated and  create almost independent  eigenvalues, and thus absence of energy  levels repulsion, which is mathematically translated in terms of a Poisson point process. On the other hand, extended states imply that distant regions  have mutual influence, and thus create some repulsion between energy levels. Local fluctuations of eigenvalues have been studied within the context of random matrix theory, in particular Wigner matrices and GUE matrices,  cf.\  \cite{B,DPS,ESY1,ESY2,J1,J2,SS}  and references therein.  It is challenging  to understand random hermitian band matrices from the perspective of their eigenvalues fluctuations, by  proving  a transition between Poisson statistics and a semi-circle law for the density of states (a signature of energy   levels repulsion), and relate this to the (discrete) Anderson model, cf.\   \cite{B,DPS}.  CMV matrices are another class of random matrices for which Poisson statistics     and a transition to energy   levels repulsion  have been proved \cite{KS,St1,St2}.

For random Schr\"odinger operators, Poisson statistics for eigenvalues  was first proved by Molchanov \cite{Mo} for the same one-dimensional continuum random Schr\"o\-dinger operator for which Anderson localization was first rigorously established \cite{GMP}.  Molchanov's proof was based on a detailed analysis of localization in finite intervals  for this particular random Schr\"odinger operator \cite{Mo1}.

Poisson statistics for eigenvalues of the Anderson model was established by Minami \cite{Mi}.  The Anderson model, a  random Schr\"odinger operator on $\ell^2(\Z^d)$, is the discrete analogue of the Anderson Hamiltonian. A crucial ingredient in Minami's proof  is an estimate of the probability of two or more eigenvalues in an interval.  The  key step in the proof of this estimate, namely \cite[Lemma~2]{Mi},   estimates the average of  a determinant whose entries are matrix elements of the imaginary part of the resolvent. The more recent proofs of Minami's estimate by Bellissard,  Hislop and Stolz \cite{BHS} and    Graf and Vaghi \cite{GV} are variants of Minami's. Since those arguments do not seem to extend to the continuum, up to now a Minami-type estimate and Poisson statistics for the eigenvalues have been challenging  questions for continuum  Anderson Hamiltonians.

In this article we introduce a totally  new approach to Minami's estimate.  Unlike the previous  approach, ours   relies on averaging spectral projections, a technique that  does extend to the continuum.  Combined with  a  property of rank one perturbations,  it provides a simple and transparent  proof of Minami's estimate for the Anderson model, valid for  single-site probability distributions with compact support and no atoms, which is presented here as an illustration of the method.  On the continuum, our proof of Minami's estimate  circumvents the unavailability of that rank one property  by averaging   the spectral shift function, using   refined bounds on the density of states not previously available.

Once we have Minami's estimate in the continuum, we prove Poisson statistics for eigenvalues  of the Anderson Hamiltonian.  We start by  approximating the point process defined by the rescaled  eigenvalues    by  superpositions of independent point processes, as in \cite{Mo,Mi}.  But our proof that these superpositions converge weakly to the desired Poisson point process differs from Minami's  for the Anderson model, since his way of identifying the intensity measure of the Poisson process, which relies on complex analysis,  is not readily  applicable in the continuum.  We identify this intensity measure using methods of  real analysis.

 Klein and Molchanov \cite{KM} showed that Minami's estimate  implies  simplicity of eigenvalues for the Anderson model, a result previously obtained by Simon \cite{Si} by different methods. Their arguments can also be applied in the continuum, so we also obtain simplicity of eigenvalues in the continuum. Previous results \cite{CH,GKsudec}  proved only finite multiplicity of the eigenvalues in the localization region.

  \section{Main results}
  
To state our results we introduce the following notation.  We write
\beq \label{box}
\Lambda_{L}(x):= x +\left[-\tfrac L 2, \tfrac L 2\right[^d
\eeq
for the  (half open-half closed)
box of side $L>0$ centered at $x\in \R^d$.  By $\Lambda_L$ we denote a box $\Lambda_{L}(x)$ for some $x \in \R^d$.  Given a box $\Lambda=\Lambda_{L}(x)$, we 
set $\widetilde{\Lambda}= \Lambda \cap \Z^d$. If $B$ is a set, we write  $\chi_B$ for its characteristic function. 
 We set $\chi^{(L)}_x:=\chi_{\Lambda_L(x)}$.  The  Lebesgue measure of a  Borel set $B \subset \R$ will be denoted  by $\abs{B}$.  If $r>0$, we denote by $[r]$ the largest integer less than equal to $r$, and by $[[r]]$ the smallest integer bigger than $r$. By a constant we will always mean a finite constant. Constants such as $C_{a,b,\ldots}$  will be finite and depending only on the parameters or quantities $a,b,\ldots$; they will be independent of other  parameters or quantities in the equation.

 We consider  random Schr\"odinger 
operators on 
$\mathrm{L}^2(\mathbb{R}^d)$ of the type
\beq\label{AndH}
H_{\bom}: =  -\Delta + V_{\mathrm{per}} +
V_{\bom} ,
\eeq
where: $\Delta$ is the $d$-dimensional Laplacian operator; $V_{\mathrm{per}}$ is a bounded $\Z^d$-periodic potential; and  $V_{\bom}$ is an Anderson-type random potential:
\beq
V_{\bom} (x):= 
\sum_{j \in \Z^d} \omega_j \,  u_j(x), \quad \text{with } \quad u_j(x )=u(x-j) ,\label{AndV}
\eeq
where   
the single site potential $u$ is a  nonnegative bounded 
measurable function
on $\R^{d}$ with compact support, uniformly 
bounded away from zero in
a neighborhood of the origin, and
$\bom=\{ \omega_j \}_{j\in
\Z^d}$ is a family of independent 
identically distributed random
variables,  whose  common probability 
distribution $\mu$ is non-degenerate with a bounded density $\rho$ with compact support.

We  normalize   $H_{\bom}$ as follows.   We  first require  $\inf \supp \mu = 0$, which  can always be realized by 
changing   the periodic potential  $V_{\mathrm{per}}$. Second, we  set $\norm{u}_\infty=1$, which can achieved by rescaling $\mu$.
We then adjust    $V_{\mathrm{per}}$ by adding a constant so $\inf \sigma\pa{ -\Delta + V_{\mathrm{per}}}= 0$,  in which case $[0, E_*] \subset\sigma\pa{ -\Delta + V_{\mathrm{per}}}$  for some $E_*>0$.   Thus, without loss of generality, we will assume that the random Schr\"odinger 
operator $H_\omega$  given in \eq{AndH}-\eq{AndV} is normalized as follows:
 \begin{itemize} 
 
 \item[(I)]  The free Hamiltonian $ H_0 := -\Delta + V_{\mathrm{per}}$ has $0$ as the bottom of its spectrum:
  \beq
\inf \sigma(H_0)= 0 .
\eeq
\item[(II)] The single site potential $u$ is a measurable function  on $\R^d$
such that
 \begin{equation} \label{u}
\norm{u}_\infty=1 \quad \text{and} \quad u_{-}\chi_{\Lambda_{\delta_{-}}(0)}\le u \le \chi_{\Lambda_{\delta_{+}}(0)}\quad \text{with $u_{-}, \delta_{\pm}\in ]0,\infty[ $};
\end{equation}
we set
\beq
U_+:= \norm{\textstyle{\sum}_{j \in \Z^d} \,  u_j}_\infty\le \max\set {1, \delta_+^d}.   \label{U+}
\eeq

\item[(III)]  $\bom=\{ \omega_j\}_{j \in \Z^d}$ is a family of independent, identically distributed random variables, whose common probability distribution $\mu$  has a density $\rho$ such that
\beq \label{mu}
\set{0  ,M_\rho}\in \esssupp \rho \subset   [0,M_\rho],\; \text{with} \;   M_\rho\in ]0,\infty[   \;\text{and} \; \rho_+:=\norm{\rho}_\infty < \infty. 
\eeq
\end{itemize}
A random Schr\"odinger operator $H_\omega$ on  $\mathrm{L}^2(\mathbb{R}^d)$   as in \eq{AndH}-\eq{AndV}, normalized as in (I)-(III), will be called an  \emph{Anderson Hamiltonian}. The common probability distribution $\mu$  in (III) is said to be a uniform-like distribution if its density $\rho$ also satisfies $\rho_-:=\essinf   \rho  \chi_{[0, M_\rho]}>0$, in which case we have
\beq \label{unifdist}
\rho_-  \chi_{[0, M_\rho]}\le  \rho\le  \rho_+  \chi_{[0, M_\rho]} \quad \text{with} \quad 
\rho_\pm,M_\rho \in  ]0,\infty[ .
\eeq

An  Anderson Hamiltonian $H_{\bom}$  is a $\Z^d$-ergodic family of
random self-adjoint operators.
It follows from standard results (cf.\  \cite{KM,CL,PF})
that there exists fixed subsets $\Sigma$,  $\Sigma_{\mathrm{pp}}$, $\Sigma_{\mathrm{ac}}$  and $\Sigma_{\mathrm{sc}}$ of $\R$ so that the spectrum $\sigma(H_{\bom})$
of $H_{\bom}$,  as well as its pure point, 
absolutely continuous, and singular continuous  components,
are equal to these fixed sets with probability one.  With our normalization, 
  the non-random spectrum $\Sigma$ of    an  Anderson Hamitonian  $H_{\bom}$
 satisfies (cf.\  \cite{KiM})
 \beq
 \sigma\pa{ H_0} \subset \Sigma \subset [0, \infty[,
 \eeq
 so $\inf \Sigma=0 $ and $ [0, E_*] \subset\Sigma$ for some $E_*=E_*(V_{\mathrm{per}})>0$.
Note that $ \Sigma= \sigma\pa{ -\Delta } =[0,\infty[  \quad \text{if} \quad  V_{\mathrm{per}}=0.$

 An Anderson Hamiltonian $H_\bom$ exhibits Anderson and dynamical localization at the bottom of the spectrum \cite{HM,CH,Klo93,KSS,GdB,DS,GKboot,GKgafa,AENSS}.  More precisely, there exists an energy $E_1 >0$ such that   $[0, E_1]  \subset\Xi^{\text{CL}}$, where $ \Xi^{\text{CL}}$ is the region of complete localization  for the random operator
$H_\bom$ \cite{GKduke,GKsudec}. (See Appendix~\ref{appMSA} for a discussion of localization. Note that $\R \setminus \Sigma \subset \Xi^{\text{CL}}$ in our definition.)  
Similarly,  given an energy $E_1>0$, we have $[0, E_1]  \subset \Xi^{\text{CL}}$ if $\rho_+$ in \eq{mu} is sufficiently small, corresponding to a large disorder regime.

Finite volume operators will be defined for 
finite boxes $\Lambda=\Lambda_L(j)$, where $j\in \Z^d$ and $L \in 2\N$,  $ L > \delta_+$.   Given such $\Lambda$, we will consider the random Schr\"odinger operator $H_\bom^{(\Lambda)}$ on $\L^2(\Lambda)$ given by the restriction of the Anderson Hamiltonian  $H_\bom$ to $\Lambda$ with periodic boundary condition. To do so, we identify $\Lambda$ with a torus
in the usual way by identifying opposite edges, and  define finite volume operators  
\begin{align}\label{finvolH}
H_{\bom}\up{\Lambda} :=H_{0}\up{\Lambda}+ V_{\bom}\up{\Lambda} \quad \text{on}   \quad \L^{2}(\Lambda).
\end{align}
The finite volume free Hamiltonian $H_{0}\up{\Lambda}$ is given by
\beq
H_{0}\up{\Lambda}:= - \Delta\up{\Lambda} +  V_{\mathrm{per}}\up{\Lambda} \quad \text{on}   \quad \L^{2}(\Lambda),
\eeq
where 
$\Delta\up{\Lambda}$ is the  Laplacian on $\Lambda$ with periodic boundary condition and   
$V_{\mathrm{per}}\up{\Lambda}$ is the restriction of $ V_{\mathrm{per}}$  to $\Lambda$.
The random potential $V_{\bom}\up{\Lambda}$ is the restriction of $V_{\bom\up{\Lambda}}$ to $\Lambda$, where, given $\bom=\set{\omega_i}_{i \in \Z^d}$ , $\bom\up{\Lambda}=\set{\omega\up{\Lambda}_i}_{i \in \Z^d}$ is defined as follows:
\beq\begin{split}
\omega\up{\Lambda}_i &=\omega_i \quad \text{if}  \quad i \in \Lambda, \\
\omega\up{\Lambda}_i &=\omega\up{\Lambda}_k  \quad \text{if}  \quad k-i \in L\Z^d .
\end{split}\eeq
The random finite volume operator $ H_{\bom}\up{\Lambda}$ is covariant with respect to translations in the torus. 
If $B \subset \R$ is a Borel set, we write $P_\bom^{(\Lambda)}(B):=\chi_B\pa{H_\bom^{(\Lambda)}}$ and  $P_\bom(B):=\chi_B(H_\bom)$ for the spectral projections.

The finite volume operator $H_\bom^{(\Lambda)}$ has a compact resolvent, and hence its ($\bom$-dependent) spectrum consists of isolated  eigenvalues with finite multiplicity.  It satisfies a Wegner estimate  \cite{CH,CHK2}:  Given $E_0 > 0$, there exists a  constant $K_W$, independent of $\Lambda$, such that for  all 
intervals $I\subset [0,E_0]$ we have
\beq\label{Wegner}
\E \set {\tr P_\bom^{(\Lambda)}(I)} \le K_W \, \rho_+ \abs{I}\abs{\Lambda}.
\eeq
The constant $K_W$ given in  \cite{CH,CHK2} depends on $E_0,d,  u, V_{\mathrm{per}},M_\rho$, but not on  $\rho_+$.

The integrated density of states (IDS)  for $H_\bom$ is given, for a.e.  $E \in \R$, by 
 \beq \label{N(E)}
 N(E):= \lim_{L \to \infty} \abs{\Lambda_L(0)}^{-1} \tr P_{\bom}^{(\Lambda_L(0))}(]-\infty,E])\quad \text{for $\P$-a.e. $\bom$},
 \eeq
 in the sense that the limit exists and is the same  for $\P$-a.e. $\bom$   (cf.\  \cite{CL,PF}). It follows from \eq{Wegner} that the IDS $N(E)$ is  locally Lipschitz,  hence continuous, so \eq{N(E)} holds for all $E\in \R$.
For all $E\in \R$ we have
 \beq
 N(E) = \lim_{L \to \infty} \E\set{ \abs{\Lambda_L}^{-1} \tr P_{\bom}^{(\Lambda_L)}(]-\infty,E])}.
 \eeq
 $N(E)$ is a nondecreasing absolutely continuous function on $\R$, the cumulative distribution function of the density of states measure, given by
 \beq \label{dsm}
 \eta(B):=     \E \tr \set {\chi^{(1)}_0  P_{\bom}(B)\chi^{(1)}_0}        \quad \text{for a Borel set $B \subset \R$.}
 \eeq
In particular
 $N(E)$ is differentiable a.e.\ with respect to Lebesgue measure,  with  $n(E):=N^\pr(E) \ge 0$ being the density of the measure  $\eta$, so   $n(E)>0$ for  $\eta$-a.e.  $E$.

 Given
 an energy $\cE \in \Sigma$, using \eq{Wegner}
we define a point process $\xi_{\cE,\bom}^{(\Lambda)}$ on the real line   by the rescaled spectrum of the finite volume operator $H_{\bom}^{(\Lambda)}$ near $\cE$:
\beq\label{defxi0}
\xi_{\cE,\bom}^{(\Lambda)}(B):= \tr \set{\chi_B\pa{\abs{\Lambda}\pa{H_{\bom}^{(\Lambda)} -\cE}} }= \tr\set{ P_{\bom}^{(\Lambda)}\pa{\cE + \abs{\Lambda}^{-1} B}}
\eeq
for a Borel set $B \subset \R$. (We refer to \cite{DV} for  definitions and results concerning random measures and point processes.)

\begin{theorem}\label{thmmain} Let $H_\bom$ be an Anderson Hamiltonian with $\delta_- \ge 2$ and a uniform-like distribution $\mu$.
Then there exists an energy $E_0>0 $,   such that:
\begin{itemize}
\item[(a)]  For all energies $\cE \in \Xi^{\text{CL}}\cap  [0, E_0[$ such that the IDS $N(E)$ is differentiable at
$\cE$ with $n(\cE):=N^\pr(\cE) >0$,  the point process $\xi_{\cE,\bom}^{(\Lambda_L)}$ converges weakly, as $L \to \infty$, to the Poisson point process $\xi_\cE$ on $\R$ with intensity measure  $\nu_\cE(B):=\E\,  \xi_\cE(B)= n(\cE)\abs{B}$, i.e., $\di \nu_\cE= n(\cE) \di E$.
\item[(b)] With probability one, every eigenvalue of   $H_\bom$  in $ \Xi^{\text{CL}}\cap  [0, E_0[$ is simple.
\end{itemize}
Similarly,  given an energy  $E_0 >0$, (a) and (b) hold if the probability distribution $\mu$ in \eq{unifdist} has a density $\rho$ with $\tfrac{\rho_+}{\rho_-} \rho_+^{2^d-1}$ sufficiently small.  In fact, there exists a constant $Q_{d,V_{\mathrm{per}} }>0$, such that  (a) and (b) hold whenever
\beq\label{maincond}
U_+{u_-^{-2^d}} \tfrac{\rho_+}{\rho_-} \rho_+^{2^d-1}  \gamma_d(E_0) \min \set{1, E_0^{2^d-d -1}}  \max \set{1 ,  E_0^{2^{d+2}} }\le Q_{d,V_{\mathrm{per}} }, 
 \eeq
where we have  $\gamma_d(E_0)=1$ if $d \ge 2$, and  $\gamma_1(E_0)=\gamma_{1,V_{\mathrm{per}}}(E_0)\in ]0,1]$ with  $\lim_{E_0 \to 0} \gamma_1(E_0)=0$.
\end{theorem}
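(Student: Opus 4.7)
The plan is to reduce the convergence of $\xi_{\cE,\bom}^{(\Lambda_L)}$ to a standard result on superpositions of asymptotically Bernoulli point processes, using localization to decouple distant regions, the continuum Minami estimate (whose proof via spectral shift averaging is the main technical novelty of the paper, but which can be used as a black box here) to suppress multiple eigenvalues in small boxes, and differentiability of the IDS at $\cE$ to identify the intensity measure. The simplicity statement (b) is then a consequence of the continuum Minami estimate through the Klein--Molchanov mechanism.

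First, choose an intermediate scale $\ell=\ell(L)$ with $\ell \to \infty$ and $\ell \ll L$, and cover $\Lambda_L$ by essentially disjoint translates $\Lambda_\ell(x_k)$, $k=1,\dots,K_L$, with $K_L\sim \abs{\Lambda_L}/\abs{\Lambda_\ell}$. Because $\cE\in\Xi^{\text{CL}}$, the multiscale/fractional moment machinery available at energies of complete localization yields, with overwhelming probability, that every eigenfunction of $H_\bom^{(\Lambda_L)}$ with eigenvalue near $\cE$ is exponentially concentrated in some $\Lambda_{\ell/2}(x_k)$, and hence can be matched up to an error of order $\e^{-c\ell}$ with an eigenvalue of $H_\bom^{(\Lambda_\ell(x_k))}$. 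This shows that, if I define the rescaled point process
\[
\zeta_{k,\bom}^{(L)}(B):=\tr\set{P_\bom^{(\Lambda_\ell(x_k))}\pa{\cE+\abs{\Lambda_L}^{-1}B}},
\]
then $\xi_{\cE,\bom}^{(\Lambda_L)}$ and $\sum_k \zeta_{k,\bom}^{(L)}$ agree on any fixed bounded Borel set $B$ with probability tending to one. The $\zeta_{k,\bom}^{(L)}$ are independent by construction (each depends only on the $\omega_j$ with $j$ near $\Lambda_\ell(x_k)$, together with periodic-boundary identifications that are harmless since $\ell\to\infty$).

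The limit is then controlled by the classical criterion for convergence of arrays of independent point processes to a Poisson process with intensity $\nu_\cE=n(\cE)\,\di E$: it suffices to show, for every bounded interval $B$, that
\[
\max_k \P\set{\zeta_{k,\bom}^{(L)}(B)\ge 1}\to 0,\qquad \sum_k \P\set{\zeta_{k,\bom}^{(L)}(B)\ge 2}\to 0,
\]
and $\sum_k \P\set{\zeta_{k,\bom}^{(L)}(B)\ge 1}\to n(\cE)\abs{B}$. The first follows from the Wegner estimate \eq{Wegner}, which gives $\P\{\zeta_{k,\bom}^{(L)}(B)\ge 1\}\le \E\zeta_{k,\bom}^{(L)}(B)\le K_W\rho_+\abs{B}\abs{\Lambda_\ell}/\abs{\Lambda_L}$. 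The second is precisely where the continuum Minami estimate enters: it gives $\P\{\zeta_{k,\bom}^{(L)}(B)\ge 2\}\le C(\rho_+\abs{B}\abs{\Lambda_\ell}/\abs{\Lambda_L})^2\abs{\Lambda_\ell}$ (up to constants), whose sum over $k$ is of order $(\abs{\Lambda_\ell}/\abs{\Lambda_L})\abs{B}^2\to 0$. Together these two bounds also imply $\sum_k \P\{\zeta_{k,\bom}^{(L)}(B)\ge 1\}=\sum_k \E\zeta_{k,\bom}^{(L)}(B)+o(1)$, so the limiting intensity is controlled by the first moment.

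The main obstacle is identifying this first-moment limit as $n(\cE)\abs{B}$: unlike Minami's proof for the lattice, complex-analytic tools relating the trace to boundary values of the Green's function are not convenient in the continuum. The plan is to proceed by real-variable means, writing
\[
\sum_k \E\zeta_{k,\bom}^{(L)}(B)=\abs{\Lambda_L}\,\abs{\Lambda_\ell}^{-1}\,\E\tr P_\bom^{(\Lambda_\ell)}\pa{\cE+\abs{\Lambda_L}^{-1}B},
\]
using covariance, and recognizing the right-hand side as (approximately) $\abs{B}\cdot \abs{\Lambda_\ell}^{-1}\E\tr P_\bom^{(\Lambda_\ell)}(J_L)/\abs{J_L}$, where $J_L=\cE+\abs{\Lambda_L}^{-1}B$ shrinks to $\cE$. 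The hypothesis that $N$ is differentiable at $\cE$ with $N'(\cE)=n(\cE)$, together with the convergence of the finite-volume IDS and the refined local bounds on $\eta$ supplied by the paper's improved density-of-states estimates (which also feed the Minami proof), gives that this ratio converges to $n(\cE)$, concluding the proof of (a). Finally, for (b), the continuum Minami estimate says $\E\set{\tr P_\bom^{(\Lambda)}(I)(\tr P_\bom^{(\Lambda)}(I)-1)}\le C(\rho_+\abs{I}\abs{\Lambda})^2$; feeding this into the Klein--Molchanov argument from \cite{KM} yields that on a set of full probability, no two eigenvalues of $H_\bom$ in $\Xi^{\text{CL}}\cap[0,E_0[$ can coincide and eigenfunctions are exponentially localized around disjoint centers, forcing all such eigenvalues to be simple.
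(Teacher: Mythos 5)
Your proposal follows the same high-level architecture as the paper: decompose into a superposition of independent point processes living on mesoscopic boxes, use localization to justify the decomposition, use Wegner for the first-moment bound and Minami for the no-double-points condition, and finish with the Klein--Molchanov mechanism for simplicity. But there is a genuine gap in the step where you identify the intensity as $n(\cE)$, and that step is precisely where the paper's proof of Theorem~\ref{thmPoisson}(a) departs from Minami's lattice argument.

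You write that
$\sum_k \E\,\zeta_{k,\bom}^{(L)}(B)=\abs{\Lambda_L}\,\abs{\Lambda_\ell}^{-1}\,\E\tr P_\bom^{(\Lambda_\ell)}\pa{\cE+\abs{\Lambda_L}^{-1}B}$
and then argue that differentiability of $N$ at $\cE$, ``together with the convergence of the finite-volume IDS,'' gives convergence to $n(\cE)\abs{B}$. This does not follow. The quantity $\abs{\Lambda_\ell}^{-1}\E\tr P_\bom^{(\Lambda_\ell)}(J)$ approximates $\eta(J)$ only up to a boundary-layer error; the generic Wegner-type estimate for that error is of size $O(\ell^{-1})$, which is enormous compared with $\abs{J_L}=\abs{B}\abs{\Lambda_L}^{-1}\sim L^{-d}$ (recall $\ell\approx L^a$ with $a<1$). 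So the ratio $\abs{\Lambda_\ell}^{-1}\E\tr P_\bom^{(\Lambda_\ell)}(J_L)/\abs{J_L}$ need not converge to $n(\cE)$ unless one shows the finite-volume versus infinite-volume discrepancy is $o\pa{\abs{\Lambda_L}^{-1}}$. This is a double-limit issue that differentiability of $N$ alone cannot resolve, and the ``refined local bounds on $\eta$'' you invoke bound $\eta$ itself, not the finite-volume defect. The paper sidesteps this entirely by introducing the auxiliary random measure $\theta_{\cE,\bom}^{(\Lambda)}(B)=\tr\set{\chi_\Lambda P_\bom(\cE+\abs{\Lambda}^{-1}B)\chi_\Lambda}$ built from the \emph{infinite-volume} operator, whose intensity is \emph{exactly} $\abs{\Lambda}\,\eta(\cE+\abs{\Lambda}^{-1}B)$ by covariance, so that the Lebesgue Differentiation Theorem applies with no error term; the expectation $\E\abs{\xi_{\bom}^{(\Lambda)}(f)-\theta_{\bom}^{(\Lambda)}(f)}\to 0$ is then proved by a Helffer--Sj\"ostrand expansion combined with the fractional-moment decay from $\Xi^{\text{CL}}$ (Lemma~\ref{lemEconv}, \eq{limL3}), which is precisely the estimate your sketch omits. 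You could repair your argument by proving the analogous comparison $\E\abs{\sum_k\zeta_{k}^{(L)}(f)-\theta_{\bom}^{(\Lambda_L)}(f)}\to 0$, but then you have in effect reconstructed the paper's Lemma~\ref{lemEconv}; stating only ``convergence of the finite-volume IDS'' is not enough. The remainder of your proposal (Wegner for \eq{cond1}, Minami for \eq{cond3}, reduction of \eq{cond2} to first moments, Klein--Molchanov for simplicity) matches the paper.
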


The next theorem gives our Minami estimate for the continuum Anderson Hamiltonian, a crucial ingredient for proving Theorem~\ref{thmmain}.

 \begin{theorem}\label{thmminami} Let $H_\bom$ be an Anderson Hamiltonian with $\delta_- \ge 2$ and a uniform-like distribution $\mu$.  
 Then there exists a constant $Q_{d,V_{\mathrm{per}} } >0$,  such that whenever \eq{maincond} holds for an energy $E_0>0$,  we have the
 Minami estimate  
\beq \label{Minami}
\E \set{ \pa{\tr P_\bom^{(\Lambda)}(I)}\pa{\tr P_\bom^{(\Lambda)}(I)-1 } }\le K_M \pa{\rho_+ \abs{I}\abs{\Lambda}}^2,
\eeq
 for all  intervals $I \subset [0,E_0]$ and  $\Lambda=\Lambda_L$ with $L \ge L(E_0)$, with a  constant 
 \beq\label{KMest1}
K_M \le  C_{d,V_\mathrm{per},M_\rho} \pa{1 +  E_0}^{4[[\frac d 4]]}.
\eeq
  In more detail:
\begin{itemize}
\item[(i)] If $H_\bom$ is an Anderson Hamiltonian  with $\delta_- \ge 2$, 
there exists a constant $C_{d,V_{\mathrm{per}}}$ such that, given an energy $E_0>0$, the Wegner estimate  \eq{Wegner} holds for all  
intervals $I\subset [0,E_0]$ with a constant
\beq \label{KW}
K_W \le C_{d,V_{\mathrm{per}}} {u_-^{-2^d}} \rho_+^{2^d-1}  \gamma_d(E_0) \min \set{1, E_0^{2^d-d -1}}  \max \set{1 ,  E_0^{2^{d+2}} },
\eeq
where we have $\gamma_d(E_0)=1$ if $d \ge 2$, and  $\gamma_1(E_0)=\gamma_{1,V_{\mathrm{per}}}(E_0)\in ]0,1]$ with  $\lim_{E_0 \to 0} \gamma_1(E_0)=0$.
\item[(ii)]    If $H_\bom$ is an Anderson Hamiltonian  with a uniform-like distribution $\mu$,
and  for a given $E_0>0$ the constant $K_W$ in \eq{Wegner} satisfies  
\beq \label{KWrhocond}
2 K_W {U_+} \tfrac{\rho_+}{\rho_-} \le 1,
\eeq 
then \eq{Minami} holds  for all  
intervals $I\subset [0,E_0]$ with a constant $K_M= C_{d,V_\mathrm{per},u, M_\rho,E_0} K_W$.
If in addition $\delta_- \ge 2$, we have \eq{KMest1}.
\end{itemize}
\end{theorem}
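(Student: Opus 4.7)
My plan is to handle the two parts of Theorem~\ref{thmminami} in the stated order, since the Minami estimate of part~(ii) will use the quantitative Wegner bound of part~(i). As the introduction advertises, the proof of part~(ii) replaces the rank-one perturbation manipulations of \cite{Mi,BHS,GV}, special to the lattice, by an averaging argument for the spectral shift function (SSF) that is made quantitative precisely by the sharpened Wegner constant from part~(i).

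For part~(i) I would retrace the continuum Wegner estimate of \cite{CH,CHK2} while carefully tracking constants. Using the lower bound $u\ge u_-\chi_{\Lambda_{\delta_-}(0)}$ together with the hypothesis $\delta_-\ge 2$, one partitions $\widetilde\Lambda$ into $2^d$ sublattices on which the translates $\Lambda_{\delta_-}(j)$ are pairwise disjoint, and applies one-parameter spectral averaging $2^d$ times. Each averaging contributes a factor $\rho_+\abs{I}$ for all but one of the coordinates, together with an overall factor $u_-^{-2^d}$, yielding the exponents $\rho_+^{2^d-1}$ and $u_-^{-2^d}$ in \eq{KW}. The polynomial factor $\min\{1,E_0^{2^d-d-1}\}\max\{1,E_0^{2^{d+2}}\}$ then comes from estimating $\tr\pa{\chi^{(1)}_j\chi_{[0,E_0]}(H_0^{(\Lambda)})}$ by heat-kernel bounds, and in dimension one the low-energy behavior forces an additional Lifshitz-tail argument producing $\gamma_1(E_0)\in]0,1]$ with $\gamma_1(E_0)\to 0$ as $E_0\to 0$, while $\gamma_d(E_0)=1$ suffices for $d\ge 2$.

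For part~(ii), set $N=\tr P_\bom^{(\Lambda)}(I)$ and $A_j=\chi^{(1)}_j P_\bom^{(\Lambda)}(I)\chi^{(1)}_j$, so that $N=\sum_{j\in\widetilde\Lambda}\tr A_j$. The starting algebraic identity is
\beq\label{proposal-sq}
N\pa{N-1}=\sum_{\substack{j,k\in\widetilde\Lambda\\ j\ne k}}\tr A_j\,\tr A_k\;+\;\sum_{j\in\widetilde\Lambda}\pa{\pa{\tr A_j}^2-\tr A_j}.
\eeq
The diagonal sum is controlled by the heat-kernel bound $\tr A_j\le C_d(1+E_0)^{d/2}$ and the Wegner estimate of part~(i); this is where the smallness assumption \eq{KWrhocond} is first invoked, so as to absorb the diagonal contribution into the target $(\rho_+\abs{I}\abs{\Lambda})^2$. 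For the off-diagonal part, fix $j\ne k$, freeze all $\omega_l$ with $l\notin\{j,k\}$, and write $H_\bom^{(\Lambda)}=H_{\bom_j^0}^{(\Lambda)}+\omega_j u_j$ as a positive perturbation, where $\bom_j^0$ is $\bom$ with $\omega_j$ replaced by $0$. Denoting by $\xi_j(\cdot;\omega_j)\ge 0$ the SSF for that pair, the key step is to use the SSF identity to rewrite $\tr\chi_I(H_\bom^{(\Lambda)})$ in terms of $\tr\chi_I(H_{\bom_j^0}^{(\Lambda)})$ and endpoint values of $\xi_j$, and then integrate against the uniform-like bound $\rho\le\rho_+\chi_{[0,M_\rho]}$, producing an estimate of the form $\E_{\omega_j}\tr A_j\le C\rho_+\abs{I}\cdot(\text{density-of-states on }H_{\bom_j^0}^{(\Lambda)})$. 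Iterating the same step in $\omega_k$, taking expectation over the remaining $\omega_l$, and finally invoking the Wegner estimate of part~(i) yields $\E\sum_{j\ne k}\tr A_j\tr A_k\le C K_W(\rho_+\abs{I}\abs{\Lambda})^2$; combined with the diagonal contribution this gives \eq{Minami} with $K_M=C_{d,V_\mathrm{per},u,M_\rho,E_0}K_W$ and the explicit $E_0$-dependence of \eq{KMest1}.

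The main obstacle is converting this abstract SSF bound into a usable pointwise-in-$E$ inequality. On the lattice $u_j$ is rank one, so $\xi_j$ is a step function bounded by $1$ and the averaging is trivial; on the continuum $u_j$ has infinite rank and the only free bound is the integral identity $\int\xi_j\,\di E=\omega_j\int u_j\,\di x$, which is far too weak on short intervals. The remedy is to trade the SSF against a resolvent power $(H_0+1)^{-[[\frac d 4]]}$ chosen so that $u_j(H_0+1)^{-[[\frac d 4]]}$ is trace class, and to close the estimate using the refined density-of-states bound from part~(i); this is exactly why \eq{Minami} is coupled to the explicit Wegner constant via the smallness condition \eq{KWrhocond}, and why the constant $K_M$ in \eq{KMest1} inherits the polynomial growth $(1+E_0)^{4[[\frac d 4]]}$.
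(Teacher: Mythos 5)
Your part~(ii) decomposition \eq{proposal-sq} has a genuine gap in the diagonal term, and in fact it resurrects exactly the obstruction that makes the lattice proof fail in the continuum. On the lattice $A_j=\langle\delta_j,P^{(\Lambda)}_\bom(I)\delta_j\rangle\le 1$, so $(\tr A_j)^2-\tr A_j\le 0$ and the diagonal sum can simply be discarded. In the continuum $A_j=\chi^{(1)}_j P^{(\Lambda)}_\bom(I)\chi^{(1)}_j$ is not a rank-one object and $\tr A_j$ is not bounded by $1$, so $\tr A_j(\tr A_j-1)$ can be positive. The deterministic bound $\tr A_j\le C_{d,E_0}$ that you invoke gives only
\[
\E\sum_{j\in\widetilde\Lambda}\bigl((\tr A_j)^2-\tr A_j\bigr)\le (C_{d,E_0}-1)\,\E\,\tr P^{(\Lambda)}_\bom(I)\le (C_{d,E_0}-1)K_W\rho_+\abs{I}\abs{\Lambda},
\]
which is linear in $\rho_+\abs{I}\abs{\Lambda}$, not quadratic, and so cannot be ``absorbed'' into $(\rho_+\abs{I}\abs{\Lambda})^2$ for small $\abs{I}$, no matter how small $K_W$ is made. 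The smallness condition \eq{KWrhocond} is of no help here: it shrinks a prefactor but cannot change the wrong power of $\abs{I}$.

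The paper avoids the diagonal/off-diagonal split entirely. It treats the two factors of $\tr P^{(\Lambda)}_\bom(I)$ asymmetrically: the first factor is expanded by the trace-class identity \eq{traceclassexp} into $Q_1\sum_j\tr\{\sqrt{u_j}P^{(\Lambda)}_\bom(I)\sqrt{u_j}S^{(\Lambda)}_j\}$ to set up a single spectral averaging in $\omega_j$; while the second factor $\tr P^{(\Lambda)}_\bom(I)-1$ is, for each fixed $j$, bounded from above by a quantity $\Phi^{(\Lambda)}_{b,\tau}(\bom_j^\perp)$ that is \emph{independent of} $\omega_j$, via Lemma~\ref{lemkey} with $\omega_j$ replaced successively by $0$ and $\tau=M_\rho$. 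The bracketed spectral-shift term $\xi^{(\Lambda)}_{b,\tau}(\bom_j^\perp)$ is then bounded through the Birman--Solomyak formula and the refined Wegner estimate from part~(i), and the hypothesis \eq{KWrhocond} enters precisely to ensure $\E_{\bom_j^\perp}\{\xi^{(\Lambda)}_{b,\tau}(\bom_j^\perp)\}\le 1$ so that $\E_{\bom_j^\perp}\{\Phi\}$ is controlled by the one-variable-fixed Wegner estimate of Lemma~\ref{lemWegner0}. This produces the second factor $\rho_+\abs{I}\abs{\Lambda}$ and yields \eq{Minami} with $K_M=Q_1Q_2\widetilde K_W$. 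Your general SSF intuition is in the right neighborhood, but the specific quadratic-form decomposition you chose cannot deliver the quadratic bound.

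A smaller point on part~(i): the exponent $\rho_+^{2^d-1}$ in \eq{KW} does not arise from iterating spectral averaging $2^d$ times; spectral averaging is performed once (yielding one factor $\rho_+\abs{I}$), and $\rho_+^{2^d-1}$ comes from bounding $\E\{(1+tu_-\hat\omega_0)^{-p/2}\}$, where $\hat\omega_0$ is a sum of $2^d-1$ independent minima of the odd-sublattice couplings (cf.\ the definition \eq{hatomega}). Also the $d=1$ correction $\gamma_1(E_0)$ is obtained by a direct bound on $\tr\{\chi^{(2)}_0(1+tH_0^{(\Lambda)})^{-1}\chi^{(2)}_0\}$, not by a Lifshitz-tail argument.
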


 Our approach to Minami's estimate is discussed in Section~\ref{secnewap}, where it is illustrated by a proof of the estimate for the (discrete) Anderson model (Theorem~\ref{thmM}). We also comment on the differences between the discrete and the continuum cases.

 On the lattice (the Anderson model), the Wegner estimate  \eq{Wegner} is a simple consequence of spectral averaging (cf.\  \eq{wegnerd}), and  holds  with $K_W=1$ for all $E_0$  \cite{W,FS,CKM,Ki}.  On the continuum the Wegner estimate, which  has not been as simple to prove,  comes with an $E_0$ dependent constant $K_W$ (which also  depends on $d$, $V_{\mathrm{per}}$, and $u$)  \cite{CH,CHK2}.
 The proof given in \cite{CH} requires the covering condition $\delta_- \ge 1$. It allows estimates of the constant, but the estimates do not go to $0$ as either  $E_0 $ or $\rho_+$ go to $0$.  The proof in \cite{CHK2}  does not require a covering condition, but it uses   \cite[Proposition~1.3]{CHK1} (cf.\  \cite[Theorem~2.1]{CHK2}), which relies on the unique continuation principle to show that some constant is strictly positive, giving no control on the constant in \eq{Wegner}.  To prove that \eq{KWrhocond} holds, so we have \eq{Minami}, we need suitable control  of the constant $K_W$, as in \eq{KW}. To obtain this control we introduce a double averaging procedure which uses the covering condition $\delta_- \ge 2$.

Note that the estimate \eq{KW} provides a bound on the differentiated density of states $n(E):=N^\pr(E)$  in the interval  $[0,E_0]$, whenever it exists, since  it then  follows from \eq{Wegner} and \eq{KW} that 
\beq
n(E) \le C_{d,V_{\mathrm{per}}} {u_-^{-2^d}} \rho_+^{2^d}  \gamma_d(E) \min \set{1, E^{2^d-d -1}}  \max \set{1 ,  E^{2^{d+2}}}.
\eeq

Once we have the Minami estimate  \eq{Minami}, we may prove Poisson statistics and simplicity of eigenvalues.
The next theorem is proven for arbitrary Anderson Hamiltonians.

\begin{theorem}\label{thmPoisson} Let $H_\bom$ be an Anderson Hamiltonian.    Suppose there exists an open interval
$\mathcal{I} $ such  that for all large boxes $\Lambda$ the estimate \eq{Minami} holds for any
interval $I \subset \mathcal{I}$ with $\abs{I}\le \delta_0$, for some $\delta_0 >0$, with some constant $K_M$.
Then
\begin{itemize}
\item[(a)]  For all energies $\cE \in\mathcal{I}\cap  \Xi^{\text{CL}}$ such that the IDS $N(E)$ is differentiable at
$\cE$ with $n(\cE):=N^\pr(\cE) >0$,  the point process $\xi_{\cE,\bom}^{(\Lambda_L)}$ converges weakly, as $L \to \infty$, to the Poisson point process $\xi_\cE$ on $\R$ with intensity measure  $\nu_\cE(B):=\E\,  \xi_\cE(B)= n(\cE)\abs{B}$, i.e., $\di \nu_\cE= n(\cE) \di E$.
\item[(b)] With probability one, every eigenvalue of   $H_\bom$  in $\mathcal{I}\cap  \Xi^{\text{CL}}$ is simple.
\end{itemize}
\end{theorem}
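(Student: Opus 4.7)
The plan is to establish part (a) by approximating $\xi_{\cE,\bom}^{(\Lambda_L)}$ by a superposition of independent point processes over a tiling of $\Lambda_L$ into subcubes, and then to invoke a Grigelionis--Kallenberg-type criterion identifying the limit as the Poisson process $\xi_\cE$. Part (b) will follow from the Minami hypothesis by a Borel--Cantelli argument along the lines of \cite{KM}.

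For (a), fix an intermediate length scale $\ell = \ell(L)$ with $\ell \to \infty$ and $\ell/L \to 0$, and tile $\Lambda_L$ by disjoint subcubes $\Lambda_\ell(x_m)$, $m=1,\dots,M_L=(L/\ell)^d$. Because the random potentials restricted to these subcubes depend on disjoint families of $\omega_j$'s, the point processes
\[
\eta^{(m)}_L(B) := \tr P_\bom^{(\Lambda_\ell(x_m))}\!\bigl(\cE + |\Lambda_L|^{-1} B\bigr), \qquad m=1,\dots,M_L,
\]
are independent. The first task is to show that $\eta_L := \sum_m \eta^{(m)}_L$ has the same weak limit as $\xi_{\cE,\bom}^{(\Lambda_L)}$. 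This is where complete localization at $\cE$ enters, exactly as in \cite{Mo,Mi}: eigenfunctions of $H_\bom^{(\Lambda_L)}$ with energy near $\cE$ are, with high probability, essentially supported in a single subcube, and a geometric resolvent identity matches the eigenvalues of $H_\bom^{(\Lambda_L)}$ in the window $\cE + |\Lambda_L|^{-1} B$ with those of $\bigoplus_m H_\bom^{(\Lambda_\ell(x_m))}$ up to exponentially small corrections.

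Given the reduction, convergence of $\eta_L$ to $\xi_\cE$ follows from three standard sparse-superposition conditions. Uniform smallness, $\max_m \P(\eta^{(m)}_L(B)\ge 1)\to 0$, is immediate from the Wegner estimate \eq{Wegner}. The anti-clustering bound is precisely where the Minami hypothesis \eq{Minami} is decisive:
\[
\sum_{m=1}^{M_L} \P\bigl(\eta^{(m)}_L(B)\ge 2\bigr)\le \sum_m \E\bigl[\eta^{(m)}_L(B)(\eta^{(m)}_L(B)-1)\bigr]\le M_L\, K_M \rho_+^2|B|^2 |\Lambda_\ell|^2 |\Lambda_L|^{-2}=K_M\rho_+^2|B|^2 (\ell/L)^d\to 0.
\]
The intensity identification is the step at which we must work in real, rather than complex, variables: writing $\sum_m \E\eta^{(m)}_L([a,b])$ in terms of the finite-volume IDS $N_\ell$ evaluated at $\cE + a|\Lambda_L|^{-1}$ and $\cE + b|\Lambda_L|^{-1}$, and combining the uniform convergence $N_\ell \to N$ (with a quantitative rate derived from the Wegner estimate) with the differentiability hypothesis $N'(\cE)=n(\cE)$, one obtains $\sum_m \E\eta^{(m)}_L([a,b]) \to n(\cE)(b-a)$. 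I expect the main technical obstacle to be precisely this scale matching: $\ell(L)$ must be chosen to dominate the speed of $N_\ell\to N$ while keeping $\ell/L \to 0$ fast enough that the residual from the localization-based decoupling is negligible, all without the Herglotz-function machinery Minami exploited on the lattice.

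For (b), we follow \cite{KM}. Given any dyadic scale $L_n=2^n$ and any fixed compact $K\subset\R^d$, localization at energies in $\mathcal{I}\cap\Xi^{\text{CL}}$ implies that on a full-probability event any eigenvalue in $\mathcal{I}\cap\Xi^{\text{CL}}$ of multiplicity $\ge 2$ forces some $H_\bom^{(\Lambda_{L_n}(j))}$, for $j$ in a polynomial-in-$L_n$ grid and all $n$ sufficiently large, to have two eigenvalues in an interval of width $\le e^{-c L_n}$. Covering $\mathcal{I}$ by a dyadic mesh of such intervals and summing the Minami bound \eq{Minami} over the grid and over $n$ produces a convergent series, so Borel--Cantelli excludes multiplicities $\ge 2$ with probability one.
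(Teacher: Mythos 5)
Your overall architecture mirrors the paper's: tile $\Lambda_L$ into $\ell$-subcubes with $\ell\approx L^a$, pass to the superposition of independent subcube processes, and verify the three sparse-superposition conditions (uniform smallness via Wegner, anti-clustering via Minami, intensity identification). The anti-clustering computation is correct, and the handling of (b) via Klein--Molchanov/Borel--Cantelli is exactly the route the paper takes in Section~\ref{secsimple}.

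The gap is in your intensity identification. You propose to write $\sum_m \E\,\eta^{(m)}_L([a,b])$ in terms of the finite-volume IDS $N_\ell$ at the points $\cE + a|\Lambda_L|^{-1}$, $\cE + b|\Lambda_L|^{-1}$, and then invoke ``uniform convergence $N_\ell\to N$ with a quantitative rate derived from the Wegner estimate''. The Wegner estimate \eqref{Wegner} gives Lipschitz continuity of $N_\ell$ and $N$, but it gives no control on the \emph{difference} $N_\ell - N$; in particular it provides no rate of convergence at all. To make your scheme close, you would need $\sup_{E\text{ near }\cE}|N_\ell(E)-N(E)| = o(|\Lambda_L|^{-1}) = o(L^{-d})$ with $\ell = L^a$, $a<1$; generic surface/volume effects only give $O(\ell^{-1})=O(L^{-a})$, which is hopelessly too slow for $d\ge 2$. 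What would save you is an exponential-in-$\ell$ rate coming from localization, but that is not what you cite, and obtaining a uniform-in-$E$ quantitative version of it is itself a nontrivial lemma.

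The paper sidesteps $N_\ell$ altogether. It introduces the auxiliary random measure
\begin{equation*}
\theta^{(\Lambda)}_{\cE,\bom}(B) := \tr\set{\chi_\Lambda\, P_\bom\pa{\cE + \abs{\Lambda}^{-1}B}\,\chi_\Lambda},
\end{equation*}
built from the \emph{infinite-volume} operator with a spatial cutoff. Its intensity measure is, by translation invariance and \eqref{dsm}, \emph{exactly} $\abs{\Lambda}\,\eta\pa{\cE + \abs{\Lambda}^{-1}\cdot}$ where $\eta$ is the infinite-volume density of states measure, and then the Lebesgue Differentiation Theorem (applied at the Lebesgue point $\cE$ of $n(E)$, which is precisely the differentiability hypothesis) gives $\E\,\theta^{(\Lambda)}(I)\to n(\cE)\abs{I}$ with no rate comparison needed. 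The three comparisons $\xi^{(\Lambda)}\approx\widetilde\xi^{(\Lambda)}$ and $\xi^{(\Lambda)}\approx\theta^{(\Lambda)}$ are established in Lemma~\ref{lemEconv} via Helffer--Sj\"{o}strand, the geometric resolvent identity, and the fractional-moment localization bound of Proposition~\ref{propMSA}; the boundary strips are absorbed by Wegner. Your qualitative description of the decoupling (``eigenfunctions essentially supported in a single subcube'') is the right intuition, but note that the mechanism is different: no eigenfunction localization statement is invoked; one compares traces of smooth compactly supported functions through an almost-analytic extension. In short: correct skeleton, a misattributed and insufficient tool for the one place where continuum-specific work is unavoidable, and the paper's $\theta^{(\Lambda)}$-device is the intended replacement for Minami's Herglotz-function argument.
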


Theorem~\ref{thmPoisson}(a) is proven by approximating the point process $\xi_{\cE,\bom}^{(\Lambda_L)}$  by  superpositions of independent point processes, as in \cite{Mo,Mi}, which are then shown to converge weakly to the desired Poisson point process. But here our proof diverges from Minami's, who used the connection, valid for the Anderson model,  between the Borel transform of the density of states measure $\eta$ and    averages of the matrix elements of the imaginary part of the resolvent, to identify the intensity measure of the limit  point process.  Instead, we introduce the random measures
\beq\label{deftheta}
\theta_{\cE,\bom}^{(\Lambda)}(B):= \tr \set{\chi_{\Lambda}P_{\bom}(\cE + \abs{\Lambda}^{-1} B) \chi_{\Lambda}} \quad \text{for a Borel set $B \subset \R$},
\eeq
justified by  \eq{Wegner}-\eq{dsm},
which we show to have the same weak limit as the point processes $\xi_{\cE,\bom}^{(\Lambda)}$, and  use them  to show that, thanks to the Lebesgue Differentiation Theorem, the intensity measure  $\nu_\cE$ of the limit point process $\xi_\cE$ satisfies $\di \nu_\cE= n(\cE) \di E$.

Theorem~\ref{thmmain}  follows immediately by combining Theorem~\ref{thmminami} and Theorem~\ref{thmPoisson}.
Theorem~\ref{thmminami} is proven in Sections~\ref{secWegner} and \ref{secMinami}.  In Section~\ref{secWegner} we prove Wegner estimates with control of the constant in Lemma~\ref{thmWsmall}, and a Wegner estimate with one random variable $\omega_j$ fixed in Lemma~\ref{lemWegner0}.  Theorem~\ref{thmminami}(i) follows from  Lemma~\ref{thmWsmall}(i).  Section~\ref{secMinami} contains the proof of Minami's estimate:  Theorem~\ref{thmminami}(ii) is proven in Lemma~\ref{lemMinami}(i), completing the proof of Theorem~\ref{thmminami}.
Theorem~\ref{thmPoisson} is proven in Sections~\ref{sectPoisson} and \ref{secsimple}. 
In Section~\ref{sectPoisson} we prove Theorem~\ref{thmPoisson}(a), namely the convergence of the rescaled eigenvalues  to a Poisson point process. 
Finally, in Section~\ref{secsimple}  we discuss how     Theorem~\ref{thmPoisson}(b) follows from the Minami estimate \eq{Minami} and \cite{KM}.

Some comments about our notation:
  Finite volumes  will always be understood to be
boxes  $\Lambda=\Lambda_L(j_0)$ with $j_0 \in \Z^d$ and $L \in 2 \N$, $L > \delta_+$. We will always identify such $ \Lambda$ with the  torus $j_0 + \R^d\slash L\Z^d$ . 
 If $j \in \widetilde{\Lambda}$, we will consider sub-boxes  $\Lambda\up{\Lambda}_s(j)$  of $\Lambda$, where   $0< s \le L$, defined by
$
\Lambda\up{\Lambda}_s(j):= \set{\bigcup_{k \in L\Z^d} \Lambda_s(j+k)} \cap \Lambda,
$
i.e., $\chi_{\Lambda\up{\Lambda}_s(j)}:= \chi_{\Lambda}  \sum_{k \in L\Z^d} \chi_{ \Lambda_s(j+k)} $.
 Similarly, we define functions $u\up{\Lambda}_j$ on the torus  $\Lambda$ by  $u\up{\Lambda}_j:=\chi_\Lambda \sum_{k \in L\Z^d} u_{j+k}$, i.e.,  the function $u_j$ will be assumed to have been wrapped around the torus $\Lambda$.  Note that we then have
  $V\up{\Lambda}_\bom=\sum_{j \in \widetilde{\Lambda}} \omega_j u\up{\Lambda}_j$.
  We will  abuse the notation    and just write $\Lambda_s(j)$ for  $ \Lambda\up{\Lambda}_s(j)$,  $u_j$ for $u\up{\Lambda}_j$, and $V\up{\Lambda}_\bom=\sum_{j \in \widetilde{\Lambda}} \omega_j u_j$.  In addition,
given $j \in \Upsilon\cap \Z^d$, where $ \Upsilon=\Lambda_L(0)$  or $\R^d$, we
write $\bom =(\bom_j^\perp,\omega_j)$,  and $H\up{\Upsilon}_{(\omega_j^\perp,\omega_j=s)}=H\up{\Upsilon}_{(\omega_j^\perp,s)}$,  $P\up{\Upsilon}_{(\omega_j^\perp,\omega_j=s)}(I)=P\up{\Upsilon}_{(\omega_j^\perp,s)}(I)$
when we want to make explicit that  $\omega_j=s$.


\section{A new approach to Minami's estimate illustrated by a proof for  the (discrete) Anderson Model}\label{secnewap}

The starting point  (and key idea) in our approach is contained in the following simple  lemma.

\begin{lemma}\label{lemkey} Consider the self-adjoint operator  $
H_{s} = H_0 + s W $ on  the Hilbert space $ \H$,
where $H_0$ and $W$ are  self-adjoint operators on  $\H$, with $W\ge 0$  bounded,
and $s\ge 0$.  Let $P_s(J)= \chi_J(H_s)$ for an interval $J$, and suppose
$\tr P_s(]-\infty,c])<\infty$ for all $c \in \R$ and $s\ge 0$.  Then, for all $a,b \in \R$ with $a <b$
we have
\beq
\tr P_s(]a,b])\le \set{ \tr P_{0} (]-\infty,b]) - \tr P_{t}(]-\infty,b])}
 + \tr P_{t}(]a,b]) \quad \text{for}\quad 0\le s \le t. \label{decomp}
\eeq
\end{lemma}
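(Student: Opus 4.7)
The plan is to reduce \eq{decomp} to a single monotonicity observation for the eigenvalue counting function
$$N_s(E) := \tr P_s(]-\infty, E]).$$
Since $W \geq 0$, the quadratic form inequality $H_s \leq H_t$ holds whenever $0 \leq s \leq t$. Under the trace finiteness hypothesis, each $H_s$ has only finitely many eigenvalues (with multiplicity) below every energy, so the spectrum of $H_s$ in any half-line $]-\infty, c]$ is a finite discrete set. Applying the min-max principle to the form inequality then gives $\lambda_n(s) \leq \lambda_n(t)$ for every $n$, and counting eigenvalues $\leq E$ yields the key monotonicity
$$N_s(E) \geq N_t(E) \quad \text{for all } E \in \R, \; 0 \leq s \leq t.$$

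With this in hand the bookkeeping is immediate. Writing $\tr P_s(]a, b]) = N_s(b) - N_s(a)$ and observing that the right-hand side of \eq{decomp} telescopes to $N_0(b) - N_t(a)$, the claimed inequality is equivalent to
$$\bigl[N_s(b) - N_0(b)\bigr] + \bigl[N_t(a) - N_s(a)\bigr] \leq 0.$$
Both bracketed terms are nonpositive by the monotonicity above: the first since $0 \leq s$, the second since $s \leq t$. This finishes the argument.

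There is no real obstacle here; the only point requiring a moment of care is verifying the min-max comparison, since $H_0$ is not assumed bounded below a priori. However, the hypothesis $\tr P_s(]-\infty, c]) < \infty$ for all $c \in \R$ and $s \geq 0$ forces the spectrum of each $H_s$ below every threshold to be purely discrete with finite total multiplicity, so the usual variational characterization of eigenvalues applies verbatim and delivers the monotonicity of $N_s(E)$ in $s$ without any further lower semi-boundedness assumption.
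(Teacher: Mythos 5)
Your proof is correct and follows essentially the same route as the paper: both reduce the claim to monotonicity of the counting function $N_s(E)=\tr P_s(]-\infty,E])$ in $s$ (a consequence of $W\geq 0$ and the min-max principle) and then perform the same telescoping rearrangement of the four terms. The paper's proof is terser, simply invoking ``since $W\geq 0$'' for the monotone comparison, but your explicit appeal to min-max and your remark on why lower semiboundedness is automatic supply exactly the detail the paper leaves implicit.
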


\begin{proof} Let $a,b \in \R$ with $a <b$ and 
 $0\le s \le t$.   Then, since $W\ge 0$, 
\begin{align}\notag
\tr P_{s}(]a,b]) & = \tr P_{s}(]-\infty,b]) - \tr P_{s}(]-\infty,a]) \\
 &\le  \tr P_{0} (]-\infty,b]) - \tr P_{t}(]-\infty,a])  \label{xi}
\\
& = \tr P_{0} (]-\infty,b]) - \tr P_{t}(]-\infty,b])
 + \tr P_{t}(]a,b]). \notag  \qedhere
\end{align}
\end{proof}

 We will also use the basic spectral averaging estimate:
Let $H_0$ and $W$
 be   self-adjoint operators on a Hilbert space $\H$, with $W\ge 0$ bounded.   Consider the random operator $H_\xi:= H_0  + \xi W$, where  $\xi$ is a random
variable with a non-degenerate probability distribution $\mu$ with compact support.  The basic spectral averaging estimate for such  perturbations of  self-adjoint operators says that, given $\vphi\in \H$ with
$\norm{\vphi}=1$, then for  all bounded
intervals $I\subset \R$ we have (\cite[Corollary~4.2]{CH}, \cite[Eq.~(3.16)]{CHK2})
\begin{equation}\label{sa}
\E_{\xi}\set{ \langle \vphi,\sqrt{W} \chi_I(H_\xi)\sqrt{W} \vphi\rangle}:= \int \di \mu(\xi )\,
 \langle \vphi,\sqrt{W} \chi_I(H_\xi) \sqrt{W}\vphi\rangle \le Q_\mu(\abs{I}),
\end{equation}
where
\beq
Q_\mu(s):=
\left\{ 
\begin{array}{ll}
\rho_\infty s &\mbox{if $\mu$ has a bounded density $\rho$ as in \eq{mu}}
\\
 8 \sup_{a \in \R}
\mu([a,a+s]) &\mbox{otherwise}
\end{array}.
\right.\notag
\eeq
As a consequence, given a trace class operator $S\ge 0$ on $\H$, we have
\begin{equation}\label{sa1}
\E_{\xi}\set{ \tr \set{\sqrt{W}\chi_I(H_\xi) \sqrt{W}S}} \le \pa{\tr S} Q_\mu(\abs{I}).
\end{equation}
Note that  the measure $\mu$ has no atoms if and only if  $\lim_{s
\downarrow 0} Q_\mu(s)=0$.

Lemma~\ref{lemkey} will allow the decoupling of random variables for the  performance of two spectral averagings.

We will first illustrate our approach to Minami's estimate  by giving a simple and transparent proof of the estimate for  in the discrete case, i.e., for the Anderson model.   We will then  comment on how to proceed in the continuum case, i.e., for the Anderson Hamiltonian.

\subsection{Minami's estimate for the (discrete) Anderson model}  An Anderson model will be a discrete random Schr\"odinger operator of the form
\begin{equation}
H_{\bom} = H_0 + V_{\bom} \quad \text{on}\quad \ell^2(\Z^d),\label{defHdisc}
\end{equation}
where $H_0$ is a bounded self-adjoint operator and $V_{\bom}$ is the random potential
given by $V_{\bom}(j)= \omega_j$ for $j \in \Z^d$, where $\bom=\{ \omega_j \}_{ j\in
\Z^d}$ is a family of independent, identically distributed
random
variables  with common
probability distribution $\mu $.     (The usual Anderson model has $H_0=-\Delta$, where $\Delta$  is the discrete Laplacian.)  
 We assume $\mu$ has  compact support and no atoms. Adjusting $H_0$ and $\mu$, we may assume
\beq \label{mudisc}
\set{0  ,M}\in \supp \mu \subset   [0,M]\quad \text{with}  \quad M\in ]0,\infty[ . 
\eeq

Restrictions of $H_\bom$ to finite  volumes $\Lambda\subset\Z^d$  are denoted by
$H_{\bom}\up{\Lambda}$, a self-adjoint operator of the form
\begin{equation}
H_{\bom}\up{\Lambda} = H_{0}\up{\Lambda} + V_{\bom}\up{\Lambda}\quad
\text{on}\quad \ell^2(\Lambda),\label{defHfinitedisc}
\end{equation}
where $H_{0}\up{\Lambda}$ is  a self-adjoint restriction of $H_0$ to the finite-dimensional
Hilbert space  $ \ell^2(\Lambda)$, and $V_{\bom}\up{\Lambda}$ is the restriction of $V_\bom$ to $\Lambda$.
(In the discrete case our results are not sensitive to the choice of $H_{0,\Lambda}$, they hold for any
boundary condition.)   Given a Borel set $J \subset \R$, we write
$P^{(\Lambda)}_{\bom}(J)=P^{(\Lambda)}_{H_\bom}(J)=\chi_J(H_{\bom}\up{\Lambda})$ for  the
associated spectral projection.  

What makes the discrete case much easier than the continuum is that in the discrete case finite volume operators are finite-dimensional and  each random variable couples a rank one perturbation.  
Given a unit vector $\vphi$ in a Hilbert space $\H$, we let
$\Pi_\vphi $ denote the  orthogonal projection onto $\C \vphi$,  the one-dimensional
subspace spanned by $\vphi$.  With this notation, the potentials in 
in \eq{defHdisc} and \eq{defHfinitedisc} are given by sums of rank one perturbations:
\beq
 V_{\bom}= \sum_{j\in \Z^d}\omega_j  \Pi_{j} \quad \text{and}\quad V_{\bom}\up{\Lambda}= \sum_{j\in \Lambda}\omega_j  \Pi_{j}, \quad \text{with}\quad  \Pi_{j}=\Pi_{\delta_j}.
\eeq

For rank one perturbations Lemma~\ref{lemkey} has the following  consequence.

\begin{lemma}\label{lemkey2}  Let $H_s$ be as in Lemma~\ref{lemkey} with
$W=\Pi_\vphi $ for some unit vector $\vphi \in \H$.  Then, for all $a,b \in \R$ with $a <b$  we have
\beq
\tr P_s(]a,b])\le 1 + \tr P_t(]a,b]) \quad \text{for all} \quad 0\le s \le t. \label{decompd2}
\eeq
\end{lemma}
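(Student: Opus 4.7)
The plan is to derive Lemma~\ref{lemkey2} as a direct corollary of Lemma~\ref{lemkey} by controlling the difference
\[
\tr P_{0}(]-\infty,b]) - \tr P_{t}(]-\infty,b])
\]
using the fact that $W=\Pi_\vphi$ is a rank one nonnegative perturbation. Once this quantity is bounded by $1$, the inequality \eqref{decompd2} follows immediately from \eqref{decomp}. So the entire content of the lemma reduces to a standard interlacing statement for rank one perturbations.

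First I would note that by Lemma~\ref{lemkey}'s hypothesis all the eigenvalue counting functions of $H_{s}$ below any threshold are finite, so the eigenvalues $E_{1}(H_{s})\le E_{2}(H_{s})\le\cdots$ (repeated according to multiplicity) are well-defined below the essential spectrum. The min-max principle, applied to $H_{0}\le H_{t}=H_{0}+t\Pi_\vphi$, yields $E_{j}(H_{0})\le E_{j}(H_{t})$ for every $j$; and since $t\Pi_\vphi$ has rank one, the min-max principle restricted to subspaces orthogonal to $\vphi$ gives the upper interlacing bound $E_{j}(H_{t})\le E_{j+1}(H_{0})$. Thus the eigenvalues of $H_{0}$ and $H_{t}$ interlace:
\[
E_{j}(H_{0})\le E_{j}(H_{t})\le E_{j+1}(H_{0}),\qquad j=1,2,\ldots.
\]

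From this interlacing I read off the two-sided bound on counting functions: if $N_{s}(b):=\tr P_{s}(]-\infty,b])$, then
\[
0\le N_{0}(b)-N_{t}(b)\le 1\qquad\text{for every }b\in\R.
\]
Indeed, if $N_{0}(b)=k$ then $E_{k}(H_{0})\le b<E_{k+1}(H_{0})$, which combined with $E_{k+1}(H_{t})\ge E_{k+1}(H_{0})>b$ gives $N_{t}(b)\le k$, and combined with $E_{k-1}(H_{t})\le E_{k}(H_{0})\le b$ gives $N_{t}(b)\ge k-1$. Substituting into \eqref{decomp} with the choice $W=\Pi_\vphi$ yields
\[
\tr P_{s}(]a,b])\le \{N_{0}(b)-N_{t}(b)\}+\tr P_{t}(]a,b])\le 1+\tr P_{t}(]a,b]),
\]
which is exactly \eqref{decompd2}.

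I do not expect any real obstacle: the only subtlety is to make sure the eigenvalues used in the interlacing argument are well-defined, but this is guaranteed by the hypothesis $\tr P_{s}(]-\infty,c])<\infty$ for all $c$ inherited from Lemma~\ref{lemkey}. In the application to the Anderson model all finite-volume Hamiltonians act on finite-dimensional spaces, so there is nothing to check. The rank one interlacing is the only ingredient beyond Lemma~\ref{lemkey} that is needed, and it is precisely the property of rank one perturbations that the paper's introduction emphasizes as unavailable in the continuum setting.
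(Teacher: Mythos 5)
Your proof is correct and follows essentially the same route as the paper: control the bracketed term in \eqref{decomp} by the rank-one bound $0\le \tr P_{0}(]-\infty,b])-\tr P_{t}(]-\infty,b])\le 1$, obtained from the min-max principle applied to rank one perturbations. The only difference is that you spell out the interlacing argument while the paper simply cites it (e.g.\ \cite[Lemma~5.22]{Ki}).
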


\begin{proof} Let  $0\le s \le t$. Recall that for any  $c \in \R$   we always have
\beq \label{rankone}
0\le \tr P_{s}(]-\infty,c])-\tr P_{t}(]-\infty,c])\le 1,
\eeq
the last inequality being a consequence of the min-max principle applied to rank one
perturbations,  e.g.  \cite[Lemma~5.22]{Ki}.  Thus \eq{decompd2} follows immediately from \eq{decomp}.
\end{proof}

For rank one perturbations  the fundamental spectral averaging estimate \eq{sa} may be stated as follows: 
Consider the random self-adjoint operator
\begin{equation}
H_{\xi} = H_0 + \xi \Pi_\vphi \quad \text{on}\quad \H,\label{defH1}
\end{equation}
where $H_0$ is a  self-adjoint operator on the Hilbert space $\H$, $\vphi\in \H$ with
$\norm{\vphi}=1$, and $\xi$ is a random
variable with a non-degenerate probability distribution $\mu$ with compact support.   Let $P_\xi(J)=\chi_J(H_\xi) $ for a Borel set $J
\subset \R$. Then  for  all bounded
intervals $I\subset \R$ we have \cite{W,FS,CKM,Ki,CH,CHK2}
\begin{equation}\label{sad}
\E_{\xi}\set{ \langle \vphi, P_\xi (I) \vphi\rangle}:= \int \d \mu(\xi)\,
\langle \vphi, P_\xi (I) \vphi\rangle \le Q_\mu\pa{\abs{I}}.
\end{equation}

The Wegner estimate  for an Anderson model \cite{W,FS,CKM,Ki}
 is an immediate consequence of \eq{sad}:
\begin{equation}\label{wegnerd}
\E  \set{\tr P^{(\Lambda)}_{H_\bom}(I)}= \sum_{j \in \Lambda} \E_{\bom^\perp_j}
\set{\E_{\omega_j} \set{ \langle \delta_j,  P^{(\Lambda)}_{H_\bom}(I) \delta_j\rangle}}
\le Q_{\mu}\pa{\abs{I}} |\Lambda|.
\end{equation}

We can now prove Minami's estimate for an Anderson model for arbitrary $\mu$ with compact support and no atoms, a result previously known only  for $\mu$ with a bounded density \cite{Mi,BHS,GV}.  

\begin{theorem}\label{thmM} Let $H_\omega$ be an Anderson model as in \eq{defHdisc}, with $\mu$ arbitrary except for  compact support and no atoms.  Let  $\Lambda\subset\Z^d$ be a finite volume.  For any bounded interval $I$ we have
\beq
\E \left\{ \pa{\tr  P^{(\Lambda)}_{\bom}(I)} \pa{\tr  P^{(\Lambda)}_{\bom}(I)-1} \right\}
  \le
\pa{ Q_\mu\pa{\abs{I}} \abs{\Lambda}}^2 .\label{minamid}
\eeq
\end{theorem}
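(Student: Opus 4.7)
The plan is to combine Lemma~\ref{lemkey2}, which decouples the variable $\omega_j$ via the rank-one trace bound $N-1 \le \tr P^{(\Lambda)}_{(\bom^\perp_j, M)}(I)$, with two applications of the spectral averaging estimate \eq{sad}. The overall strategy is to perform ``two spectral averagings'' at two different sites, with Lemma~\ref{lemkey2} providing the crucial decoupling that makes them independent.

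Writing $N := \tr P^{(\Lambda)}_\bom(I) = \sum_{j \in \Lambda} \langle \delta_j, P^{(\Lambda)}_\bom(I) \delta_j\rangle$, I would first decompose $N(N-1) = \sum_j \langle\delta_j, P^{(\Lambda)}_\bom(I)\delta_j\rangle (N-1)$. For each $j$, applying Lemma~\ref{lemkey2} with the rank-one perturbation $\omega_j \Pi_j \mapsto M \Pi_j$ (legitimate since $\omega_j \le M$ a.s.~by \eq{mudisc}) gives
\[
N(N-1) \le \sum_{j \in \Lambda} \langle \delta_j, P^{(\Lambda)}_\bom(I)\delta_j\rangle \,\tr P^{(\Lambda)}_{(\bom^\perp_j, M)}(I),
\]
where the crucial feature is that the second factor is independent of $\omega_j$.

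Next I would take expectation and perform the spectral averagings one at a time. Conditioning on $\bom^\perp_j$ and averaging over $\omega_j$ first, the rank-one spectral averaging \eq{sad} applied with $\vphi = \delta_j$ yields $\E_{\omega_j}[\langle \delta_j, P^{(\Lambda)}_\bom(I)\delta_j\rangle] \le Q_\mu(|I|)$; the $\omega_j$-independent trace factor pulls straight out. This reduces the task to controlling $\E_{\bom^\perp_j}[\tr P^{(\Lambda)}_{(\bom^\perp_j, M)}(I)]$ for each $j$. Expanding this trace as $\sum_k \langle \delta_k, P^{(\Lambda)}_{(\bom^\perp_j, M)}(I) \delta_k\rangle$, the off-diagonal $k \ne j$ terms are each bounded by $Q_\mu(|I|)$ via a second application of \eq{sad} at site $k$ (where $\omega_k$ remains random in $\bom^\perp_j$), contributing the anticipated $|\Lambda|(|\Lambda|-1)\,Q_\mu(|I|)^2 \le (|\Lambda|Q_\mu(|I|))^2$ to the total.

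The diagonal $k = j$ contribution is the delicate point, since $\omega_j = M$ is frozen in $P^{(\Lambda)}_{(\bom^\perp_j,M)}(I)$ and spectral averaging at $j$ is no longer directly available. The hard part is showing that these diagonal terms can be absorbed into the target bound $(Q_\mu(|I|)|\Lambda|)^2$ rather than producing an additive $|\Lambda|Q_\mu(|I|)$ defect from the naive estimate $\langle \delta_j, P^{(\Lambda)}_{(\bom^\perp_j, M)}(I) \delta_j\rangle \le 1$. I anticipate the resolution to use a second application of the rank-one trick---either Lemma~\ref{lemkey2} invoked again in a different direction, or the more general averaging identity \eq{sa1} with the judicious choice $W = \Pi_j$ and $S = P^{(\Lambda)}_{(\bom^\perp_j,M)}(I)$, which is legitimate because $S$ does not depend on $\omega_j$---so that the diagonal also acquires its second factor of $Q_\mu(|I|)$ and the target bound $(Q_\mu(|I|)|\Lambda|)^2$ is reached.
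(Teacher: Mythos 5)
You have the decomposition, Lemma~\ref{lemkey2} decoupling, and the first spectral averaging right, and you correctly put your finger on the obstruction: the diagonal $k=j$ term in $\tr P^{(\Lambda)}_{(\bom^\perp_j,M)}(I)$, where $\omega_j$ is frozen at $M$ and no spectral averaging at site $j$ is available. But your proposed resolution does not close the gap. Applying \eq{sa1} with $W=\Pi_j$, $S=P^{(\Lambda)}_{(\bom^\perp_j,M)}(I)$, and averaging over $\omega_j$ computes to
\begin{equation*}
\E_{\omega_j}\set{\tr\set{\Pi_j P^{(\Lambda)}_\bom(I)\Pi_j S}}
= \scal{\delta_j, S\delta_j}\,\E_{\omega_j}\set{\scal{\delta_j, P^{(\Lambda)}_\bom(I)\delta_j}}
\le (\tr S)\,Q_\mu(\abs{I}),
\end{equation*}
which is only a weaker form of the first averaging you already performed; it gives no control whatever on $\scal{\delta_j, S\delta_j}=\scal{\delta_j,P^{(\Lambda)}_{(\bom^\perp_j,M)}(I)\delta_j}$, which is the quantity you actually need to tame. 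That matrix element depends only on $\bom^\perp_j$ and there is no rank-one variable left at site $j$ to average over, so the naive bound $\le 1$ really does leave an additive $Q_\mu(\abs{I})\abs{\Lambda}$ defect and the target $(Q_\mu(\abs{I})\abs{\Lambda})^2$ is not reached.

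The missing idea is that you should not commit to $\tau_j = M$. Lemma~\ref{lemkey2} gives the bound for every $\tau_j \ge M$, and the paper exploits this freedom: after the first spectral averaging, the inequality $\E\set{N(N-1)} \le Q_\mu(\abs{I})\sum_j \E_{\bom^\perp_j}\set{\tr P^{(\Lambda)}_{(\bom^\perp_j,\tau_j)}(I)}$ holds for all $\tau_j \ge M$, so in particular with $\tau_j = M+\tilde\omega_j$, where $\tilde\bom$ is an independent copy of $\bom$. Averaging over $\btau$ costs nothing on the left side; on the right side, $\tau_j$ is now a random variable with distribution a translate of $\mu$ (hence the same $Q_\mu$), so all $\abs{\Lambda}$ sites, including $j$, carry a random coupling, and the Wegner estimate \eq{wegnerd} applies to the full trace: $\E_{(\bom^\perp_j,\tau_j)}\set{\tr P^{(\Lambda)}_{(\bom^\perp_j,\tau_j)}(I)}\le Q_\mu(\abs{I})\abs{\Lambda}$. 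This yields $(Q_\mu(\abs{I})\abs{\Lambda})^2$ with no additive defect. The auxiliary randomization of $\tau_j$ is precisely what replaces the ``second rank-one trick'' you were looking for; by fixing $\tau_j=M$ at the outset, you removed the degree of freedom that makes the argument close.
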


Theorem~\ref{thmM} is extended in \cite{CGK}, allowing for  $n$ arbitrary intervals and arbitrary single-site probability measure $\mu$ with no atoms. We also give
applications of \eq{minamid}, deriving  new results about the multiplicity of eigenvalues and Mott's
formula for the ac-conductivity when the single site probability distribution is H\"older
continuous. 

\begin{proof}[Proof of Theorem~\ref{thmM}]
Fix $\Lambda\subset \Z^d$ and let $I$
be a bounded interval.  Since the measure $\mu$ has no atoms, it follows from  \eq{wegnerd}
that $\E_\bom\set{ \tr P_\bom^{(\Lambda)}(\{c\})}=0$ for any $c \in \R$. Thus we may
take all intervals to be of the form $]a,b]$, and use  Lemma~\ref{lemkey2}  to decouple the random variable $\omega_j$ from the  random variables $\bom_j^\perp$.
In view of  \eq{mudisc}, for all $\tau_j \ge M$, $j \in \Z^d$,  we  have
\begin{align}\label{choicetau}
\pa{\tr P^{(\Lambda)}_\bom(I)} \pa{\tr P^{(\Lambda)}_\bom(I)-1}
& =
\sum_{j\in\Lambda} \set{ \scal{\delta_j, P^{(\Lambda)}_\bom(I) \delta_j}\pa{\tr P^{(\Lambda)}_\bom(I)-1}  } \\
& \le
\sum_{j\in\Lambda} \set{ \scal{\delta_j, P^{(\Lambda)}_{(\bom_j^\perp,\omega_j)}(I) \delta_j}
\pa{\tr P^{(\Lambda)}_{(\bom_j^\perp,\tau_j)}(I)}}. \notag
\end{align}
We now  average over the
random variables $\bom=\{ \omega_j \}_{ j\in
\Z^d}$.   Using
\eqref{sad}, we get
\begin{align}\label{minamiproofd}
&\E_\bom \set{\pa{\tr P^{(\Lambda)}_\bom(I)}\pa{\tr P^{(\Lambda)}_\bom(I)-1}}\\
& \qquad \qquad \le
\sum_{j\in\Lambda} \E_{\bom_j^\perp} \set{  \pa{\tr P^{(\Lambda)}_{(\bom_j^\perp,\tau_j)}(I)
}\pa{\E_{\omega_j} \set{\scal{\delta_j, P^{(\Lambda)}_{(\bom_j^\perp,\omega_j)}(I) \delta_j} } }
}\notag \\
& \qquad \qquad \le
Q_\mu\pa{\abs{I}} \, \sum_{j\in\Lambda} \E_{\bom_j^\perp} \set{\tr
P^{(\Lambda)}_{(\bom_j^\perp,\tau_j)}(I)}.\notag
\end{align}
This holds for all  $\tau_j \ge M$, $j\in \Z^d$, so we now take
$\tau_j= M + \tilde{\omega}_j$, where $\tilde{\bom}=\set{
\tilde{\omega}_j}_{j\in\Z^d}$ and  $\bom=\set{{\omega}_j}_{j\in\Z^d}$ are two independent,
identically distributed collections of random variables.  Now ${\btau}=\set{{\tau}_j}_{j\in\Z^d}$ are independent identically distributed random variables with a common probability distribution $\mu_{\btau}$ such that
$Q_{\mu_{\btau}}=Q_\mu$.
 We get
\begin{align}\notag
&\E_\bom \set{\pa{\tr P^{(\Lambda)}_\bom(I)}\pa{\tr P^{(\Lambda)}_\bom(I)-1} }=\E_{\btau}\set{  \E_\bom
\set{\pa{\tr P^{(\Lambda)}_\bom(I)}\pa{\tr P^{(\Lambda)}_\bom(I)-1} }}\\
&\qquad \qquad  \qquad \le Q_\mu\pa{\abs{I}} \sum_{j\in\Lambda}
\E_{(\bom_j^\perp,\tau_j)} \pa{\tr P^{(\Lambda)}_{(\bom_j^\perp,\tau_j)}(I)} \le
(Q_\mu\pa{\abs{I}} |\Lambda|)^2, \label{pre-minami}
\end{align}
where we used  the  Wegner estimate \eqref{wegnerd}. (More precisely, we estimate as in \eqref{wegnerd}; the random variables do not need to be identically distributed.)
\end{proof}

\subsection{Stepping up to the continuum}\label{subsectcomments} 
 Unfortunately things are not so simple for the continuum Anderson Hamiltonian.  
    The main reason is that the random potential  $V_\bom$ in \eq{AndV}  is a sum of independent random perturbations of infinite rank, not rank one as in the discrete case, and thus  the a priori bound  in  \eq{rankone}, and  also
 Lemma~\ref{lemkey2}, are not applicable anymore.  

To prove Minami's estimate on the continuum we will use the fundamental spectral averaging estimate as in  \eq{sa1}.
The straightforward expansion of the trace in \eq{wegnerd} and \eq{minamiproofd}  cannot be used for the spectral averaging, even with $u_j$ instead of   $\delta_j$, and   will be replaced by a more sophisticated expansion in terms of trace class operators, as in \cite{CH,CHK2} (cf.\  \eq{trPT}-\eq{uniftracesum2}). Lemma~\ref{lemkey} will be modified, since the term in brackets in  \eq{decomp} does not satisfy an a priori bound as in \eq{rankone} anymore.    This term will be estimated using the Birman-Solomyak formula (cf.\   \eq{decomp2},\eq{xib}).  The bound in \eq{rankone} is then replaced by averaging the resulting expression over all the other random variables and using the Wegner estimate \eq{Wegner} (cf.\  \eq{avessf}).  The resulting  bound is useful if the constant $K_W$ in \eq{Wegner} is not too big (we have $K_W=1$ in the lattice, as can be seen in \eq{wegnerd}).  Since previous  proofs of the Wegner estimate do not give the desired control of $K_W$,  we must revisit  the Wegner estimate. We introduce a double averaging procedure that provides the desired estimates on the constant $K_W$
(cf.\  Lemma~\ref{thmWsmall}).  In addition, because of the way we use the Birman-Solomyak formula, we do not have freedom in the choice of $\tau_j$ as in \eq{choicetau}, we have to take $\tau_j=M_\rho$.  Thus we cannot average in $\btau$ as in  \eq{pre-minami}; this argument is replaced by a  refinement of the Wegner estimate
where one of the random variables is fixed (cf.\  Lemma~\ref{lemWegner0}).


\section{The Wegner estimate revisited}\label{secWegner}

Let $H_\bom$ be the Anderson Hamiltonian, $E_0 >0$,  $I \subset [0,E_0]$ an interval, and $\Lambda$ a finite box.    To prove the Wegner estimate \eq{Wegner}, it is shown in \cite{CH,CHK2} that 
\beq\label{trPT}
\tr P_\bom^{(\Lambda)}(I) \le Q_1 \sum_{j,k \in \widetilde{\Lambda}} \abs{
\tr \set{\sqrt{u_k}  P_\bom^{(\Lambda)}(I)\sqrt{u_j } \, T_{j,k}^{(\Lambda)}  } },
\eeq
where $\set{T_{j,k}^{(\Lambda)} }_{j,k \in \widetilde{\Lambda}}$ are (non-random) trace class operators in $\L^2(\Lambda)$  such that
\beq\label{uniftracesum}
\max_{j \in  \widetilde{\Lambda}} \set{\sum_{k\in  \widetilde{\Lambda}} \norm{T_{j,k}^{(\Lambda)}}_1} \le Q_2,
\eeq
with  $Q_1,Q_2$  constants depending only on 
 $E_0,d,u,V_{\mathrm{per}},M_{\rho}$.  Letting  $T_{j,k}^{(\Lambda)}= U_{j,k}^{(\Lambda)}\abs{T_{j,k}^{(\Lambda)}}$ be the polar decomposition of the operator   $T_{j,k}^{(\Lambda)}$,  recalling that then     $\abs{T_{j,k}^{(\Lambda)*}}=U_{j,k}^{(\Lambda)}T_{j,k}^{(\Lambda)}U_{j,k}^{(\Lambda)*} $, and setting
 \beq
 S_j^{(\Lambda)}:= \tfrac 12 \sum_{k \in \widetilde{\Lambda}} \pa{\abs{T_{j,k}^{(\Lambda)*} } + \abs{T_{k,j}^{(\Lambda)} }} \ge 0 \quad \text{for} \quad j \in \widetilde{\Lambda} ,
 \eeq
    we obtain
 \begin{align} \label{traceclassexp}
\tr P_\bom^{(\Lambda)}(I)
 \le   {Q_1}  \sum_{j \in \widetilde{\Lambda}} 
\tr \set{\sqrt{u_j}  P_\bom^{(\Lambda)}(I)\sqrt{u_j } \,S_j^{(\Lambda)} }  ,
\end{align}
with
\beq\label{uniftracesum2}
\max_{j\in  \widetilde{\Lambda}} \set{\tr S_j^{(\Lambda)} }\le Q_2.
\eeq
If we now take the expectation in \eq{traceclassexp}, use \eq{sa1} and \eq{uniftracesum2}, we get the Wegner estimate \eq{Wegner} with  $K_W= Q_1Q_2$.

We will need control of the constant $K_W$  and  a Wegner estimate with one of the random variables, say  $\omega_0$, fixed. In the course of obtaining control over $K_W$ 
we will derive \eq{trPT} with estimates on the constants $Q_1$ and $Q_2$ in the case when $\delta_- \ge 1$.

\subsection{ A Wegner estimate with control of the constants}

\begin{lemma}\label{thmWsmall}  Let $H_\bom$ be an Anderson Hamiltonian.   
\begin{itemize}
\item[(i)] Assume  $\delta_- \ge 2$.  Then   there exists a constant $C_{d,V_{\mathrm{per}}}$ such that, given an energy $E_0>0$, \eq{Wegner} holds for all  
intervals $I\subset [0,E_0]$ with a constant
\beq
K_W \le  C_{d,V_{\mathrm{per}}} \pa{ \tfrac {\rho_+}{u_-}}^{2^d} \gamma_d(E_0) \min \set{1, E_0^{2^d-d -1}}  \max \set{1 ,  E_0^{2^{d+2}} },
\eeq
where we have  $\gamma_d(E_0)=1$ if $d \ge 2$, and  $\gamma_1(E_0)=\gamma_{1,V_{\mathrm{per}}}(E_0)\in ]0,1]$ with  $\lim_{E_0 \to 0} \gamma_1(E_0)=0$.

\item[(ii)]  Assume  $\delta_- \ge 1$.  Then,  given an energy $E_0>0$,  \eq{trPT}-\eq{uniftracesum2} hold for all 
intervals $I\subset [0,E_0]$ with  constants
\beq  \label{Q12}
Q_1 =   \pa{1 +  E_0}^{2[[\frac d 4]]} \quad\text{and} \quad Q_2 =C_{d,V_{\mathrm{per}}}^\pr, 
\eeq
and hence  \eq{Wegner} holds for   all 
intervals $I\subset [0,E_0]$ with a constant
\beq \label{KW12}
K_W \le C_{d,V_{\mathrm{per}}}^\pr  \pa{1 +  E_0}^{2[[\frac d 4]]}.
\eeq
\end{itemize}

\end{lemma}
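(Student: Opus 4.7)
The plan is to reprove the Combes-Hislop-Klopp trace-class expansion \eqref{trPT}-\eqref{uniftracesum2} with explicit bookkeeping of the constants (part~(ii)), and then upgrade it via a new iterated spectral-averaging scheme that exploits $\delta_-\ge 2$ (part~(i)). The bound \eqref{KW12} from (ii) is what feeds Section~\ref{secMinami}, while (i) provides the quantitatively small $K_W$ required to verify \eqref{KWrhocond} near the band edge.

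For part~(ii), under $\delta_-\ge 1$ we have $W_\Lambda:=\sum_{j\in\widetilde\Lambda}u_j\ge u_->0$ on $\Lambda$, so $W_\Lambda^{-1}$ is bounded with norm at most $u_-^{-1}$. I would use the resolution $\mathbf{1}=\sum_j W_\Lambda^{-1/2}\sqrt{u_j}\sqrt{u_j}W_\Lambda^{-1/2}$ to sandwich $P_\bom^{(\Lambda)}(I)$, and then insert $(\mathbf{1}+H_\bom^{(\Lambda)})^{\pm N}$ with $N=2[[d/4]]$, extracting the factor $(1+E_0)^N$ via $\|(\mathbf{1}+H_\bom^{(\Lambda)})^N P_\bom^{(\Lambda)}(I)\|_{op}\le(1+E_0)^N$. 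A resolvent expansion then converts $(\mathbf{1}+H_\bom^{(\Lambda)})^{-N}$ into $(\mathbf{1}+H_0^{(\Lambda)})^{-N}$ plus controllable lower order terms; since $N=2[[d/4]]>d/2$, $(\mathbf{1}+H_0^{(\Lambda)})^{-N}$ is Hilbert-Schmidt with per-unit-volume norm bounded by $C_{d,V_{\mathrm{per}}}^\prime$ via Weyl/heat-kernel estimates. This produces \eqref{trPT}-\eqref{uniftracesum2} with the constants in \eqref{Q12}, and applying the spectral-averaging estimate \eqref{sa1} at each site $j$ then yields \eqref{KW12}.

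For part~(i) the double-exponential constants $u_-^{-2^d}\rho_+^{2^d-1}$ signal an iteration of depth $d$. The idea, foreshadowed in Section~\ref{subsectcomments}, is to exploit the extra room provided by $\delta_-\ge 2$: the support of each $u_j$ contains $\Lambda_2(j)$, which can be bisected along each coordinate axis into $2^d$ unit sub-cubes, and on each of these one runs an additional round of \eqref{sa1}. Each pass replaces a deterministic kernel by a fresh $\sqrt{u_j}P_\bom^{(\Lambda)}(I)\sqrt{u_j}$ factor at the cost of one extra $\rho_+/u_-$, and after $d$ iterations one reaches the stated powers. The small-$E_0$ gain $\min\{1,E_0^{2^d-d-1}\}$ will compound from the band-edge decay $\tr\chi_{[0,E_0]}(H_0^{(\Lambda)})/|\Lambda|=O(E_0^{d/2})$ in $d\ge 2$; in $d=1$ this asymptotic is non-polynomial and only yields the $V_{\mathrm{per}}$-dependent factor $\gamma_1(E_0)\to 0$.

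The main obstacle will be closing this induction: after each round of averaging, the reduced deterministic operator must again admit a trace-class expansion of the form \eqref{trPT}-\eqref{uniftracesum2} so that the next round can be applied. The Birman-Solomyak identity used later in Section~\ref{secMinami} will force us to pin the averaged variable at the top $M_\rho$ of its support rather than leave it free, so the bootstrap must rest on a refined Wegner estimate with one variable fixed (cf.\ Lemma~\ref{lemWegner0}). Verifying that this refinement is compatible with the iteration at every level, and that the sharp band-edge gain propagates cleanly through all $d$ layers into the final polynomial or $\gamma_1$-factor, is the delicate technical point.
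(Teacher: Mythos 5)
Your plan for part~(ii) is in the right spirit---insert high powers of $(\mathbf{1}+H_\bom^{(\Lambda)})^{\pm 1}$, resolve with $\sqrt{u_j}$, and control the deterministic kernel per unit volume---but the tool you reach for to pass from $(\mathbf{1}+H_\bom^{(\Lambda)})^{-N}$ to $(\mathbf{1}+H_0^{(\Lambda)})^{-N}$ is not the right one. The paper does not perform a resolvent expansion (which would generate a cascade of terms involving $V_\bom$ each needing its own trace-class bookkeeping). Instead it uses an operator convexity argument: since $g_t(x)=(1+tx)^{-2p}$ is convex and nonincreasing and $H_\bom^{(\Lambda)}\ge H_0^{(\Lambda)}$, Lemma~\ref{lemconvexJ} gives in one stroke
$\tr\{P_\bom^{(\Lambda)}(I)(1+tH_\bom^{(\Lambda)})^{-2p}P_\bom^{(\Lambda)}(I)\}\le\tr\{P_\bom^{(\Lambda)}(I)(1+tH_0^{(\Lambda)})^{-2p}P_\bom^{(\Lambda)}(I)\}$, after which one resolves with $\chi_j^{(1)}$ and extracts $Q_1=(1+tE_0)^{2p}$ by \eq{trPT1}. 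Your Weyl/heat-kernel estimate for $Q_2$ matches the paper's Combes--Thomas plus trace estimates \eq{det1}--\eq{longest} in purpose, so (ii) is salvageable, but only after replacing the expansion by Jensen's inequality.

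The proposal for part~(i) misses the actual mechanism. You posit a depth-$d$ iteration, each level consuming one extra $\rho_+/u_-$ from a fresh application of \eq{sa1}, closed by a bootstrap through a Wegner estimate with one variable fixed (Lemma~\ref{lemWegner0}) and compatible with the Birman--Solomyak formula. None of that is how the paper proceeds, and the ingredients you cite belong to Section~\ref{secMinami}, not to this lemma. The paper's argument is a \emph{single} double-averaging: split $\Z^d$ into the even sublattice $2\Z^d$ (variables $\bom'$) and its complement (variables $\bom''$), set $H_{\bom''}=H_0+V_{\bom''}$, and use $H_\bom\ge H_{\bom''}\ge H_0$ together with $H_{\bom''}\ge V_{\bom''}$. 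One averages first over $\bom'$ with the sandwich $(1+tH_{\bom''}^{(\Lambda)})^{-2p}$ and $p=2^{d+1}$ to get a Wegner-type bound with a $\bom''$-dependent constant $\beta_t$; then one averages $\beta_t$ over $\bom''$. The quantitative gain is harvested by inserting $(1+tV_{\bom''})^{-p/2}$ (again via convexity), which, because $\delta_-\ge 2$, is bounded pointwise by $(1+tu_-\hat\omega_0)^{-p/2}$ with $\hat\omega_0$ the sum over the $|Q|=2^d-1$ odd corners of certain minima of $\omega$'s (see \eq{hatomega}). The independence of these $2^d-1$ blocks and one integration against $\rho$ per block produce $C_d'(\rho_+/(tu_-))^{2^d-1}$, and the final $\min\{1,E_0^{2^d-d-1}\}\max\{1,E_0^{2^{d+2}}\}$ comes purely from optimizing the free parameter $t$ ($t=1/E_0$ if $E_0\le 3$, $t=1$ otherwise), not from the band-edge asymptotics $\tr\chi_{[0,E_0]}(H_0^{(\Lambda)})/|\Lambda|\sim E_0^{d/2}$ that you invoke. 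In particular there is no induction to close, no role for Lemma~\ref{lemWegner0} here, and pinning $\omega_j=M_\rho$ never occurs in this lemma. So the gap is concrete: the $2^d$ exponent is a one-shot consequence of choosing $p=2^{d+1}$ and of the combinatorics of odd lattice sites surrounding the origin, not the endpoint of a $d$-step bootstrap, and your proposed iteration would not obviously terminate or reproduce the claimed constants.
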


\begin{proof}   Assume $\delta_- \ge m$, where $m$ is either $1$ or $2$.
We set  $\chi\up{m}_j=  \chi_{\Lambda_m(j)}$ for $j \in \widetilde{\Upsilon}:=\Upsilon\cap \Z^d$, where $\Upsilon$ is either $\R^d$ or a finite box $\Lambda$ (recall that in this case $ \chi_{\Lambda_m(j)}$ denotes  $ \chi\up{\Lambda}_{\Lambda_m(j)}$, a sub-box in the torus).
Note that for any $j_0 \in \widetilde{\Upsilon}$ we have 
\begin{equation}\label{cover211}
 \sum_{j\in \pa{j_0 + m \Z^d}\cap \Upsilon} \chi\up{m}_j = 1.
\end{equation}
We also let  $\hat{\chi}\up{m}_j = {u_j}^{-\frac 12}\chi\up{m}_j$ on  $\Lambda_m(j)$,  ${\hat{\chi}\up{m}}_j =0$ otherwise. It follows from \eq{u} that
${\hat{\chi}\up{m}}_j \le u_-^{-\frac 12} \chi\up{m}_j$.  (Recall we write  $u_j$ for $u\up{\Lambda}_j$.)

To prove (i), assume  $\delta_- \ge 2$.
We write $\bom^\pr=\{ \omega_j \}_{j\in
2 \Z^d}$, $\bom^{\pr\pr}=\{ \omega_j \}_{j\notin
2\Z^d}$.  We set 
\begin{equation}
H_{\bom^{\pr\pr}} := H_0 + V_{\bom^{\pr\pr}}, \quad  V_{\bom^{\pr\pr}}:= \sum_{j\notin 2\Z^d} \omega_j u_j.
\end{equation}
Note that  $H_{\bom^{\pr\pr}}$ is a $2\Z^d$ ergodic family of random self-adjoint operators, and we have
\beq  \label{H>VH}
H_{\bom} \ge H_{\bom^{\pr\pr}}  \ge H_0 , \quad   H_{\bom^{\pr\pr}}\ge V_{\bom^{\pr\pr}}.
\eeq

Fix an energy $E_0>0$, a box $\Lambda$, and 
let $I=]a,b]\subset [0,E_0]  $.
Set $p =2^{d+1}$.  Given $t> 0$, the function  $g_t(x)=\pa{1 + tx}^{-2 p}$ is convex on the interval $]- \frac 1 t,\infty[$.  Thus, using \eq{H>VH}, we can proceed as in \cite{CH} using convexity and Jensen's inequality, cf.\  Lemma~\ref{lemconvexJ}, and then \eq{cover211} and \eq{u},   to get 
\begin{align}\notag 
\tr P\up{\Lambda}_{\bom}(I)&  \le \pa{1 + t E_0}^{2 p}  \tr \set{ P\up{\Lambda}_{\bom}(I) \pa{1 + tH\up{\Lambda}_{\bom} }^{-2p} P\up{\Lambda}_{\bom}(I) } \\\notag
& \le\pa{1 + t E_0}^{2 p} \tr\set{ P\up{\Lambda}_{\bom}(I) \pa{1 + tH\up{\Lambda}_{\bom^{\pr\pr}} }^{-2p}  P\up{\Lambda}_{\bom}(I)}\\
&=\pa{1 + t E_0}^{2 p} \tr\set{ P\up{\Lambda}_{\bom}(I) \pa{1 + tH\up{\Lambda}_{\bom^{\pr\pr}} }^{-2p}  }  \label{trPT1}\\
& =   \pa{1 + t E_0}^{2p}\sum_{j,k\in  \Lambda\cap 2\Z^d}   \tr  \set{P\up{\Lambda}_{\bom}(I) \chi\up{2}_j  \pa{1 + tH\up{\Lambda}_{\bom^{\pr\pr}} }^{-2p} \chi\up{2}_k} \notag \\
& =    \pa{1 + t E_0}^{2p}\sum_{j,k\in  \Lambda\cap 2\Z^d}  \tr  \set{\sqrt{u_k}P\up{\Lambda}_{\bom}(I)\sqrt{u_j}{\hat{\chi}\up{2}}_j   \pa{1 + tH\up{\Lambda}_{\bom^{\pr\pr}} }^{-2p} {\hat{\chi}\up{2}}_k} \notag .
\end{align}
It then follows from \eq{sa1}, proceeding  as in \eq{trPT}-\eq {traceclassexp}
(see also  \cite[Lemma~2.1]{CHK2}),  that
\begin{align} \label{suminest}
\E_{\bom^\pr} \tr P\up{\Lambda}_{\bom}(I)& \le    \pa{1 + t E_0}^{2p} \rho_+ \abs{I} \sum_{j,k\in  \Lambda\cap 2\Z^d}   \norm{\hat{\chi}\up{2}_j  \pa{1 + tH\up{\Lambda}_{\bom^{\pr\pr}} }^{-2p} {\hat{\chi}\up{2}}_k}_1\\
&\le   \pa{1 + t E_0}^{2p}u_-^{-1}  \rho_+ \abs{I} \sum_{j,k\in  \Lambda\cap 2\Z^d}   \norm{\chi\up{2}_j  \pa{1 + tH\up{\Lambda}_{{\bom}^{\pr\pr}} }^{-2p} \chi\up{2}_k}_1 .\notag
\end{align}

We now use several deterministic estimates.  First,
\begin{align}\label{det1}
& \norm{\chi\up{2}_j  \pa{1 + t{H\up{\Lambda}_{\bom^{\pr\pr}}} }^{-2p} \chi\up{2}_k}_1\\
& \qquad \qquad  \le \sum_{r \in \Lambda\cap 2\Z^d}  \norm{\chi\up{2}_j  \pa{1 + t{H\up{\Lambda}_{\bom^{\pr\pr}}} }^{-p} \chi\up{2}_r}_2 \norm{\chi\up{2}_r \pa{1 + t{H\up{\Lambda}_{\bom^{\pr\pr}}} }^{-p} \chi\up{2}_k}_2. \notag
\end{align}
Second, 
\begin{align}\label{det2}
& \norm{\chi\up{2}_j  \pa{1 + t{H\up{\Lambda}_{\bom^{\pr\pr}}} }^{-p} \chi\up{2}_r}_2^2 \\
&\qquad \qquad   \qquad   \le 
\norm{\chi\up{2}_j  \pa{1 + t{H\up{\Lambda}_{\bom^{\pr\pr}}} }^{-p} \chi\up{2}_r} \norm{\chi\up{2}_j  \pa{1 + t{H\up{\Lambda}_{\bom^{\pr\pr}}} }^{-p} \chi\up{2}_r}_1 .\notag
\end{align}
Third, we estimate $\norm{\chi\up{2}_j  \pa{1 + tH\up{\Lambda}_{\bom^{\pr\pr}}}^{-p} \chi\up{2}_r} $ using the Combes-Thomas estimate.  We use the precise estimate   provided in
\cite[Eq. (19) in Theorem 1]{GKdecay} (with $\gamma= \frac 1 2$), modified for finite volume operators with periodic boundary condition as in  \cite[Lemma 18]{FKac} and  \cite[Theorem 3.6]{KKdef}, plus the fact that we are using boxes of side $2$.    We have ($L\ge L_d$),  with $d_\Lambda(j,r)$ the distance on the torus $\Lambda$,
\begin{align}\label{CT}
&\norm{\chi\up{2}_j  \pa{1 + t{H\up{\Lambda}_{\bom^{\pr\pr}}} }^{-p} \chi\up{2}_r} = t^{-p}\norm{\chi\up{2}_j  \pa{t^{-1} + H\up{\Lambda}_{{\bom^{\pr\pr}}} }^{-p} \chi\up{2}_r}\\
& \qquad \qquad \qquad \le t^{-p}\pa{\tfrac {4 t} {3}}^{p} \e^{\frac1  {2 \sqrt{t}} }\e^{-\frac {1}{8 \sqrt{td}}d_\Lambda(j,r)}=\pa{\tfrac {4 } {3}}^{p}\e^{\frac1  {2 \sqrt{t}} }\e^{-\frac {1}{8 \sqrt{td}}d_\Lambda(j,r)}.\notag
\end{align}
Fourth, note that
\begin{align}\notag
& \norm{\chi\up{2}_j  \pa{1 + t{H\up{\Lambda}_{\bom^{\pr\pr}}} }^{-p} \chi\up{2}_r}_1\\
&\qquad \qquad \quad \le  \norm{\chi\up{2}_j  \pa{1 + t{H\up{\Lambda}_{\bom^{\pr\pr}}} }^{-\frac p 2}}_2 \norm{\chi\up{2}_r  \pa{1 + t{H\up{\Lambda}_{\bom^{\pr\pr}}} }^{- \frac p 2}}_2
\label{trace1}\\
 & \qquad \qquad \quad  = \norm{\chi\up{2}_j  \pa{1 + t{H\up{\Lambda}_{\bom^{\pr\pr}}} }^{-p} \chi\up{2}_j}_1^{\frac 12} \norm{\chi\up{2}_r  \pa{1 + t{H\up{\Lambda}_{\bom^{\pr\pr}}} }^{-p} \chi\up{2}_r}_1^{\frac 12} .\notag 
 \end{align}
 
We now average   over $\bom^{\pr\pr}$. Using \eq{det1}-\eq{trace1}, we have 
\begin{align}\label{averageeven}
&\E_{\bom^{\pr\pr}} \set{\norm{\chi\up{2}_j  \pa{1 + tH\up{\Lambda}_{\bom^{\pr\pr}} }^{-p} \chi\up{2}_r}_1^{\frac 12}\norm{\chi\up{2}_r \pa{1 + tH\up{\Lambda}_{\bom^{\pr\pr}} }^{-p} \chi\up{2}_k}_1^{\frac 12}}\\ \notag
&\quad  \le \E_{\bom^{\pr\pr}} \left\{{\norm{\chi\up{2}_j  \pa{1 + tH\up{\Lambda}_{\bom^{\pr\pr}} }^{-p} \chi\up{2}_j}_1^{\frac 14} \norm{\chi\up{2}_r  \pa{1 + tH\up{\Lambda}_{\bom^{\pr\pr}} }^{-p} \chi\up{2}_r}_1^{\frac 12} }\right. \\
& \qquad \qquad \qquad  \qquad \qquad \qquad   \qquad \qquad \left.{\times \norm{\chi\up{2}_k \pa{1 + tH\up{\Lambda}_{\bom^{\pr\pr}} }^{-p} \chi\up{2}_k}_1^{\frac 14}}\right\} \notag \\
& \quad \le \beta_t:= \E_{\bom^{\pr\pr}} \set{\norm{\chi\up{2}_{0} \pa{1 + tH\up{\Lambda}_{\bom^{\pr\pr}} }^{-p} \chi\up{2}_{0}}_1},\notag
\end{align}
where we used H\"older's inequality plus translation invariance (in the torus) of the expectation.
  
 It now follows from  from \eq{det1}, \eq{det2},  \eq{CT}, \eq{trace1},  and \eq{averageeven} that
 \begin{align}\notag
&\E_{\bom^{\pr\pr}} \set{ \sum_{j,k\in  \Lambda\cap 2\Z^d}   \norm{\chi\up{2}_j  \pa{1 + tH\up{\Lambda}_{{\bom}^{\pr\pr}} }^{-2p} \chi\up{2}_k}_1 }  \\ \notag
&\quad \quad  \le \beta_t \, \e^{\frac {1}{ 2\sqrt{t}} }  \pa{\tfrac 4 3}^{p} \sum_{j,k, r\in  \Lambda\cap 2\Z^d}  \e^{-\frac {1}{16 \sqrt{td}}d_\Lambda(j,r)}\e^{-\frac {1}{16\sqrt{td}}d_\Lambda(r,k)}\\  \label{longest}
&\quad \quad \le 2^{-d}  \beta_t \, \e^{\frac {1}{ 2\sqrt{t}} }  \pa{\tfrac 4 3}^{p} \abs{\Lambda} \pa{  \sum_{r \in  2\Z^d}\e^{-\frac {1}{16 \sqrt{td}} \abs{r}}}^2 \\ \notag
&\quad \quad=2^{-d}  \beta_t \, \e^{\frac {1}{ 2\sqrt{t}} }  \pa{\tfrac 4 3}^{p} \abs{\Lambda} 
\pa{ \sum_{s\in  \Z}\e^{-\frac {1}{8d \sqrt{t}} \abs{s}}  }^{2d}  \\ \notag
&\quad \quad\le 2^{-d}  \beta_t \, \e^{\frac {1}{ 2\sqrt{t}} }  \pa{\tfrac 4 3}^{p} \abs{\Lambda}\pa{ 1 + 2 \int_0^\infty  \di s  \, \e^{-\frac {1}{8d \sqrt{t}} \abs{s}} }^{2d}\\ \notag
&\quad \quad\le2^{-d}  \beta_t \, \e^{\frac {1}{ 2\sqrt{t}} }  \pa{\tfrac 4 3}^{p} \abs{\Lambda}
\pa{1 + {16d\sqrt{t}}  }^{2d},
 \end{align}
 so we conclude from  \eq{suminest} that
 \begin{align}\label{estimatetrP}
\E_{\bom} \tr P\up{\Lambda}_{\bom}(I)& \le  \pa{\tfrac 4 3}^{p} \tfrac 1 {2{u_-}}  \pa{1 + t E_0}^{2p} \beta_t \, \e^{\frac {1}{ 2\sqrt{t}} }  \pa{1 + 16d\sqrt{t}  }^{2d}   \rho_+ \abs{I}  \abs{\Lambda}.
\end{align}

We now estimate $ \beta_t$.   We  have, using periodicity, and again Lemma~\ref{lemconvexJ} with  $H\up{\Lambda}_{\bom^{\pr\pr}}\ge V_{\bom^{\pr\pr}}$ and \eq{u},
\begin{align}\notag
\beta_t  & := \E_{\bom^{\pr\pr}} \set{\tr \set{\chi\up{2}_{0} \pa{1 + tH\up{\Lambda}_{\bom^{\pr\pr}} }^{-p} \chi\up{2}_{0}}}
= \tfrac {2^d}{\abs{\Lambda} }\E_{\bom^{\pr\pr}} \set{\tr\set{ \pa{1 + tH\up{\Lambda}_{\bom^{\pr\pr}} }^{-p}} } \\ \label{term10}
& \le   \tfrac {2^d}{\abs{\Lambda} }\E_{\bom^{\pr\pr}} \set{\tr \set{\pa{1 + tH\up{\Lambda}_{\bom^{\pr\pr}} }^{-\frac p 4}  \pa{1 + tV_{\bom^{\pr\pr}} }^{-\frac p 2} \pa{1 + tH\up{\Lambda}_{\bom^{\pr\pr}} }^{-\frac p 4}}}\\ 
& =   \tfrac {2^d}{\abs{\Lambda} }\E_{\bom^{\pr\pr}} \set{\tr\set{ \pa{1 + tV_{\bom^{\pr\pr}} }^{-\frac p 4}  \pa{1 + tH\up{\Lambda}_{\bom^{\pr\pr}} }^{-\frac p 2} \pa{1 + tV_{\bom^{\pr\pr}} }^{-\frac p 4}}}
\notag  \\ \notag
&=  \E_{\bom^{\pr\pr}} \set{\tr\set{ \chi\up{2}_{0} \pa{1 + tV_{\bom^{\pr\pr}} }^{-\frac p 4}  \pa{1 + tH\up{\Lambda}_{\bom^{\pr\pr}} }^{-\frac p 2} \pa{1 + tV_{\bom^{\pr\pr}} }^{-\frac p 4} \chi\up{2}_{0}} }\\ \notag
&=  \E_{\bom^{\pr\pr}} \set{\tr \set{\pa{1 + tV_{\bom^{\pr\pr}} }^{-\frac p 4} \chi\up{2}_{0}  \pa{1 + tH\up{\Lambda}_{\bom^{\pr\pr}} }^{-\frac p 2} \chi\up{2}_{0} \pa{1 + tV_{\bom^{\pr\pr}} }^{-\frac p 4}}}\\ \notag
&\le \E_{\bom^{\pr\pr}} \set{  \pa{1 + t {u_-} \hat{\omega}_0}^{-\frac p 2}\tr \set{ \chi\up{2}_{0}  \pa{1 + tH\up{\Lambda}_{\bom^{\pr\pr}} }^{-\frac p 2} \chi\up{2}_{0} }},
\end{align}
where  we set, with $ Q:=  \set{0,1}^d \setminus \set{0} \subset\Z^d$,
\beq \label{hatomega}
\hat{\omega}_0= \sum_{q \in Q} \hat{\omega}_{0,q}; \quad \text{with} \quad  \hat{\omega}_{0,q}:= \min \set{\omega_{q +i}; \; i \in 2 \Z^d,   \abs{q +  i }_\infty=1  }.
\eeq
Note that $\abs{Q}=2^d -1$,  and $\pa{q + 2\Z^d}\cap \pa{q^\pr + 2\Z^d}=\emptyset$ if $q,q^\pr \in Q$, with $q\not= q^\pr$, so  $\set{ \hat{\omega}_{0,q}}_{q \in Q}$ are independent random variables. 

Now,  with $\Theta:= \max \set{- \essinf V_{\mathrm{per}}, 0} $,
\begin{align}\label{trtheta}
& \tr\set{ \chi\up{2}_0   \pa{1 + tH\up{\Lambda}_{\bom^{\pr\pr}} }^{-\frac p 2} \chi\up{2}_0 }\\
&\quad  \le  \set{ \sup_{E \ge0} \pa{\tfrac {1 +\Theta + E }{1 +tE}}^{\frac p 2} }
 \tr \set{\chi\up{2}_0  \pa{H\up{\Lambda}_{\bom^{\pr\pr}}+1+\Theta}^{-\frac p 2}\chi\up{2}_0 } 
  \le C_{d,\Theta} \max \set{1 ,  t ^{-\frac p 2}} , \notag
\end{align}
where we used the fact that since $p=2^{d+1} \ge 4[[\frac d 4]]$, where $ [[\frac d 4]]$ is the smallest integer $> \frac d 4$,  it follows that $ \tr\set{ \chi\up{2}_0  \pa{H\up{\Lambda}_{{\bom^{\pr\pr}}}+1+\Theta }^{-\frac p 2}\chi\up{2}_0}$ is uniformly bounded, independently of $\Lambda$
(e.g., as in  \cite[proof of Lemma~A.4]{GKduke}).

Moreover, since $p=2^{d+1} > 2(2^d -1)$,
\begin{align}\notag
&
\E_{\bom^{\pr\pr}}\set{ \pa{1 + t {u_-} \hat{\omega}_0 }^{-\frac p 2}} \le \prod_{q \in Q} \E_{\bom^{\pr\pr}}\set{ \pa{1 + t {u_-} \hat{\omega}_{0 ,q}}^{-\frac {p}{2(2^d -1)}}}\\ \notag
& \qquad \qquad  =  \prod_{q \in Q} \E_{\bom^{\pr\pr}}\set{ \max_{i \in 2\Z^d; \; \abs{q + i}_\infty=1}   \pa{1 + t {u_-} {\omega}_{q+i}}^{-\frac {p}{2(2^d -1)}}}
\\
& \qquad \qquad \le  \pa{2d \,  \E_{\omega_0}\set{\pa{1 + t {u_-} {\omega}_{0}}^{-\frac {p}{2(2^d -1)}}} }^{2^d -1}\\
& \notag
 \qquad \qquad  \le \pa{2d \,  \rho_+ \int_0^\infty \di\omega_0\, \pa{1 + t {u_-} {\omega}_{0}}^{-\frac {p}{2(2^d -1)}}}^{2^d -1}
\\ \notag
& \qquad \qquad \le  \pa{\frac {2d (2^d -1)\,  \rho_+}{(2^d -1-\frac p 2)  t {u_-}  } }^{2^d -1}=
 C_{d}^\pr\pa{ \frac{\rho_+}{ t {u_-}}}^{2^d -1}.
\end{align}

Thus,  we have
\beq
\beta_t \le C_{d,\Theta}^\pr \max \set{1 ,  t ^{-2^d} } \pa{ \frac{\rho_+}{ t {u_-}}}^{2^d -1},
\eeq
so it follows from \eq{estimatetrP} that
\begin{align}\label{estimatetrP1}
&\E_{\bom} \tr P\up{\Lambda}_{\bom}(I) \le \\
& \;  \tfrac {C_{d,\Theta}^\pr } {{u_-}}  \pa{1 + t E_0}^{2^{d+2}}  \, \e^{\frac {1}{2 \sqrt{t}} }  \pa{1 + 16d\sqrt{t}  }^{2d}\!\!\! \max \set{1 ,  t ^{-2^d}}  \pa{ \frac{\rho_+}{ t {u_-}}}^{2^d -1} \!\!\!\! \!\!\! \rho_+ \abs{I}  \abs{\Lambda}. \notag
\end{align}

If $E_0\le 3$, we choose  $t=\frac 1 {E_0}$, obtaining
\begin{align}\label{estimatetrP21}
\E_{\bom} \tr P\up{\Lambda}_{\bom}(I)& \le   {C^{\pr\pr}_{d,\Theta}}  \pa{ \frac{\rho_+}{ {u_-}}}^{2^d }  E_0^{2^d-d -1}   \abs{I}  \abs{\Lambda}.
\end{align}
If $E_0 >3$, we  take $t= 1 $, getting
\begin{align}\label{estimatetrP22}
\E_{\bom} \tr P\up{\Lambda}_{\bom}(I)& \le   {C^{\pr\pr\pr}_{d,\Theta}}      \pa{ \frac{\rho_+}{ {u_-}}}^{2^d }  E_0^{2^{d+2}} \abs{I}  \abs{\Lambda}.
\end{align}
Thus, for all $E_0 >0$ we have
\begin{align}\label{estimatetrP2}
\E_{\bom} \tr P\up{\Lambda}_{\bom}(I)& \le  \tfrac {C_{d,\Theta}}{u_-} \,  \pa{ \frac{\rho_+}{ {u_-}}}^{2^d-1 }\! \!\min \set{1, E_0^{2^d-d -1}}  \max \set{1 , E_0^{2^{d+2}} } \rho_+ \abs{I}  \abs{\Lambda}.
\end{align}

For $d=1$ we need to do a bit better.  In this case we redo \eq{trtheta} as follows:
\begin{align}\label{trtheta1}
 \tr \set{\chi\up{2}_0   \pa{1 + tH\up{\Lambda}_{\bom^{\pr\pr}} }^{-\frac p 2} \chi\up{2}_0 }& \le  \tr\set{ \chi\up{2}_0   \pa{1 + tH\up{\Lambda}_{\bom^{\pr\pr}} }^{-1} \chi\up{2}_0}\\
 & \le \alpha_t:=  \tr \set{\chi\up{2}_0   \pa{1 + tH\up{\Lambda}_0 }^{-1} \chi\up{2}_0}.  \notag
\end{align}
For $d=1$ the estimate \eq{estimatetrP1} now becomes
\begin{align}\label{estimatetrP1d1}
\E_{\bom} \tr P\up{\Lambda}_{\bom}(I)  \le   \tfrac {C_{1,\theta}} {{u_-}}  \pa{1 + t E_0}^{8}  \, \e^{\frac {1}{2 \sqrt{t}} }  \pa{1 + 16\sqrt{t} }^{2} \alpha_t \pa{ \frac{\rho_+}{ t {u_-}}}   \rho_+ \abs{I}  \abs{\Lambda} ,
\end{align}
and thus \eq{estimatetrP2} becomes
\begin{align}\label{estimatetrP2d1}
\E_{\bom} \tr P\up{\Lambda}_{\bom}(I)& \le  \tfrac {C_{1,\Theta}}{u_-} \,  \,   \frac{\rho_+}{ {u_-}}  \gamma_1(E_0)   \max \set{1 ,  E_0^{8 }} \rho_+ \abs{I}  \abs{\Lambda}.
\end{align}
where $\gamma_1(E_0)\le 1$ and  $\lim_{E_0\to 0} \gamma_1(E_0)=0$ uniformly in $\Lambda$ large.

This proves (i). To prove (ii), we now assume $\delta_- \ge 1$. We proceed as in the proof of (i), with  $\bom^\pr=\bom$ and $\bom^{\pr\pr}=\{ \omega_j \}_{j\notin
\Z^d}=\emptyset$, that is $V_{\bom^{\pr\pr}}=0$ and $H_{\bom^{\pr\pr}}=H_0$.  We also now fix $p=2[[\frac d 4]]$. Then \eq{trPT1} yields  \eq{trPT} with  $Q_1=   \pa{1 + t E_0}^{2p}$ and 
$T_{j,k}^{(\Lambda)}= {\hat{\chi}\up{1}}_j   \pa{1 + tH\up{\Lambda}_{0} }^{-2p} {\hat{\chi}\up{1}}_k  $. Proceeding as in \eq{det1}-\eq{longest} gives \eq{uniftracesum} with 
\beq
Q_2= \beta_t\up{0} \, \e^{\frac {1}{ 4\sqrt{t}} }  \pa{\tfrac 4 3}^{p} 
\pa{1 + {32d\sqrt{t}}  }^{2d}, 
\eeq
where, as in \eq{trtheta},
\beq
\beta_t\up{0}:= \norm{\chi\up{1}_{0} \pa{1 + tH\up{\Lambda}_{0} }^{-p} \chi\up{1}_{0}}_1 \le C_{d,\Theta} \max \set{1 ,  t ^{- p }}\le C_{d,\Theta}.
\eeq
We now set $t=1$, obtaining \eq{Q12} and \eq{KW12}. 
\end{proof}


\subsection{A Wegner estimate with $\omega_0$ fixed}\label{subsectWegner0}

Let $\Upsilon=\Lambda_L(0)$  or $\R^d$.  Given $\tau \in \R$, we consider (recall $u_0=u$)
\beq
H_{(\bom\up{0}, \tau)}\up{\Upsilon}=H_{(\bom\up{0},\omega_0= \tau)} \up{\Upsilon} = H_{\bom}\up{\Upsilon} + (\tau- \omega_0) u .
\eeq

\begin{lemma}\label{lemWegner0} Let $H_\bom$ be an Anderson Hamiltonian, $E_0 >0$.
Given $\tau\in\R$, there exists a constant $\widetilde{K}_W=\widetilde{K}_W(d,u,V_{\mathrm{per}},E_0, \tau)$, such that for any interval $I\subset [0,E_0]$ and finite box $\Lambda=\Lambda_L(0)$  we have
\begin{equation} \label{Wegnersans0}
 \E_{\bom\up{0}} \set{\tr P_{(\bom\up{0}, \tau)}\up{\Lambda} (I)} \le \widetilde{K}_W \rho_+\abs{I}\abs{ \Lambda} .
\end{equation}
Moreover, if $\delta_- \ge 2$, we have 
\beq \label{KW123}
\widetilde{K}_W \le  C_{d,V_{\mathrm{per}},\tau} \pa{1 +  E_0}^{2[[\frac d 4]]}.
\eeq
\end{lemma}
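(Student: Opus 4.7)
\emph{Proof proposal.} The plan is to rerun the proof of Lemma~\ref{thmWsmall}(ii) for the operator $H_{(\bom\up{0},\tau)}\up{\Lambda}$, viewed as an Anderson-type Hamiltonian whose effective ``free part'' $\widetilde{H}_0 := H_0 + \tau u$ is a bounded, compactly supported perturbation of $H_0$, and whose random potential is the one-site-missing sum $\sum_{j \in \widetilde{\Lambda},\, j \neq 0} \omega_j u_j$. The structural modification is to choose the partition of unity used in the trace-class expansion so that its index set avoids the origin, ensuring that spectral averaging is never performed against the fixed variable $\omega_0 = \tau$.

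Concretely, set $e := (1, 0, \ldots, 0)$. Then $0 \notin 2\Z^d + e$, and because $L \in 2\N$ this persists on the torus $\Lambda = \Lambda_L(0)$. Use the partition $\sum_{j \in (2\Z^d + e) \cap \widetilde{\Lambda}} \chi^{(2)}_j = 1$; since $\delta_- \ge 2$ gives $u_j \ge u_- \chi_{\Lambda_2(j)}$ for every $j$, the weights $\hat{\chi}^{(2)}_j := u_j^{-1/2}\chi^{(2)}_j \le u_-^{-1/2}\chi^{(2)}_j$ are uniformly bounded. Fix $c_\tau \ge 1 + \max\{-\tau, 0\}$ large enough that $c_\tau + \widetilde{H}_0\up{\Lambda} \ge 1$ uniformly in $\Lambda$. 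Since $H_{(\bom\up{0},\tau)}\up{\Lambda} \ge \widetilde{H}_0\up{\Lambda}$ (the remaining $\omega_j$ being non-negative), the convexity/Jensen step of \eqref{trPT1}, with the same exponent scheme as in Lemma~\ref{thmWsmall}(ii), yields
\begin{align*}
\tr P_{(\bom\up{0},\tau)}\up{\Lambda}(I) \le (c_\tau + E_0)^{2p} \!\!\!\!\! \sum_{j, k \in (2\Z^d + e) \cap \widetilde{\Lambda}} \!\!\!\!\! \tr\bigl\{\sqrt{u_k}\, P_{(\bom\up{0},\tau)}\up{\Lambda}(I)\, \sqrt{u_j}\, \hat{\chi}^{(2)}_j\, \bigl(c_\tau + \widetilde{H}_0\up{\Lambda}\bigr)^{-2p}\, \hat{\chi}^{(2)}_k\bigr\}.
\end{align*}
Since neither $j$ nor $k$ equals $0$, the spectral averaging estimate \eqref{sa1} with $W = u_j$ applies term-by-term and produces the factor $\rho_+ \abs{I}$ upon taking $\E_{\bom\up{0}}$.

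The remaining deterministic estimate on $\sum_{j,k} \norm{\hat{\chi}^{(2)}_j\, (c_\tau + \widetilde{H}_0\up{\Lambda})^{-2p}\, \hat{\chi}^{(2)}_k}_1$ is handled by the Combes-Thomas decay and local trace bounds \eqref{det1}--\eqref{longest}, applied now to $\widetilde{H}_0$ instead of $H_0$. Since $\tau u$ is a bounded, compactly supported perturbation, both the Combes-Thomas rate (governed by $\dist(-c_\tau, \sigma(\widetilde{H}_0)) \ge 1$) and the local trace bound \eqref{trtheta} for $\chi^{(2)}_0 (c_\tau + \widetilde{H}_0\up{\Lambda})^{-p} \chi^{(2)}_0$ persist with $\Lambda$-independent constants depending only on $d$, $V_{\mathrm{per}}$, and $\tau$. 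Assembling the pieces produces $\widetilde{K}_W \le C_{d,V_{\mathrm{per}},\tau}(1 + E_0)^{2[[\tfrac{d}{4}]]}$, matching \eqref{KW123}; the bare existence statement \eqref{Wegnersans0} without the hypothesis $\delta_- \ge 2$ follows from the same template, invoking the covering-free Wegner scheme of \cite{CHK2} applied to the one-site-missing operator $\widetilde{H}_0 + \sum_{j \neq 0} \omega_j u_j$.

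The only real obstacle is verifying that the deterministic Combes-Thomas and local trace bounds used in Lemma~\ref{thmWsmall} remain valid under the replacement $H_0 \to \widetilde{H}_0 = H_0 + \tau u$, uniformly in $\Lambda$. This is straightforward: the shift $c_\tau$ restores positivity, and since $\tau u$ has compact support, the required exponential decay rate and trace norms survive, with constants that absorb all of the $\tau$-dependence.
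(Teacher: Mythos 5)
Your treatment of the $\delta_- \ge 2$ case is essentially the paper's argument, and your implementation is cleaner than the paper's telegraphic ``apply the proof of Lemma~\ref{thmWsmall}(ii)'': shifting the tiling to boxes of side $2$ centered at $2\Z^d + e$ so that the index $0$ never appears is exactly the right way to make the covering \eqref{cover2119} operational, and the Jensen step with $H_{(\bom\up{0},\tau)}\up{\Lambda} \ge \widetilde{H}_0\up{\Lambda} := H_0\up{\Lambda} + \tau u$, followed by Combes--Thomas and local trace bounds for the shifted operator $c_\tau + \widetilde{H}_0\up{\Lambda}$, all go through with $\Lambda$-uniform constants since $\tau u$ is bounded and compactly supported.

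However, there is a genuine gap in the last sentence, where you dispose of the general (covering-free) first assertion of the lemma. The covering-free Wegner scheme of \cite{CHK2} hinges on the positivity estimate $P(\tilde{I})\,\tilde{V}\,P(\tilde{I}) \ge c\, P(\tilde{I})$ at low energy, which is supplied by \cite[Proposition~1.3]{CHK1} and proved via unique continuation \emph{for periodic} $H$ and $\tilde{V}$. Your $\widetilde{H}_0 = H_0 + \tau u$ is \emph{not} periodic (it has a single localized impurity $\tau u$), so that proposition does not apply to it, and ``compactly supported perturbation'' is not enough to salvage the strictly positive constant. This is precisely why the paper does not use $H_0 + \tau u$ in the covering-free part. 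Instead it replaces the background by the $2\Z^d$-periodic operator $H_1 := H_0 + \tau\sum_{j\in 2\Z^d}u_j = -\Delta + V_{\mathrm{per}}\up{2}$, writes $H_{(\bom\up{0},\tau)} = H_1 + V_{\bom\up{0}}(\tau)$ with a random part supported on $j \ne 0$, and tests positivity against the likewise $2\Z^d$-periodic $\tilde{V}_1 = \sum_{j\in e_1 + 2\Z^d}u_j \le \tilde{V}_{0^\perp} = \sum_{j\ne 0}u_j$, so that \cite[Proposition~1.3]{CHK1} applies verbatim. To complete your proof for general Anderson Hamiltonians you would need to adopt this periodic background in place of $\widetilde{H}_0$; the two choices then coexist, one ($H_1$) for the unique-continuation route and the other ($\widetilde{H}_0$) for the quantitative $\delta_-\ge 2$ route, and there is no contradiction because only the latter uses the monotonicity $H_{(\bom\up{0},\tau)} \ge \widetilde{H}_0$.

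One smaller remark: the exponent you quote, $(1+E_0)^{2[[d/4]]}$, matches \eqref{KW123} and \eqref{Q12} as stated, but tracking the constants through the proof of Lemma~\ref{thmWsmall}(ii) with $Q_1 = (1+tE_0)^{2p}$, $p = 2[[\frac d4]]$ and $t=1$ actually gives $(1+E_0)^{4[[d/4]]}$; the same factor-of-two slip is present in the paper and is harmless for the applications, but worth noting.
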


\begin{proof}
We will show that  \cite[Proof of Theorem~1.3]{CHK2} can be modified to yield the proposition. To do so, we introduce the background potential
\begin{equation}
H_1 := H_0 + \tau \sum_{j\in 2\Z^d} u_j = -\Delta + V_{\mathrm{per}}\up{2},
\end{equation}
where   $V_{\mathrm{per}}\up{2}= V_{\mathrm{per}}+  \tau \sum_{j\in 2\Z^d} u_j$ 
 is a $2\Z^d$-periodic potential.
It follows that
\begin{equation}
H_{(\bom\up{0}, \tau)} = H_1 + V_{\bom\up{0}}({\tau}), \quad \text{with} \quad  V_{\bom\up{0}}({\tau}):= \sum_{j\in (2\Z)^d\setminus\{0\}} (\omega_j-\tau) u_j + \sum_{j\in  \Z\setminus(2\Z)^d} \omega_j u_j .
\end{equation}
The main point is that the single-site potential $u_0=u$ does not appear in the sum, but all the other  $u_j$'s appear with a random coefficient.

To prove \eq{Wegnersans0} with no conditions on $ \delta_-$, we proceed as in \cite[Section~2]{CHK2}.  We take an interval  $I\subset [0,E_0]$,    write $\tilde{I}= [0, E_0 +1]$; $I$ and $\tilde{I}$ replace the intervals $\Delta$ and $\tilde{\Delta}$ in \cite{CHK2}. The potential $V_{\Lambda}$ in \cite[Eq.~(2.7)]{CHK2} is replaced by $ V_{\bom\up{0}}\up{\Lambda}({\tau})$, which only involves the random variables $\bom\up{0}$.  As a consequence, the sum in \cite[Eq.~(2.10)]{CHK2} runs over indices $i,j\in\widetilde{\Lambda}\setminus\{0\}$. The spectral averaging in \cite[Eq.~(2.13)]{CHK2} can thus be performed with respect to the random variables $ {\bom\up{0}} $. Similarly for \cite[Eq.~(2.18)]{CHK2}, since $\tilde{K}(n)_{i_1,j_n}$ of \cite[Eq.~(2.17)]{CHK2} is now constructed only with the single-site potentials  $u_j$'s present in $ V_{\bom\up{0}}\up{\Lambda}({\tau})$, that is, $u_j$ with  $j\in \widetilde{\Lambda}\setminus\{0\}$. We thus get the analog of \cite[Eq.~(2.20)]{CHK2}, with $M_0=M_\rho + \abs{\tau}$, namely, with $P_1(B)=\chi_{B}(H_1)$,
\beq
 \E_{\bom\up{0}}\set{ \tr\set{ P_{(\bom\up{0}, \tau)}\up{\Lambda} (I)  P_1\up{\Lambda}(\R \setminus \tilde{I})}} \le K_1  \rho_+ \abs{I} \abs{\Lambda} ,
\eeq
for an appropriate constant $K_1$.

It remains to bound $ \E_{\bom\up{0}}\set{ \tr\set{ P_{(\bom\up{0}, \tau)}\up{\Lambda} (I)  P_1\up{\Lambda}(\tilde{I})}}$.
For this purpose, we set
\beq
\tilde{V}_1 = \sum_{j\in  (\e_1+2\Z^d)}  u_j,
\eeq
where $\e_1=(1,0,0,\ldots,0)\notin 2\Z^d$, 
 we use $H_1$ and $\tilde{V}_1\up{\Lambda}$, the restriction of $\tilde{V}_1$ to $\Lambda$,
 instead of $H_0$ and $\tilde{V}_\Lambda = \sum_{j\in \Z^d\cap\Lambda}  u_j$, in  the crucial estimate \cite[Eq.~(2.1)]{CHK2}) . Since $H_1$ and $\tilde{V}_1$ are both $2\Z^d$-periodic, we have, by \cite[Proposition~1.3]{CHK1} (see also \cite[Theorem~2.1]{CHK2}) the equivalent of (\cite[Eq.~(2.1)]{CHK2}),
\begin{equation}
 P\up{\Lambda}_1(\tilde{I})  \tilde{V}\up{\Lambda}_1 P\up{\Lambda}_1(\tilde{I}) \ge C(E_0, u, V_{\mathrm{per}},\tau)P_1\up{\Lambda}(\tilde{I}) ,
\end{equation}
with a constant $C(E_0, u, V_{\mathrm{per}},\tau) >0 $.
Since 
\begin{equation}
 \tilde{V}_1 \le \tilde{V}_{0^\perp} :=  \sum_{j\in  \Z^d\setminus\{0\}}  u_j ,
\end{equation}
it follows that
\begin{equation}
 P\up{\Lambda}_1(\tilde{I})  \tilde{V}_{0^\perp}\up{\Lambda} P\up{\Lambda}_1(\tilde{I}) \ge C(E_0, u, V_{\mathrm{per}},\tau)P_1\up{\Lambda}(\tilde{I}) .
\end{equation}
As a consequence, we get \cite[Eq.~(2.21)]{CHK2} with $\tilde{V}_\Lambda$ replaced by $ \tilde{V}_{0^\perp}\up{\Lambda}$, and hence we obtain the analogous of  \cite[Eq.~(2.31)]{CHK2}: 
\beq
\E_{\bom\up{0}}\set{\tr \set{ P\up{\Lambda}_1(\tilde{I})  \tilde{V}_{0^\perp}\up{\Lambda}  P_{(\bom\up{0}, \tau)}\up{\Lambda} (I) \tilde{V}_{0^\perp}\up{\Lambda} P\up{\Lambda}_1(\tilde{I}) }}
  \le K_2   \rho_+ \abs{I} \abs{\Lambda},
  \eeq
  for an appropriate constant $K_2$.
  
  The desired \eq{Wegnersans0} now follows, as the analogue of \cite[Eq.~(2.32)]{CHK2}.
  
   If   $ \delta_- \ge 2$, we  have
   \begin{equation}\label{cover2119}
 \sum_{j\in \pa{\pa{j_0 +  \Z^d}\setminus \{0\}}\cap \Lambda}u_j \ge  u_- \chi_\Lambda, \end{equation}
   so we can apply the proof of  Lemma~\ref{thmWsmall} (ii) to the random operator $H_{\bom\up{0}, \tau}$ getting \eq{Wegnersans0} with \eq{KW123}.
 \end{proof}

\section{The Minami estimate}\label{secMinami}

Theorem~\ref{thmminami}  follows by combining Lemma~\ref{thmWsmall}(i) and the following lemma.

\begin{lemma} \label{lemMinami}  Let  $H_\bom$ be an Anderson Hamiltonian   with a uniform-like distribution $\mu$.  Let $E_0>0$ and suppose the Wegner estimate \eq{Wegner} holds for all
 intervals $I \subset [0,E_0]$  with a constant $K_W$ such that
 \beq\label{condK}
2 K_W {U_+} \tfrac{\rho_+}{\rho_-} \le 1.
\eeq 
Then there exists a constant $K_M=K_M(u,\rho_\pm,M_\rho,E_0,d)$ such that the Minami estimate \eq{Minami} holds for all intervals  $I\subset [0,E_0]$.

If $ \delta_- \ge 2$, we have the estimate
\beq\label{KMest}
K_M \le  C_{d,V_{\mathrm{per}},M_\rho} \pa{1 +  E_0}^{4[[\frac d 4]]}.
\eeq
\end{lemma}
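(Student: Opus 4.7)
My plan is to carry out the continuum analogue of the Minami-type argument of Theorem~\ref{thmM}, with Lemma~\ref{lemkey} playing the role that Lemma~\ref{lemkey2} played in the rank-one setting. I first apply the trace-class expansion \eq{traceclassexp}-\eq{uniftracesum2} to one copy of $\tr P_\bom^{(\Lambda)}(I)$:
\begin{equation*}
\tr P_\bom^{(\Lambda)}(I)\bigl(\tr P_\bom^{(\Lambda)}(I)-1\bigr)
\le Q_1 \sum_{j\in\widetilde\Lambda}
\tr\bigl\{\sqrt{u_j}\,P_\bom^{(\Lambda)}(I)\sqrt{u_j}\,S_j^{(\Lambda)}\bigr\}
\bigl(\tr P_\bom^{(\Lambda)}(I)-1\bigr).
\end{equation*}
To decouple $\omega_j$ from the second factor I apply Lemma~\ref{lemkey} with $s=\omega_j$, $t=M_\rho$, $W=u_j$ and base Hamiltonian $H_{(\bom_j^\perp,0)}^{(\Lambda)}$, producing
\begin{equation*}
\tr P_\bom^{(\Lambda)}(I)-1 \;\le\; \xi_j(b)+\tr P_{(\bom_j^\perp,M_\rho)}^{(\Lambda)}(I),
\qquad \xi_j(b):=\tr P_{(\bom_j^\perp,0)}^{(\Lambda)}(]-\infty,b])-\tr P_{(\bom_j^\perp,M_\rho)}^{(\Lambda)}(]-\infty,b])\ge 0,
\end{equation*}
where $\xi_j(b)$ is independent of $\omega_j$.

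With the remaining $\omega_j$-dependence confined to the first factor, the spectral averaging estimate \eq{sa1} with $S=S_j^{(\Lambda)}$ gives $\E_{\omega_j}\tr\{\sqrt{u_j}P_\bom^{(\Lambda)}(I)\sqrt{u_j}S_j^{(\Lambda)}\}\le Q_2\rho_+|I|$, so that, with $K_W=Q_1Q_2$,
\begin{equation*}
\E_\bom\bigl[\tr P_\bom^{(\Lambda)}(I)(\tr P_\bom^{(\Lambda)}(I)-1)\bigr]
\le K_W\,\rho_+|I|\sum_{j\in\widetilde\Lambda}
\E_{\bom_j^\perp}\bigl[\xi_j(b)+\tr P_{(\bom_j^\perp,M_\rho)}^{(\Lambda)}(I)\bigr].
\end{equation*}
The Wegner estimate with $\omega_j$ frozen at $M_\rho$, furnished by Lemma~\ref{lemWegner0}, bounds the second summand by $\widetilde K_W\rho_+|I||\Lambda|$ uniformly in the site, contributing exactly the target $(\rho_+|I||\Lambda|)^2$ scaling after summing over $j\in\widetilde\Lambda$.

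The step I expect to be the main obstacle is the spectral-shift contribution $\sum_j\E_{\bom_j^\perp}[\xi_j(b)]$, the continuum substitute for the a priori bound $\xi_j\le 1$ used in \eq{rankone}. The naive pointwise estimate $\xi_j(b)\le\tr P_{(\bom_j^\perp,0)}^{(\Lambda)}((b-M_\rho,b])$ coming from min-max yields only $\widetilde K_W\rho_+M_\rho|\Lambda|$ per site, and so misses the required factor $|I|/M_\rho$. My plan for this step is to invoke the Birman--Solomyak representation
\begin{equation*}
\int_{I}\xi_j(\lambda)\,d\lambda
= \int_0^{M_\rho}d\tau\;\tr\bigl\{u_j\,\chi_I(H_{(\bom_j^\perp,\tau)}^{(\Lambda)})\bigr\},
\end{equation*}
and then exploit the uniform-like lower bound $\rho\ge\rho_-\chi_{[0,M_\rho]}$ to replace $\int_0^{M_\rho}d\tau$ by $\rho_-^{-1}\E_{\omega_j}$; feeding the Wegner estimate \eq{Wegner} back into the resulting $\bom$-averaged trace should express $\sum_j\E_{\bom_j^\perp}[\xi_j(b)]$ as the sum of a term proportional to $K_WU_+\rho_+\rho_-^{-1}|I||\Lambda|^2$ and a term proportional to $K_WU_+\rho_+/\rho_-$ times $\E_\bom[\tr P_\bom^{(\Lambda)}(I)(\tr P_\bom^{(\Lambda)}(I)-1)]$ itself. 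The smallness hypothesis \eq{KWrhocond}, $2K_WU_+\rho_+/\rho_-\le 1$, is exactly the threshold that makes this self-referential inequality closable, yielding \eq{Minami} with $K_M=C_{u,\rho_\pm,M_\rho,E_0,d}K_W$; the refinement \eq{KMest} under $\delta_-\ge 2$ then follows by substituting the explicit bounds \eq{KW12} from Lemma~\ref{thmWsmall}(ii) and \eq{KW123} from Lemma~\ref{lemWegner0} into this constant.
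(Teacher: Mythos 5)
Your proposal correctly identifies the overall architecture of the argument -- trace-class expansion \eq{traceclassexp}, decoupling $\omega_j$ via a Lemma~\ref{lemkey}-type inequality with $\tau=M_\rho$, spectral averaging \eq{sa1}, the Wegner estimate with $\omega_j$ frozen from Lemma~\ref{lemWegner0}, and the recognition that the spectral-shift term is the crux and that \eq{condK} must enter there. But there are two intertwined gaps in the spectral-shift step that the paper resolves differently, and that I do not see closing along your route.

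First, you drop the $-1$ after applying Lemma~\ref{lemkey}: you write $\tr P_\bom^{(\Lambda)}(I)-1 \le \xi_j(b)+\tr P_{(\bom_j^\perp,M_\rho)}^{(\Lambda)}(I)$, discarding the $-1$ that Lemma~\ref{lemkey} actually hands you on the right. This is fatal, because $\xi_j(b)$ depends only on the right endpoint $b$, \emph{not} on the length $\abs{I}$; as $a\uparrow b$ the quantity $\E_{\bom_j^\perp}[\xi_j(b)]$ stays fixed while your target $\rho_+\abs{I}\abs{\Lambda}$ per-site bound tends to zero. No manipulation -- Birman--Solomyak, self-referential bootstrap, or otherwise -- can make an $\abs{I}$-independent quantity be $O(\abs{I})$. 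The paper keeps the $-1$: the quantity controlled is $\Phi=(\xi-1)+\tr P_{(\bom_j^\perp,\tau)}^{(\Lambda)}(I_\delta)$, and the entire role of the smallness condition \eq{condK} is precisely to ensure $\E_{\bom_j^\perp}\{\xi\}\le 1$, so that $\E_{\bom_j^\perp}\{\xi-1\}\le 0$ and the spectral-shift contribution is absorbed \emph{entirely} by the $-1$, leaving only the $\tr P_{(\bom_j^\perp,\tau)}^{(\Lambda)}(I_\delta)$ term which Lemma~\ref{lemWegner0} bounds. There is no self-referential inequality to close.

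Second, your Birman--Solomyak step controls $\int_I\xi_j(\lambda)\,\d\lambda$, not the pointwise value $\xi_j(b)$, and since the spectral shift $\xi_j(\cdot)$ is neither continuous nor monotone there is no passage from the integrated to the pointwise quantity. The paper circumvents this by working not with the sharp cutoff $\chi_{]-\infty,b]}$ but with a smooth nonincreasing $h_b\in C^\infty$ pinched between $\chi_{]-\infty,b]}$ and $\chi_{]-\infty,b+\delta]}$, with $\abs{h_b'}\le 2/\delta$; Lemma~\ref{lemkey} is reworked in this smoothed form (compare \eq{decomp2}), and then the Birman--Solomyak formula applies directly to $\xi_{b,\tau}^{(\Lambda)}=\tr h_b(H_{(\bom_j^\perp,0)}^{(\Lambda)})-\tr h_b(H_{(\bom_j^\perp,\tau)}^{(\Lambda)})$ as an integral over the coupling variable $s\in[0,\tau]$ of $-\tr\{\sqrt{u_j}h_b'(H_{(\bom_j^\perp,s)}^{(\Lambda)})\sqrt{u_j}\}$. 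Bounding $-h_b'\le\tfrac{2}{\delta}\chi_{]b,b+\delta]}$, using $1\le\rho/\rho_-$ on $[0,M_\rho]$ to convert $\int_0^{M_\rho}\d s$ into $\rho_-^{-1}\E_{\omega_j}$, averaging also over $\bom_j^\perp$, symmetrizing over $j$ (so that $\sum_j u_j\le U_+$ appears), and then applying Wegner to the interval $]b,b+\delta]$ of length $\delta$ cancels the $2/\delta$ and yields exactly $\E_{\bom_k^\perp}\{\xi_{b,\tau}^{(\Lambda)}\}\le 2K_W U_+\rho_+/\rho_-\le 1$. In short: you need both the smooth regularization of $\chi_{]-\infty,b]}$ to make Birman--Solomyak give a usable pointwise-in-$b$ bound, and the retained $-1$ to absorb the resulting $O(1)$ quantity; without either the argument does not close.
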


\begin{proof} Let $\Lambda$ be a finite box. 
It follows from  \eq{Wegner}
that $\E_\bom\set{ \tr P_\bom^{(\Lambda)}(\{c\})}=0$ for any $c \in \R$. Thus    we may
take all bounded  intervals to be of the form $]a,b]$. For such an interval we  modify Lemma~\ref{lemkey} as follows:
Given $\delta >0$ small, we pick a nonincreasing function  ${h} \in C^\infty(\R)$, such that  ${h}(t)=1 $ for $ t\le 0$ and ${h}(t)= 0$ for $t \ge \delta$.  Note that  $0\le {h}\le 1$, $h^\pr \le 0$,  $\supp {h}^\prime \subset [0,\delta]$, $ \int_\R \di t\,  {h}^\pr(t) = - 1$, and we can choose  ${h}$  so $\abs{{h}^\prime} \le \frac 2 \delta$.  Given  $c \in \R$, we set
${h}_c(t)={h}(t-c)$, and note that   ${h}_{c-\delta} \le \chi_{]-\infty,c]} \le  h_c$.
We let $I=]a,b]$, $I_\delta=]a-\delta,b+ \delta]$. 
Using $h$, we rework \eqref{decomp} in the following way.  Given $j \in\widetilde{\Lambda}$ and  $\tau \ge M_\rho$, we have
\begin{align}\notag
& \tr  P\up{\Lambda}_{\bom}\pa{I} \le   \tr {h}_b\pa{H\up{\Lambda}_{\bom}} - \tr {h}_{a- \delta}\pa{H\up{\Lambda}_{\bom}}
\\  \label{decomp2}
&\quad  \le  \set{\tr {h}_b\pa{H\up{\Lambda}_{(\bom_j^\perp,\omega_j=0)}} -\tr {h}_b\pa{H\up{\Lambda}_{(\bom_j^\perp,\omega_j=\tau)}}} \\ \notag
& \qquad \qquad  \qquad \qquad +\set{\tr {h}_b\pa{H\up{\Lambda}_{(\bom_j^\perp,\omega_j=\tau)}} -  \tr {h}_{a- \delta}\pa{H\up{\Lambda}_{(\bom_j^\perp,\omega_j=\tau)}} }\\ 
\notag
& \quad \le  \set{\tr {h}_b\pa{H\up{\Lambda}_{(\bom_j^\perp,\omega_j=0)}} -\tr {h}_b\pa{H\up{\Lambda}_{(\bom_j^\perp,\omega_j=\tau)}}}  +\tr  P\up{\Lambda}_{(\bom_j^\perp,\omega_j=\tau)}\pa{I_\delta}.
\end{align}

We now fix $\tau=M_\rho$ and  use the Birman-Solomyak formula (cf.\  \cite{Si2}) as in \cite[Eqs.~(7)-(8)]{CHK3}, 
plus the hypothesis \eq{unifdist}, obtaining
\begin{align}\notag
 \xi\up{\Lambda}_{b,\tau} (\bom_j^\perp) &:={\tr {h}_b\pa{H\up{\Lambda}_{(\bom_j^\perp,\omega_j=0)}} -\tr {h}_b\pa{H\up{\Lambda}_{(\bom_j^\perp,\omega_j=\tau)}}} \\ \label{xib}
 &  = - \int_0^\tau \di s \, \tr \set{\sqrt{u_j}{h}^\pr_b\pa{H\up{\Lambda}_{(\bom_j^\perp,\omega_j=s)}}\sqrt{u_j}}\\ \notag
 & \le \tfrac 2 \delta  \int_0^\tau \di s \, \tr \set{\sqrt{u_j} P\up{\Lambda}_{(\bom_j^\perp,\omega_j=s)}\pa{]b,b+ \delta]}\sqrt{u_j}}\\
  &\le \tfrac 2 {\delta \rho_-} \int \di s \, \rho(s) \tr \set{\sqrt{u_j} P\up{\Lambda}_{(\bom_j^\perp,\omega_j=s)}\pa{]b,b+ \delta]}\sqrt{u_j}}.\notag
\end{align}
Note that $\xi\up{\Lambda}_{b,\tau} (\bom_j^\perp)$ is closely related to the spectral shift function associated to the pair $H\up{\Lambda}_{(\bom_j^\perp,\omega_j=0)}$ and $H\up{\Lambda}_{(\bom_j^\perp,\omega_j=\tau)}$.

Now let us fix $E_0>0$,  let $I=]a,b] \subset [0,E_0[$, and consider $\delta>0$ such that $b+\delta \le E_0$, so $I_\delta \subset [0,E_0]$.  If $\tr P_\bom^{(\Lambda)}(I)\ge 1$, it follows from   \eq{traceclassexp}  that
\begin{align}\label{bddecomp}
& \pa{\tr P_\bom^{(\Lambda)}(I)}\pa{\tr P_\bom^{(\Lambda)}(I)-1 }\le  {Q_1}  \sum_{j \in \widetilde{\Lambda}} 
\tr \set{\sqrt{u_j}  P_\bom^{(\Lambda)}(I)\sqrt{u_j } \,S_j^{(\Lambda)} } \pa{\tr P_\bom^{(\Lambda)}(I)-1 },
\end{align}
so, using \eq{decomp2}  and  \eq{xib}, we get
\begin{align}
& \pa{\tr P_\bom^{(\Lambda)}(I)}\pa{\tr P_\bom^{(\Lambda)}(I)-1 }\le  {Q_1}  \sum_{j \in \widetilde{\Lambda}} \set{
\pa{\tr \set{\sqrt{u_j}  P_\bom^{(\Lambda)}(I)\sqrt{u_j } \,S_j^{(\Lambda)} } } \Phi\up{\Lambda}_{b,\tau} (\bom_j^\perp) }, \label{decompMinami}
\end{align}
where for each $j \in \widetilde{\Lambda}$
\beq
\Phi\up{\Lambda}_{b,\tau} (\bom_j^\perp):=  \pa{\xi\up{\Lambda}_{b,\tau} (\bom_j^\perp) -1 }  +   \tr  P\up{\Lambda}_{(\bom_j^\perp,\tau)}\pa{I_\delta}   
\eeq
is independent of the random variable $\omega_j$.  If  $\tr P_\bom^{(\Lambda)}(I)< 1$ , we have $ P_\bom^{(\Lambda)}(I)=0$, and hence we also have  \eq{decompMinami}.

Thus, if we now take the expectation  in  \eq{decompMinami}, use \eq{sa1} and \eq{uniftracesum2}, we get
\begin{align}\notag
\E\set{  \pa{\tr P_\bom^{(\Lambda)}(I)}\pa{\tr P_\bom^{(\Lambda)}(I)-1 }}&\le Q_1Q_2 \rho_+ \abs{I}  \sum_{j \in \widetilde{\Lambda}}  \E_{\bom_j^\perp} \set{\Phi\up{\Lambda}_{b,\tau} (\bom_j^\perp)}\\
&= Q_1Q_2 \rho_+ \abs{I} \abs{\Lambda}   \E_{\bom_{k}^\perp} \set{\Phi\up{\Lambda}_{b,\tau} (\bom_{k}^\perp)} \label{Q1Q2}
\end{align}
for any $k   \in \widetilde{\Lambda}$.

We will now  estimate $\E_{\bom_{k}^\perp}\set{ \Phi\up{\Lambda}_{b,\tau} (\bom_{k}^\perp)}$.   It follows from \eq{xib} and \eq{Wegner}  that, if we have \eq{condK}, 
\begin{align}\notag
 &\E_{\bom_{k}^\perp}\set{\xi\up{\Lambda}_{b,\tau} (\bom_k^\perp) }\le  \tfrac 2 {\delta \rho_-}   \E_{\bom} \set{\tr \set{\sqrt{u_k} P\up{\Lambda}_{\bom}\pa{]b,b+ \delta]}\sqrt{u_k}}}
 \\ \label{avessf}
 &\qquad =  \tfrac 2 {\delta \rho_- \abs{\Lambda}}   \E_{\bom} \set{  \sum_{j \in \widetilde{\Lambda}} \tr \set{\sqrt{u_j} P\up{\Lambda}_{(\bom)}\pa{]b,b+ \delta]}\sqrt{u_j}}}\\
 &\qquad \le  \tfrac {2  U_+}{\delta \rho_-\abs{\Lambda}}  \E_{\bom} \set{\tr  P\up{\Lambda}_{\bom}\pa{]b,b+ \delta]}}\le   2 K_W U_+\tfrac { \rho_+ }  { \rho_-}\le 1. \notag
\end{align}
In this case, we have
\begin{equation} \label{Wegnersans0k}
\E_{\bom_{k}^\perp}\set{ \Phi\up{\Lambda}_{b,\tau} (\bom_{k}^\perp)} \le  \E_{\bom_{k}^\perp}\set{ \tr  P\up{\Lambda}_{(\bom_{k}^\perp,\tau)}\pa{I_\delta}  } \le \widetilde{K}_W \rho_+(\abs{I}+ 2\delta)\abs{ \Lambda} ,
\end{equation}
where we used  Lemma~\ref{lemWegner0}, where $\widetilde{K}_W=\widetilde{K}_W (d,u,V_{\mathrm{per}},E_0, M_\rho) $.

Combining \eq{Q1Q2} and \eq{Wegnersans0k} we get
\begin{align}
\E\set{  \pa{\tr P_\bom^{(\Lambda)}(I)}\pa{\tr P_\bom^{(\Lambda)}(I)-1 }}&\le 
Q_1Q_2  \widetilde{K}_W \abs{I}  (\abs{I}+ 2\delta)\pa{\rho_+\abs{ \Lambda}}^2.
\end{align}
Letting  $\delta\to 0$ we get \eq{Minami} with $K_M= Q_1Q_2  \widetilde{K}_W$.

If $ \delta_- \ge 2$, the estimate
\eq{KMest}  follows from \eq{Q12} and \eq{KW123}.
\end{proof}





\section{Poisson statistics}
\label{sectPoisson}

In this section we prove  Theorem~\ref{thmPoisson}(a). 

 Let $H_\bom$ be an Anderson Hamiltonian, and suppose  $\mathcal{I} $ is  an open interval
such  that for all large boxes $\Lambda$ the estimate \eq{Minami} holds for any
interval $I \subset \mathcal{I}$ with $\abs{I}\le \delta_0$, for some $\delta_0 >0$, with some constant $K_M$. (We will assume that a given $\Lambda$ is large enough.)  Recall we have \eq{Wegner} for these intervals with some constant $K_W$. 

  Let  $\cE \in\mathcal{I}\cap  \Xi^{\text{CL}}$ be such that the IDS $N(E)$ is differentiable at
$\cE$ with $n(\cE):=N^\pr(\cE) >0$.  
It follows from \eq{Wegner} that  we then have 
 \beq  \label{intn}
0 < n(\cE)\le K_W \rho_+.
 \eeq 
We fix an open interval $\mathcal{I}_1$  such that  $\cE \in \mathcal{I}_1\subset  \overline{\mathcal{I}_1}\subset \mathcal{I}\cap  \Xi^{\text{CL}}$. Note that for each  bounded  Borel set $B \subset \R$ there exists $c_B=c_{B,\cE, \mathcal{I}_1} <\infty$ such that $\cE + \abs{\Lambda}^{-1} B \subset \mathcal{I}_1$ and  
$\abs{\cE + \abs{\Lambda}^{-1} B} \le \delta_0$ if $\abs{\Lambda}\ge c_B$.
 The point process  $\xi_{\bom}^{(\Lambda)}=\xi_{\cE,\bom}^{(\Lambda)}$ (cf.\  \eq{defxi0})
has an  intensity measure given by 
$\nu^{(\Lambda)}(B):=\E\,  \xi_{\bom}^{(\Lambda)}(B)$ for a Borel set $B \subset \R$;  
it follows from \eq{Wegner} that,
\beq \label{int0}
\nu^{(\Lambda)}(B)\le K_W \rho_+ \abs{B} \quad\text{for all $\Lambda$ with    $\abs{\Lambda}\ge  c_B$}.
\eeq 
 
We start with  the same general strategy used in \cite{Mo,Mi}.
We fix $a \in ]0,1[$, and divide $\Lambda=\Lambda_L(0)$ into $M_L $ boxes   $\Lambda^{(m)}=\Lambda_\ell(k_m)$ of side $\ell \approx L^a$, $\ell \in 2\N$, centered at  $k_m \in \Lambda \cap (2\Z^d)$; note $M_L= \frac {\abs{\Lambda_L}}{\abs{\Lambda_\ell}} \approx L^{(1 -a)d}$.
For each $m=1,2,\ldots,M_L$ we define point processes
\beq \label{defxim}
\xi_{\bom}^{(\Lambda,m)}(B):= \tr P_{\bom}^{(\Lambda^{(m)})}(\cE + \abs{\Lambda}^{-1} B) \quad \text{for a Borel set $B \subset \R$}.
\eeq
Note that  $\set{\xi_{\bom}^{(\Lambda,m)}}_{m=1,2,\ldots,M_L }$ are independent, identically distributed point processes, each with intensity measure (using  \eq{Wegner})
\beq \label{int1}
\nu^{(\Lambda,m)}(B):=\E\,  \xi_{\bom}^{(\Lambda,m)}(B)\le K_W \rho_+ \abs{B}M_L^{-1} \quad\text{for all $\Lambda$ with    $\abs{\Lambda}\ge  c_B$}.\eeq
 We consider their superposition,  the point process
\beq
\widetilde{\xi}_{\bom}^{(\Lambda)}:= \sum_{m=1}^{M_L} \xi_{\bom}^{(\Lambda,m)},
\eeq
with intensity measure
\beq \label{int2}
\widetilde{\nu}^{(\Lambda)}(B):=\E\, \widetilde{\xi}_{\bom}^{(\Lambda)} (B)\le K_W \rho_+ \abs{B} \quad\text{for all $\Lambda$ with    $\abs{\Lambda}\ge  c_B$}.
\eeq
We will prove that  $\widetilde{\xi}_{\bom}^{(\Lambda)}\approx {\xi}_{\bom}^{(\Lambda)}$ as $L \to \infty$, and that $\widetilde{\xi}_{\bom}^{(\Lambda)}$ converges weakly, as $L \to \infty$, to the Poisson point process $\xi$  with intensity measure  $\nu(B):=\E\,  \xi(B)= n(\cE)\abs{B}$.   But here we must use different methods from  \cite{Mo,Mi}.  

So let $\theta_{\bom}^{(\Lambda)}=\theta_{\cE,\bom}^{(\Lambda)}$ be the random measure
defined in \eq{deftheta};
its  intensity measure is
\beq \label{thetaint}
\eta^{(\Lambda)}(B):=\E\,  \theta_{\bom}^{(\Lambda)}(B) = \abs{\Lambda} \eta(\cE + \abs{\Lambda}^{-1} B),
\eeq
where $\eta$ is the  density of states  measure, given in \eq{dsm}.
 It again  follows from \eq{Wegner} that 
\beq \label{int5}
\eta^{(\Lambda)}(B)\le K_W \rho_+ \abs{B}\quad\text{for all $\Lambda$ with    $\abs{\Lambda}\ge  c_B$}.
\eeq

We start with a lemma. Given a measure $\eta$ on $\R$,  we write $\eta(f):= \int_{\R} f \, \di \eta$ for  suitable functions $f$, say, $f \in \cF_{b,K}$, the collection of bounded Borel functions on $\R$ vanishing outside a compact interval. 
  It follows from \eq{defxi0}  that for  all $ f \in\cF_{b,K}$ we have
\beq
{\xi}_{\bom}^{(\Lambda)}(f)=  \tr f_{\Lambda}(H_{\bom}^{(\Lambda)}), \quad\text{where} \quad  f_{\Lambda}(E):= f\pa{ \abs{\Lambda}(E - \cE)}, 
    \eeq
with similar expressions for $\widetilde{\xi}_{\bom}^{(\Lambda)}(f)$,  $\xi_{\bom}^{(\Lambda,m)}(f)$, and $\theta_{\bom}^{(\Lambda)}(f).$

\begin{lemma} \label{lemEconv} For all  $ f \in \cF_{b,K}$ we have
\beq \label{limL}
\lim_{L \to \infty} \E \, \abs{{\xi}_{\bom}^{(\Lambda)}(f)-\widetilde{\xi}_{\bom}^{(\Lambda)} (f)}=0
\eeq
and
\beq \label{limL3}
\lim_{L \to \infty} \E \, \abs{{\xi}_{\bom}^{(\Lambda)}(f)-{\theta}_{\bom}^{(\Lambda)} (f)}=0.
\eeq
\end{lemma}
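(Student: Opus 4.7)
The strategy hinges on $\cE \in \Xi^{\text{CL}}$: the MSA bootstrap \cite{GKduke,GKsudec} provides faster-than-polynomial decay of eigenfunctions and of the spectral-projection kernels $\norm{\chi^{(1)}_x P_\bom(J) \chi^{(1)}_y}$, uniformly for small intervals $J \ni \cE$, on events of probability close to $1$. Combined with the Wegner estimate \eqref{Wegner} and the Minami estimate \eqref{Minami}, these will control every error term; and the uniform $\L^1$-bounds on ${\xi}_{\bom}^{(\Lambda)}(\abs{f})$, $\widetilde{\xi}_{\bom}^{(\Lambda)}(\abs{f})$ and ${\theta}_{\bom}^{(\Lambda)}(\abs{f})$ coming from \eqref{int0}, \eqref{int2} and \eqref{int5} will let us upgrade convergence in probability to convergence in $\L^1$.

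For \eqref{limL} I would fix a buffer scale $r_L$ with $1 \ll r_L \ll \ell=L^a$, say $r_L=(\log L)^2$, and classify each eigenfunction of $H_\bom^{(\Lambda)}$ or of any $H_\bom^{(\Lambda^{(m)})}$ having eigenvalue in the window $J_\Lambda := \cE + \abs{\Lambda}^{-1}\supp f$ as \emph{good} or \emph{bad} according to whether its localization center lies at distance $\ge r_L$ or $< r_L$ from the boundary of its sub-box. The MSA decay plus a geometric resolvent identity identify each good eigenvalue of $H_\bom^{(\Lambda)}$ with a unique eigenvalue of exactly one $H_\bom^{(\Lambda^{(m)})}$ in $J_\Lambda$, up to an eigenvalue shift of order $\e^{-c r_L}$, and conversely; uniqueness inside a sub-box is enforced by Minami \eqref{Minami}, which at scale $\ell$ gives probability $O((\ell/L)^{2d})$ of two eigenvalues in $J_\Lambda$, so summing over $M_L \approx L^{(1-a)d}$ sub-boxes yields a total $O(L^{(a-1)d}) \to 0$. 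The bad eigenvalues, whose centers lie in a boundary layer of total volume $\sim r_L L^d/\ell$, contribute $O(r_L/\ell)$ in expectation by Wegner applied locally, which also tends to $0$. A Wegner estimate on the $\e^{-cr_L}$-thickening of $J_\Lambda$ controls the mismatch caused by eigenvalues whose shifted position crosses $\partial J_\Lambda$.

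For \eqref{limL3} I would express the difference either through the resolvent difference $(H_\bom^{(\Lambda)}-z)^{-1} - \chi_\Lambda (H_\bom-z)^{-1}\chi_\Lambda$ via a Helffer--Sj\"ostrand representation after smoothing $f$, or by comparing the two eigenfunction expansions directly: eigenfunctions of $H_\bom$ whose eigenvalues lie in $J_\Lambda$ and whose centers sit at distance $\ge r_L$ from $\partial \Lambda$ pair up via MSA decay and a geometric resolvent identity with eigenvalues of $H_\bom^{(\Lambda)}$ carrying nearly unit $\chi_\Lambda$-mass, and vice versa; the remaining eigenfunctions have centers in a boundary shell of volume $O(r_L L^{d-1})$ and contribute $O(r_L/L) \to 0$ in expectation by Wegner. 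The main obstacle is the quantitative construction of the near-bijection on the good event: one must simultaneously calibrate the MSA kernel decay (to localize eigenfunctions), Minami \eqref{Minami} (to rule out accidental doubling inside a sub-box), and Wegner (to bound the boundary-layer mass and the near-edge eigenvalues) against the tiny window length $\abs{J_\Lambda}=O(\abs{\Lambda}^{-1})$. Once this matching is in place, both limits reduce to dominated convergence using the uniform $\L^1$-bounds.
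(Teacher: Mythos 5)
Your proposal takes a genuinely different route from the paper, so a comparison is useful before I flag the gap. The paper first reduces to $f\in C_K^\infty(\R)$ by an $\L^1$-density argument (justified by the same uniform intensity bounds \eqref{int0}, \eqref{int2}, \eqref{int5} you cite), then splits each comparison into a boundary-shell term, controlled directly by Wegner, and an interior term. The interior term is written via the Helffer--Sj\"ostrand formula \eqref{HS} as an integral of a resolvent difference; the geometric resolvent identity together with an iterated resolvent expansion (producing a trace-class factor through $p_d$-Schatten bounds as in \eqref{normpd}) reduces everything to a single off-diagonal resolvent norm $\norm{\chi_{\nabla\phi_0}R_\bom^{(\Lambda^{(m)})}(z)\chi_{\Lambda^{(m,\pr)}}}$, whose fractional moment decays subexponentially by Proposition~\ref{propMSA}. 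That is the only point where $\cE\in\Xi^{\text{CL}}$ enters, and it is a norm estimate, not an eigenfunction statement. Notably, the paper's proof of this lemma uses only Wegner and localization; the Minami estimate \eqref{Minami} plays no role here and only appears later, in the verification of \eqref{cond2}--\eqref{cond3}.

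You replace the Helffer--Sj\"ostrand computation by an eigenvalue-matching scheme: classify eigenvalues as good or bad by the location of a localization center, pair good eigenvalues of $H_\bom^{(\Lambda)}$ with eigenvalues of the sub-box operators up to a shift $\e^{-cr_L}$, and use Minami at scale $\ell$ together with Wegner on boundary layers to control the error. Your power counting is self-consistent ($O(L^{(a-1)d})$ from Minami, $O(r_L/\ell)$ from the boundary layer, and $O(L^d\e^{-cr_L})$ from the shift after multiplying by the Lipschitz constant of $f_\Lambda$), and the same ingredients would also handle \eqref{limL3}. However, there is a genuine gap at the heart of this route: the existence, injectivity, and approximate surjectivity of the pairing are asserted rather than constructed. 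Truncating an eigenfunction of $H_\bom^{(\Lambda)}$ to a sub-box gives only the \emph{existence} of a nearby eigenvalue of $H_\bom^{(\Lambda^{(m)})}$; that two big-box eigenvalues localized in the same sub-box do not get sent to the same sub-box eigenvalue, and that every good sub-box eigenvalue is actually hit, requires a separate quantitative argument (Minami at scale $\ell$ rules out sub-box doublets with small probability, but a bijection still has to be built on the complementary event). You also need a SULE-type statement assigning localization centers uniformly to \emph{all} eigenvalues in the window $J_\Lambda$ on a single high-probability event; this does not follow directly from the pairwise bound \eqref{MSAest} and needs the further input of \cite{GKduke,GKsudec}. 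The paper's route sidesteps all of this by never localizing eigenfunctions at all, and by not invoking Minami for this lemma; if you want to pursue the matching argument, the missing construction is where the real work lies.
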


\begin{proof}
In view of \eq{int0}, \eq{int2}, and \eq{int5}, it suffices to prove \eq{limL} and \eq{limL3} for  $ f \in C_{K}^\infty(\R)$, since  
 $ \set{f \in C_{K}^\infty(\R);  \ \supp f \subset J}$ is dense in $ \L^1(J,\di E)$ for any interval $J$.

So let $ f \in C_{K}^\infty(\R)$.  To prove   \eq{limL}, we set $\ell^\pr \approx \ell - \sqrt{\ell}$,  $\Lambda^{(m,\pr)}=\Lambda_{\ell^\pr}(k_m)$,  and  $\Lambda^{(m,\pr\pr)}=\Lambda_{\ell}(k_m)\setminus \Lambda_{\ell^\pr}(k_m)$.  Using  $\chi_{\Lambda}= \sum_{m=1}^{M_L} \chi_{\Lambda^{(m)}}$,   we get
\begin{align}
&{\xi}_{\bom}^{(\Lambda)}(f)-\widetilde{\xi}_{\bom}^{(\Lambda)} (f)=
\sum_{m=1}^{M_L}  \pa{ \tr\set{\chi_{\Lambda^{(m)}} f_{\Lambda}(H_{\bom}^{(\Lambda)})\chi_{\Lambda^{(m)}} } -  \tr   f_{\Lambda}(H_{\bom}^{(\Lambda^{(m)})})   }\\ \notag
&\quad = \sum_{m=1}^{M_L}  \pa{ \tr\set{\chi_{\Lambda^{(m,\pr)}} f_{\Lambda}(H_{\bom}^{(\Lambda)})\chi_{\Lambda^{(m,\pr)}} } -  \tr \set{\chi_{\Lambda^{(m,\pr)}}  f_{\Lambda}(H_{\bom}^{(\Lambda^{(m)})} )\chi_{\Lambda^{(m,\pr)}}  }}  \\ \notag
&\qquad \qquad + \sum_{m=1}^{M_L}  \pa{ \tr\set{\chi_{\Lambda^{(m,\pr\pr)}} f_{\Lambda}(H_{\bom}^{(\Lambda)})\chi_{\Lambda^{(m,\pr\pr)}} } -  \tr \set{\chi_{\Lambda^{(m,\pr\pr)}}  f_{\Lambda}(H_{\bom}^{(\Lambda^{(m)})} )\chi_{\Lambda^{(m,\pr\pr)}}  }}.
\end{align}
We now use the fact that the expectation is invariant under translations in the torus to get, for any $m$, 
\begin{align}\notag
& \E \, \abs{{\xi}_{\bom}^{(\Lambda)}(f)-\widetilde{\xi}_{\bom}^{(\Lambda)} (f)}\\
& \,  \le 
 M_L \E \abs{ \tr\set{\chi_{\Lambda^{(m,\pr)}} f_{\Lambda}(H_{\bom}^{(\Lambda)})\chi_{\Lambda^{(m,\pr)}} } -  \tr \set{\chi_{\Lambda^{(m,\pr)}}  f_{\Lambda}(H_{\bom}^{(\Lambda^{(m)})}) \chi_{\Lambda^{(m,\pr)}}  }} \label{mainterm}\\
 & \; + M_L \E \abs{ \tr\set{\chi_{\Lambda^{(m,\pr\pr)}} f_{\Lambda}(H_{\bom}^{(\Lambda)})\chi_{\Lambda^{(m,\pr\pr)}} } -  \tr \set{\chi_{\Lambda^{(m,\pr\pr)}}  f_{\Lambda}(H_{\bom}^{(\Lambda^{(m)})}) \chi_{\Lambda^{(m,\pr\pr)}}  }} .\label{boundaryterm}
 \end{align} 
 
It  follows from  the Wegner estimate \eq{Wegner} that
\begin{align}
& M_L \E \abs{ \tr\set{\chi_{\Lambda^{(m,\pr\pr)}} f_{\Lambda}(H_{\bom}^{(\Lambda)})\chi_{\Lambda^{(m,\pr\pr)}} }}\le M_L \frac {\abs{\Lambda^{(m,\pr\pr)}}}{\abs{\Lambda}}
 \E \tr\set{ \abs{ f_{\Lambda}}(H_{\bom}^{(\Lambda)})}\\
 & \quad  \le M_L \frac {\abs{\Lambda^{(m,\pr\pr)}}}{\abs{\Lambda}}K_W \rho_+ \abs{\Lambda}
 \int_\R  \abs{ f_{\Lambda}}(E) \, \di E = \frac {\abs{\Lambda^{(m,\pr\pr)}}}{\abs{\Lambda^{(m)}}}  K_W \rho_+ \norm{f}_1 .
\notag
\end{align}
Similarly, 
\begin{align}
& M_L \E \abs{ \tr\set{\chi_{\Lambda^{(m,\pr\pr)}} f_{\Lambda}(H_{\bom}^{(\Lambda^{(m)})})\chi_{\Lambda^{(m,\pr\pr)}} }}\le M_L \frac {\abs{\Lambda^{(m,\pr\pr)}}}{\abs{\Lambda^{(m)}}}
 \E \tr\set{ \abs{ f_{\Lambda}}(H_{\bom}^{(\Lambda^{(m)})})}\\
 & \qquad \qquad   \le M_L \frac {\abs{\Lambda^{(m,\pr\pr)}}}{\abs{\Lambda^{(m)}}}K_W \rho_+ \abs{\Lambda^{(m)}}
 \int_\R  \abs{ f_{\Lambda}}(E) \, \di E = \frac {\abs{\Lambda^{(m,\pr\pr)}}}{\abs{\Lambda^{(m)}}}  K_W \rho_+ \norm{f}_1 .
\notag
\end{align}
Since
\beq
 \frac {\abs{\Lambda^{(m,\pr\pr)}}}{\abs{\Lambda^{(m)}}} \approx \frac {\ell^{d-1} \sqrt{\ell}}{\ell^d}= \frac 1 { \sqrt{\ell}}\approx \frac 1 {L^\frac a 2}\to 0 \quad \text{as} \quad L \to \infty,
\eeq
the term in \eq{boundaryterm} goes to $0$ as $L \to \infty$.

To finish the proof of \eq{limL} we need to show that the term in \eq{mainterm} also goes to $0$ as $L \to \infty$.  To do that we will use that  $\overline{\cI_1} \subset \Xi^{\text{CL}}$,   the   Helffer-Sj\"ostrand formula for {smooth}
functions of self-adjoint operators, and  estimates on Schr\"odinger operators.

Given a box $\Lambda$, we identify   $\L^2(\Lambda)$  with the subspace of $\L^2(\R^d)$ consisting of functions vanishing outside $\Lambda$.   Given a function  $\phi \in C_{K}^\infty(\R)$, we let $W(\phi)$ to be the closure of the local first order differential operator $[\Delta, \phi]$ on  $C_{K}^\infty(\R)$.  We set $\chi_\phi:= \chi_{\supp  \phi}$, $\chi_{\nabla \phi}:= \chi_{\supp \nabla \phi}$. and note
that  $W(\phi)= \chi_{\nabla \phi} W(\phi)= W(\phi)\chi_{\nabla \phi} =\chi_{\nabla \phi}W(\phi)\chi_{\nabla \phi}$.  We recall that if $\supp  \phi \subset \Lambda^\circ$, the interior of $\Lambda$, which here  may be either  a finite box or $\R^d$, 
we have 
\beq \label{Wphibound}
\norm{\pa{H_{\bom}^{(\Lambda)} +1}^{-\frac 12 } W(\phi)}=\norm{W(\phi)\pa{H_{\bom}^{(\Lambda)} +1}^{-\frac 12 } } \le C_\phi:=C_1 \pa{\norm{\Delta \phi}_\infty + \norm{\nabla \phi}_\infty},
\eeq
where $C_1$ depends only on $d$.  We also recall that for all $x \in \Lambda $ we have 
\beq \label{normpd}
\norm {\chi_{\Lambda_1(x)} {\pa{H_{\bom}^{(\Lambda)} +1}^{-1}}}_{p_d}\le C_2 < \infty \quad \text{with} \quad p_d = [\tfrac d 2] + 1,
\eeq
the constant $C_2$ being independent of $x$ and  $\Lambda$ for $L \ge  2$ (cf.\  \cite[Eqs. (130)-(136)]{KKS}). 

We now recall
the   Helffer-Sj\"ostrand formula;  cf.\ \cite[Appendix B]{HuS} for details.   Given  $ g\in C^\infty(\R)$ and $m\in \N$, we set  \begin{equation} \label{sdfn}
  \hnorm{g}_m := \sum_{r=0}^m \int_{\mathbb{R}}\!\mathrm{d}u\;
  |g^{(r)}(u)|\,(1 + \abs{u}^{2})^{\frac {r-1} 2}  .
\end{equation}
If $ \hnorm{g}_m < \infty$ with $m \ge 2$, then for any self-adjoint operator
$K$ we have
\begin{equation}\label{HS}
  f (K) = \int_{\R^{2}} \!\di \tilde{g}(z) \, (K-z)^{-1} ,
\end{equation}
where the integral converges absolutely in operator norm.  Here $z= x + i y$,
$\tilde{g}(z)$ is an \emph{almost analytic extension} of $g$ to the complex
plane, $\di \tilde{g}(z) := \frac 1 {2\pi}\partial_{\bar{z}}\tilde{g}(z)
\,\mathrm{d} x\, \mathrm{d} y $, with $\partial_{\bar{z}}= \partial_x + i
\partial_y$, and $|\di \tilde{g}(z)| := (2\pi)^{-1}
|\partial_{\,\overline{z}}\tilde{g}(z)| \,\mathrm{d} x\, \mathrm{d} y$.
Moreover, for all $p \ge 0$ we have
\begin{equation}\label{HShigherorder}
  \int_{\R^{2}} \! |\di \tilde{g}(z)| \;\frac{1}{|\Im \, z|^p}  \le c_p
  \  \hnorm{g}_m < \infty  \quad \text{for} \quad m \ge p+1
\end{equation}
with a constant $c_{p}$.

.

 Since $ f \in C_{K}^\infty(\R)$, we have, using the Helffer-Sj\"ostrand formula, with $\Lambda=\Lambda_L$, $R_{\bom}^{(\Lambda)} (z)=
\pa{H_{\bom}^{(\Lambda)} -z}^{-1}$ and  $R_{\bom}^{(\Lambda,m)}(z)=
\pa{H_{\bom}^{(\Lambda^{(m)})} -z}^{-1}$, and taking $\phi_0 \in C_{K}^\infty(\Lambda_{\ell- 10d}(k_m))$, such that $0\le \phi_0 \le1$ and $\phi_0 \chi_{\Lambda_{\ell- 20 d}(k_m)} = \chi_{\Lambda_{\ell- 20 d}(k_m)} $, that
\begin{align}\label{TLambda}
&T_{\bom}^{(\Lambda)}:={\chi_{\Lambda^{(m,\pr)}} f_{\Lambda}(H_{\bom}^{(\Lambda)})\chi_{\Lambda^{(m,\pr)}} } -   {\chi_{\Lambda^{(m,\pr)}}  f_{\Lambda}(H_{\bom}^{(\Lambda^{(m)})}) \chi_{\Lambda^{(m,\pr)}}  }\\ \notag
&\quad = \int_{\R^{2}} \!\di \tilde{f_{\Lambda}}(z) \set{\chi_{\Lambda^{(m,\pr)}} R_{\bom}^{(\Lambda)} (z)\chi_{\Lambda^{(m,\pr)}}- \chi_{\Lambda^{(m,\pr)}} R_{\bom}^{(\Lambda,m)}(z) \chi_{\Lambda^{(m,\pr)}} } \\ \notag
&\quad = \int_{\R^{2}} \!\di \tilde{f_{\Lambda}}(z) \set{\chi_{\Lambda^{(m,\pr)}}R_{\bom}^{(\Lambda)} (z)\phi_0 \chi_{\Lambda^{(m,\pr)}}- \chi_{\Lambda^{(m,\pr)}}\phi_0  R_{\bom}^{(\Lambda,m)}(z) \chi_{\Lambda^{(m,\pr)}} }   \\ 
& \quad =   \int_{\R^{2}} \!\di \tilde{f_{\Lambda}}(z) \set{\chi_{\Lambda^{(m,\pr)}}R_{\bom}^{(\Lambda)}(z)W(\phi_0)R_{\bom}^{(\Lambda,m)}(z)\chi_{\Lambda^{(m,\pr)}} }  , \label{intHS}
\end{align}
where we used the geometric resolvent identity.

Now let us pick functions $\phi_i \in C_{K}^\infty(\R)$, $i=1,2,\ldots,2 p -1$, such that
$0\le \phi_i\le 1$, $\phi_{i}\chi_{\nabla \phi_{i-1}}=\chi_{\nabla \phi_{i-1}}$, and 
$\chi_{\phi_i} \chi_{\Lambda_{\ell- 30 d}(k_m)}=0$  for $i=1,2,\ldots,2 p-1$.  Using the resolvent identity $2p-1$ times we get
\begin{align}\label{expres}
&\chi_{\Lambda^{(m,\pr)}}R_{\bom}^{(\Lambda)}(z)W(\phi_0)\\ \notag
& \quad =
\chi_{\Lambda^{(m,\pr)}} R_{\bom}^{(\Lambda)}(z)W(\phi_{2p-1})R_{\bom}^{(\Lambda)}(z)W(\phi_{2p-2})\ldots  R_{\bom}^{(\Lambda)}(z)W(\phi_1)R_{\bom}^{(\Lambda)}(z)W(\phi_0)\\\notag 
&\quad = \set{\chi_{\Lambda^{(m,\pr)}} R_{\bom}^{(\Lambda)}(z)} \set{W(\phi_{2p-1})R_{\bom}^{(\Lambda)}(z)W(\phi_{2p-2})} \set{\chi_{\nabla \phi_{2p-2}} R_{\bom}^{(\Lambda)}(z)} \\ \notag
&\quad \quad  \quad  \times \set{W(\phi_{2p-3})R_{\bom}^{(\Lambda)}(z)W(\phi_{2p-4})}
\ldots \set{\chi_{\nabla \phi_{2}} R_{\bom}^{(\Lambda)}(z)} \set{W(\phi_{1})R_{\bom}^{(\Lambda)}(z)W(\phi_{0})}.
\end{align}
We now use that the integral in \eq{intHS} is performed over a compact domain in $\R^2$, which depends only on the function $f$, so there is constant $C_f$ such that for $z$ in the region of integration we have 
\beq \label{R2compact}
\norm{\pa{H_{\bom}^{(\Lambda)} +1}R_{\bom}^{(\Lambda)}(z)}\le \frac {C_f}{\abs{\Im z}},
\eeq
and hence, using \eq{Wphibound} and \eq{normpd}, we have
\beq  \label{Wphibound2}
 \norm{W(\phi_{i})R_{\bom}^{(\Lambda)}(z)W(\phi_{i-1})}\le \frac {C_f C_{\phi_i}  C_{\phi_{i-1}}}{\abs{\Im z}}
\eeq
and, for $B \subset \Lambda_{L^\pr}\subset \Lambda$,
\beq \label{normpd2}
\norm{\chi_{B} R_{\bom}^{(\Lambda)}(z)}_{p_d}\le  \frac {C_f C_2}{\abs{\Im z}}\abs{\Lambda_{L^\pr}}.
\eeq

We now choose $p=p_d$ as in \eq{normpd}, and note that we can choose the functions
 $\phi_i \in C_{K}^\infty(\R)$, $i=1,2,\ldots,2 p_d-1$ so that the constants $C_{\phi_i}$ are independent of $\Lambda$, say all  $C_{\phi_i}\le C_3$  From \eq{expres}, \eq{Wphibound2} and \eq{normpd2}, we get
\begin{align}\label{breakest}
&\norm{\chi_{\Lambda^{(m,\pr)}}R_{\bom}^{(\Lambda)}(z)W(\phi_0)R_{\bom}^{(\Lambda,m)}(z)\chi_{\Lambda^{(m,\pr)}} }_1 \\ \notag
&\qquad \qquad  \le  \pa{  \frac {C_f C_2}{\abs{\Im z}}\abs{{\Lambda^{(m)}}}}^{p_d}\pa{\frac {C_f C_3^2}{\abs{\Im z}}}^{p_d}\norm{\chi_{\nabla \phi_{0}}R_{\bom}^{(\Lambda,m)}(z)\chi_{\Lambda^{(m,\pr)}}}\\
&\qquad \qquad \le C_4 C^\pr_f \ell^{p_d} \abs{\Im z}^{-2 p_d}\norm{\chi_{\nabla \phi_{0}}R_{\bom}^{(\Lambda,m)}(z)\chi_{\Lambda^{(m,\pr)}}}.  \notag
\end{align}

We now use that  $\overline{\cI_1}\subset \Xi^{\text{CL}}$,  the region of complete localization for $H_\bom$. The term in \eq{mainterm} is $M_L \E\set{T_{\bom}^{(\Lambda)}}$, with $T_{\bom}^{(\Lambda)}$ as in \eq{TLambda}.  It follows  from \eq{intHS}, \eq{expres} and \eq{breakest} that for large $L$, 
\begin{align}\notag
& M_L \E\set{T_{\bom}^{(\Lambda)}}\le M_L  C_4 C^\pr_f \ell^{p_d} \int_{\R^{2}} \!\abs{\di \tilde{f_\Lambda}(z) }\, \abs{\Im z}^{-2 p_d}
\E\set{\norm{\chi_{\nabla \phi_{0}}R_{\bom}^{(\Lambda,m)}(z)\chi_{\Lambda^{(m,\pr)}}}}\\
& \quad \le  M_L  C_4C^\pr_f \ell^{p_d} \int_{\R^{2}} \!\abs{\di \tilde{f_\Lambda}(z) }\, \abs{\Im z}^{-2 p_d- \frac 4 5}
\E\set{\norm{\chi_{\nabla \phi_{0}}R_{\bom}^{(\Lambda,m)}(z)\chi_{\Lambda^{(m,\pr)}}}^{\frac 1 5}}\\\notag
& \quad\le M_L  C_4C^\pr_f \ell^{p_d + 2d}(\rho_+  + \sqrt{\rho_+}) \int_{\R^{2}} \!\abs{\di \tilde{f_\Lambda}(z) }\, \abs{\Im z}^{-2 p_d- \frac 4 5}  \mathrm{e}^{-\ell^{\frac 1 4}}\\
&\quad \le L^d  \ell^{p_d + d} \mathrm{e}^{-\ell^{\frac 1 4}}c_{2 p_d+ \frac 4 5} C_4 C^\pr_f(\rho_+  + \sqrt{\rho_+}) \hnorm{f_\Lambda}_{2p_d +2}. \notag
\end{align}
where we used \eq{eqMSAexp} and \eq{HShigherorder}. Note that $2p_d \le d+ 1$ and 
\beq
 \hnorm{f_\Lambda}_m \le C_{E_0,f,m} \abs{\Lambda}^{m-1} \quad \text{for all} \quad  m=2,3,\ldots.
\eeq
It follows that
\begin{align}
 M_L \E\set{T_{\bom}^{(\Lambda)}} \le  L^{d^2 + 3d}  \ell^{\frac {3d}2 + 1} \mathrm{e}^{-\ell^{\frac 1 4}}c_{2 p_d+ \frac 4 5} C_{f,E_0,d} (\rho_+  + \sqrt{\rho_+})\to 0 \; \text{as $L\to \infty$}.
\end{align}
Thus \eq{limL} is proven.

The proof of \eq{limL3} is similar. With $\Lambda=\Lambda_L(0)$, we set $L^\pr \approx L- \sqrt{L}$,  $\Lambda^\pr=\Lambda_{L^\pr}(0)$,  and  $\Lambda^{\pr\pr}=\Lambda \setminus \Lambda^{\pr}$. We have 
\begin{align}
&{\theta}_{\bom}^{(\Lambda)}(f)-{\xi}_{\bom}^{(\Lambda)} (f)=
\tr\set{\chi_{\Lambda} f_{\Lambda}(H_{\bom})\chi_{\Lambda} } -  \tr   f_{\Lambda}(H_{\bom}^{(\Lambda)})   \\ \notag
&\qquad \qquad = \pa{ \tr\set{\chi_{\Lambda^\pr} f_{\Lambda}(H_{\bom})\chi_{\Lambda^\pr} } -  \tr \set{\chi_{\Lambda^\pr}   f_{\Lambda}(H_{\bom}^{(\Lambda)}) \chi_{\Lambda^\pr}   }}  \\ \notag
&\qquad \qquad  \qquad \quad  + \pa{ \tr\set{\chi_{\Lambda^{,\pr\pr}} f_{\Lambda}(H_{\bom})\chi_{\Lambda^{\pr\pr}} } -  \tr \set{\chi_{\Lambda^{\pr\pr}} f_{\Lambda}(H_{\bom}^{(\Lambda)})\chi_{\Lambda^{\pr\pr}}  }},
\end{align}
and hence
\begin{align}\notag
&\E \abs{{\theta}_{\bom}^{(\Lambda)}(f) -{{\xi}_{\bom}^{(\Lambda)} (f)}}\le 
  \\ \label{mainterm2}
&\qquad \qquad =\E \abs{ \tr\set{\chi_{\Lambda^\pr} f_{\Lambda}(H_{\bom})\chi_{\Lambda^\pr} } -  \tr \set{\chi_{\Lambda^\pr}   f_{\Lambda}(H_{\bom}^{(\Lambda)}) \chi_{\Lambda^\pr}   }}  \\  \label{boundaryterm2}
&\qquad \qquad  \qquad \quad  + \E \abs{ \tr\set{\chi_{\Lambda^{,\pr\pr}} f_{\Lambda}(H_{\bom})\chi_{\Lambda^{\pr\pr}} } -  \tr \set{\chi_{\Lambda^{\pr\pr}} f_{\Lambda}(H_{\bom}^{(\Lambda)})\chi_{\Lambda^{\pr\pr}}  }}.
\end{align}

We now use the Wegner estimate \eq{Wegner} to obtain
\begin{align}
&  \E \abs{ \tr\set{\chi_{\Lambda^{\pr\pr}} f_{\Lambda}(H_{\bom}^{(\Lambda)})\chi_{\Lambda^{\pr\pr}} }}\le  \frac {\abs{\Lambda^{\pr\pr}}}{\abs{\Lambda}}
 \E \tr\set{ \abs{ f_{\Lambda}}(H_{\bom}^{(\Lambda)})}\\
 & \qquad \qquad   \le \frac {\abs{\Lambda^{\pr\pr}}}{\abs{\Lambda}}K_W \rho_+ \abs{\Lambda}
 \int_\R  \abs{ f_{\Lambda}}(E) \, \di E = \frac {\abs{\Lambda^{\pr\pr}}}{\abs{\Lambda}} K_W \rho_+ \norm{f}_1 ,
\notag
\end{align}
and
\begin{align}
&  \E \abs{ \tr\set{\chi_{\Lambda^{\pr\pr}} f_{\Lambda}(H_{\bom})\chi_{\Lambda^{\pr\pr}} }}\le  \abs{\Lambda^{\pr\pr}}
 \E \tr\set{ \chi_0\abs{ f_{\Lambda}}(H_{\bom}^{(\Lambda)})\chi_0} = \abs{\Lambda^{\pr\pr}} N(\abs{ f_{\Lambda}})\\
 & \qquad \quad  \le \abs{\Lambda^{\pr\pr}}  K_W \rho_+
 \int_\R  \abs{ f_{\Lambda}}(E) \, \di E = \frac {\abs{\Lambda^{\pr\pr}}}{\abs{\Lambda}} K_W \rho_+ \norm{f}_1 .
\notag
\end{align}
Since $\frac {\abs{\Lambda^{\pr\pr}}}{\abs{\Lambda}}\approx  \frac 1 {\sqrt{L}}$, the term in \eq{boundaryterm2} goes to $0$ as $L \to \infty$. 

To finish the proof of \eq{limL3} , we need to show that the term in  \eq{mainterm2} also goes to $0$ as $L \to \infty$.  As before, we use the  Helffer-Sj\"ostrand formula. We have, taking 
 $\phi_0 \in C_{K}^\infty(\Lambda_{L- 10d}(0))$, such that $0\le \phi_0 \le1$ and $\phi_0 \chi_{\Lambda_{L- 20 d}(0)} = \chi_{\Lambda_{L- 20 d}(0)} $, that
\begin{align}\label{TLambda2}
S_{\bom}^{(\Lambda)}:=& \chi_{\Lambda^{\pr}} f_{\Lambda}(H_{\bom})\chi_{\Lambda^{\pr}}  -   \chi_{\Lambda^{\pr}}  f_{\Lambda}(H_{\bom}^{(\Lambda}) \chi_{\Lambda^{\pr}} \\ \notag
 = &\int_{\R^{2}} \!\di \tilde{f_{\Lambda}}(z) \set{\chi_{\Lambda^{\pr}} R_{\bom}(z)\chi_{\Lambda^{\pr}} -\chi_{\Lambda^{\pr}}  R_{\bom}^{(\Lambda)}(z)\chi_{\Lambda^{\pr}} } \\ \notag
 = & \int_{\R^{2}} \!\di \tilde{f_{\Lambda}}(z) \set{\chi_{\Lambda^{\pr}}R_{\bom} (z)\phi_0 \chi_{\Lambda^{\pr}}-\chi_{\Lambda^{\pr}}\phi_0  R_{\bom}^{(\Lambda)}(z) \chi_{\Lambda^{\pr}} }   \\ 
= & \int_{\R^{2}} \!\di \tilde{f_{\Lambda}}(z) \set{\chi_{\Lambda^{\pr}} R_{\bom}(z)W(\phi_0)R_{\bom}^{(\Lambda)}(z)\chi_{\Lambda^{\pr}}}  .\label{intHS2}
\end{align}
Proceeding as in \eq{expres}-\eq{breakest}, we get
\begin{align}\label{breakest2}
\norm{\chi_{\Lambda^{\pr}} R_{\bom}(z)W(\phi_0)R_{\bom}^{(\Lambda)}(z)\chi_{\Lambda^{\pr}} }_1  \le C_4 C^\pr_f L^{p_d} \abs{\Im z}^{-2 p_d}\norm{\chi_{\nabla \phi_{0}}R_{\bom}^{(\Lambda)}(z)\chi_{\Lambda^{\pr}}}.  
\end{align}

Recall  $\overline{\cI_1}\subset \Xi^{\text{CL}}$.  The term in \eq{mainterm2} is $ \E\set{S_{\bom}^{(\Lambda)}}$, with $S_{\bom}^{(\Lambda)}$ as in \eq{TLambda2}.  It follows  from \eq{intHS2} and \eq{breakest2} that for large $L$,
\begin{align}\notag
& \E\set{S_{\bom}^{(\Lambda)}}\le  C_4 C^\pr_f L^{p_d} \int_{\R^{2}} \!\abs{\di \tilde{f_\Lambda}(z) }\, \abs{\Im z}^{-2 p_d}
\E\set{\norm{\chi_{\nabla \phi_{0}}R_{\bom}^{(\Lambda)}(z)\chi_{\Lambda^{\pr}}}}\\
& \quad \le  M_L  C_4C^\pr_f L^{p_d} \int_{\R^{2}} \!\abs{\di \tilde{f_\Lambda}(z) }\, \abs{\Im z}^{-2 p_d- \frac 4 5}
\E\set{\norm{\chi_{\nabla \phi_{0}}R_{\bom}^{(\Lambda)}(z)\chi_{\Lambda^{\pr}}}^{\frac 1 5} } \\ \notag
& \quad\le   C_4C^\pr_f L^{p_d + 2d}(\rho_+  + \sqrt{\rho_+}) \int_{\R^{2}} \!\abs{\di \tilde{f_\Lambda}(z) }\, \abs{\Im z}^{-2 p_d- \frac 4 5}  \mathrm{e}^{-L^{\frac 1 4}}\\\notag
&\quad \le L^{p_d +2 d} \mathrm{e}^{-L^{\frac 1 4}}c_{2 p_d+ \frac 4 5} C_4 C^\pr_f(\rho_+  + \sqrt{\rho_+}) \hnorm{f_\Lambda}_{2p_d +2}\\\notag
& \quad\le L^{d^2 + 5d}   \mathrm{e}^{-L^{\frac 1 4}}c_{2 p_d+ \frac 4 5} C_{f,E_0,d} (\rho_+  + \sqrt{\rho_+})\to 0 \; \text{as $L\to \infty$}\notag
\end{align}
where we used \eq{eqMSAexp} and \eq{HShigherorder}.

Thus \eq{limL3}, and the lemma, is proven.
\end{proof}

Given point processes $\set{\zeta_n}_{n\in \N}$ and  $\zeta$ on 
$\R$, we let $\zeta_n \Rightarrow \zeta$  denote the weak convergence of $\zeta_n$ to $\zeta$ as $n \to \infty$. We recall \cite[Proposition~9.1.VII]{DV} that $\zeta_n \Rightarrow \zeta$ if and only if
\beq \label{weakconv}
\lim_{n\to \infty} \E \, \e^{-\zeta_n(f)}= \E \, \e^{-\zeta(f)} \quad \text{for all}\quad f \in C_{K,+}(\R).
\eeq

The following lemma shows that it suffices to prove that $\widetilde{\xi}_{\bom}^{(\Lambda)} \Rightarrow \xi$ to prove Theorem~\ref{thmPoisson}(b).

\begin{lemma}\label{lemweakconv}
 ${\xi}_{\bom}^{(\Lambda)} \Rightarrow \xi$ if and only if  \  $\widetilde{\xi}_{\bom}^{(\Lambda)} \Rightarrow \xi$. 
\end{lemma}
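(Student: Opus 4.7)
The plan is to reduce both directions of the equivalence to a single Laplace-functional estimate. By the characterization \eqref{weakconv}, weak convergence of point processes on $\R$ is equivalent to convergence of $\E\, e^{-\zeta(f)}$ for every $f \in C_{K,+}(\R)$. So it suffices to show that for every such $f$,
\beq\label{eqplanL}
\lim_{L\to \infty} \abs{\E\, \e^{-\xi_{\bom}^{(\Lambda_L)}(f)} - \E\, \e^{-\widetilde{\xi}_{\bom}^{(\Lambda_L)}(f)}} = 0,
\eeq
since then $\E\, e^{-\xi_{\bom}^{(\Lambda_L)}(f)}$ and $\E\, e^{-\widetilde{\xi}_{\bom}^{(\Lambda_L)}(f)}$ have the same limiting behavior, and either one converges to $\E\, e^{-\xi(f)}$ if and only if the other does.

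To establish \eqref{eqplanL}, I will use the elementary fact that $x \mapsto e^{-x}$ is $1$-Lipschitz on $[0,\infty)$, namely $\abs{\e^{-a}-\e^{-b}} \le \abs{a-b}$ for all $a,b \ge 0$. Since $f \ge 0$ and both $\xi_{\bom}^{(\Lambda)}$ and $\widetilde{\xi}_{\bom}^{(\Lambda)}$ are nonnegative point measures, both $\xi_{\bom}^{(\Lambda)}(f)$ and $\widetilde{\xi}_{\bom}^{(\Lambda)}(f)$ are nonnegative random variables. Hence
\beq
\abs{\E\, \e^{-\xi_{\bom}^{(\Lambda)}(f)} - \E\, \e^{-\widetilde{\xi}_{\bom}^{(\Lambda)}(f)}}
 \le \E\, \abs{\e^{-\xi_{\bom}^{(\Lambda)}(f)} - \e^{-\widetilde{\xi}_{\bom}^{(\Lambda)}(f)}}
 \le \E\, \abs{\xi_{\bom}^{(\Lambda)}(f) - \widetilde{\xi}_{\bom}^{(\Lambda)}(f)}.
\eeq
Since every $f \in C_{K,+}(\R)$ belongs to $\cF_{b,K}$, Lemma~\ref{lemEconv} (equation \eqref{limL}) gives that the right-hand side tends to $0$ as $L \to \infty$, which yields \eqref{eqplanL} and completes the proof.

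There is essentially no obstacle here beyond assembling the three ingredients (Laplace functional criterion, Lipschitz bound for $e^{-x}$, and the $L^1$ estimate from Lemma~\ref{lemEconv}); the work has already been done in proving \eqref{limL}.
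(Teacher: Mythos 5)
Your proof is correct and follows essentially the same route as the paper: you invoke the Laplace-functional characterization of weak convergence from \eqref{weakconv}, the $1$-Lipschitz bound $\abs{\e^{-a}-\e^{-b}}\le\abs{a-b}$ on $[0,\infty)$ (which is exactly the paper's estimate \eqref{Laplace}), and the $L^1$ approximation from Lemma~\ref{lemEconv}. No meaningful differences.
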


\begin{proof} 
If $\zeta_i$, $i=1,2$, are point processes on $\R$, defined on the same probability space, 
 we have,  for all $ f \in C_{K,+}(\R)$,
\beq \label{Laplace}
\abs{ \E \, \e^{-\zeta_1(f)}-  \E \, \e^{-\zeta_2(f)}}\le \E \, \abs{\zeta_1(f)-\zeta_2 (f)}.
\eeq
The lemma follows immediately from \eq{weakconv}, \eq{Laplace}, and Lemma~\ref{lemEconv}.
\end{proof}

We are now ready to prove Theorem~\ref{thmPoisson}(a).   In view of Lemma~\ref{lemweakconv}, it suffices to prove that  $\widetilde{\xi}_{\bom}^{(\Lambda)} \Rightarrow \xi$.  By standard results from the theory of point processes (cf.\  \cite[Theorem~9.2.V 	and  subsequent remark]{DV}; see also \cite[Theorem~2.3]{Kr}), this is equivalent to verifying the following three conditions  for all bounded intervals $I$ (recall $\Lambda=\Lambda_L(0)$):
\begin{gather}\label{cond1}
\lim_{L \to\infty} \max_{m=1,2,\ldots,M_L} \P\{ \xi_{\bom}^{(\Lambda,m)}(I) \ge 1\} = 0,\\\label{cond2}
\lim_{L \to\infty} \sum_{m=1}^{M_L}  \P\{ \xi_{\bom}^{(\Lambda,m)}(I) \ge 1\}  = n(\cE) \abs{I},\\
\lim_{L \to\infty} \sum_{m=1}^{M_L}  \P\{ \xi_{\bom}^{(\Lambda,m)}(I) \ge 2\}  = 0 .\label{cond3}
\end{gather}
Since $\P\{ \xi_{\bom}^{(\Lambda,m)}(I) \ge 1\} \le \E \set{ \xi_{\bom}^{(\Lambda,m)}(I)}$,
 \eq{cond1}  follows  immediately from \eq{int1}. In addition, it follows from the definition \eq{defxim} and the estimate \eq{Minami}, that for all $\Lambda$ with    $\abs{\Lambda}\ge c_I$ we have 
\beq \label{Minamiconseq}
 \P\{ \xi_{\bom}^{(\Lambda,m)}(I) \ge 2\} \le
  \tfrac 1 2 \E \set{\pa{ \xi_{\bom}^{(\Lambda,m)}(I)}\pa{ \xi_{\bom}^{(\Lambda,m)}(I)-1}}\le  
    \tfrac 1 2 K_M \pa{\rho_+ \abs{I} M_L^{-1} }^2,
\eeq 
 so \eq{cond3} follows.
 
Thus Theorem~\ref{thmPoisson}(a) is proved if we  verify condition \eq{cond2}. To do so, we first notice that
\beq
\E \set{ \xi_{\bom}^{(\Lambda,m)}(I)} = \sum_{k=1}^\infty \P\{ \xi_{\bom}^{(\Lambda,m)}(I) \ge k\},
\eeq
and, as in \cite{Kr},
\beq \begin{split}
&\sum_{k=2}^\infty \P\{ \xi_{\bom}^{(\Lambda,m)}(I) \ge k\}= \sum_{k=2}^\infty (k-1)\P\{ \xi_{\bom}^{(\Lambda,m)}(I) =k\}\\
&  \qquad \le  \sum_{k=2}^\infty k(k-1)\P\{ \xi_{\bom}^{(\Lambda,m)}(I) =k\} = \E \set{\pa{ \xi_{\bom}^{(\Lambda,m)}(I)}\pa{ \xi_{\bom}^{(\Lambda,m)}(I)-1}}.
\end{split}\eeq
It thus follows, as in \eq{Minamiconseq}, that
\beq
0\le  \E \set{ \widetilde{\xi}_{\bom}^{(\Lambda)} (I)}\! -\! \sum_{m=1}^{M_L}  \P\{ \xi_{\bom}^{(\Lambda,m)}(I) \ge 1\} \le  M_L K_M \pa{\rho_+ \abs{I} M_L^{-1} }^2\!\! \to 0 \, \text{as} \, L \to \infty.
\eeq
We conclude that \eq{cond2} is equivalent to
\beq
\lim_{L \to\infty}   \E \set{ \widetilde{\xi}_{\bom}^{(\Lambda)} (I)} = n(\cE) \abs{I},
\eeq
and hence, by Lemma~\ref{lemEconv}, equivalent to 
 \beq \label{cond2rev}
\lim_{L \to\infty}   \E \set{{\theta}_{\bom}^{(\Lambda)}(I)} = n(\cE) \abs{I}.
\eeq
But it follows from \eq{thetaint} that, for  all $\Lambda$ such that   $\abs{\Lambda}\ge c_I$
\beq
  \E \set{{\theta}_{\bom}^{(\Lambda)}(I)}= \abs{\Lambda} \eta(\cE + \abs{\Lambda}^{-1} I)=
  \abs{\Lambda}  \int_{\cE + \abs{\Lambda}^{-1} I} \, n(E)\,  \di E.
\eeq
Since by our hypothesis  $\cE$ is a Lebesgue point of the locally integrable function $n(E)$
(cf.\  \cite[Definition~25.13]{Y}), and the sets $\cE + \abs{\Lambda}^{-1} I$ shrink nicely to $\cE$ as $L\to \infty$ (cf.\  \cite[Definition~25.16]{Y}), we can use the Lebesgue Differentiation Theorem  (cf.\  \cite[Theorem~25.17]{Y}) to conclude
\beq
\lim_{L\to \infty}  \abs{\Lambda}  \int_{\cE + \abs{\Lambda}^{-1} I} n(E) \, \di E= n(\cE) \abs{I}.
\eeq

Thus \eq{cond2rev}, and hence  \eq{cond2}, is proven, completing the proof of    Theorem~\ref{thmPoisson}(a).

\section{Simplicity of eigenvalues}\label{secsimple}

We prove  Theorem~\ref{thmPoisson}(b) proceeding as in  \cite{KM}.  Let $H_\bom$ be an Anderson Hamiltonian, and let  $\mathcal{I} $ be  an open interval
such  that for large boxes $\Lambda$ the estimate \eq{Minami} holds for any
interval $I \subset \mathcal{I}$ with $\abs{I}\le \delta_0$, for some $\delta_0 >0$, with some constant $K_M$.
   We call
 $\vphi \in \L^2(\R^d)$
 fast decaying if it has $\beta$-decay for some $\beta > \frac {5d} 2$, which in the continuum means that
 $\norm{\chi\up{1}_x\vphi} 
\le C_\vphi \la x\ra^{-\beta}$ for some constant  $C_\vphi $,
where  $ \langle x \rangle :   = \sqrt{1+|x|^2} $. 
We will show that, with probability one, $H_\bom$  cannot  have an eigenvalue in  $\mathcal{I} $  with $2$  linearly independent
fast decaying eigenfunctions.

Let $I\subset \mathcal{I}$ be a  closed interval,  $q > 2d$, $L \in 2\N$ large, $\Lambda_L=\Lambda_L(0)$.
  We   cover the interval $I$ by $2 \left (\left[\frac {L^q} 2 |I|\right]+1\right)\le 
{L^q}  |I| +2 $
intervals of length $2L^{-q} $, in such a way that 
any subinterval $J \subset I $
with length $|J| \le L^{-q} $ will be contained in one of these intervals. 
($[x]$ denotes the largest integer $\le x$.)  Let
$\mathcal{B}_{L,I,q}$ denote the complement to the  event   that 
 $\tr P\up{\Lambda_L}_\bom(J) \le 1$ for all subintervals $J \subset I $
with length $|J| \le  L^{-q} $.
The probability of  $\mathcal{B}_{L,I,q}$
can be estimated, using
\eq{Minami} and 
\beq \label{2eigenvalues}
\P \set{\tr P_\bom^{(\Lambda)}(I)\ge 2}\le  \tfrac 1 2 \E \set{ \pa{\tr P_\bom^{(\Lambda)}(I)}\pa{\tr P_\bom^{(\Lambda)}(I)-1 } },
\eeq
by
\begin{equation}
\P\{\mathcal{B}_{L,I,q}\}\le \tfrac 1 2  K_M \rho_+^2 ( L^q |I|+2) (2L^{-q})^2 L^{2d}
\le   2  K_M \rho_+^2  (|I| +1) L^{-q+ 2d}.
\end{equation}
Thus, taking scales $L_k=2^k$, $k=1,2,\ldots$,  it follows  from the Borel-Cantelli Lemma
that, with probability one,   the event   $\mathcal{B}_{L_k,I,q}$
eventually does not occur.

Let $\bom$ be in the set of probability one for which we have pure point spectrum with exponentially decaying eigenfunctions in the region of complete localization  $\Xi^{\text{CL}}$.  Suppose there exists  $E\in \mathcal{I}\cap \Xi^{\text{CL}}$ which  is an eigenvalue of
$H_\bom$ with   $2$ linearly independent
 eigenfunctions.   In particular these eigenfunctions decay exponentially,    so,
if we fix  $\beta>\frac {5d} 2$, they both have  $\beta$-decay.   Pick an open interval $I\ni E$, such that $ \bar{I} \subset  \mathcal{I}\cap \Xi^{\text{CL}}$.
 \cite[Lemma~1]{KM} can be adapted to the continuum by using smooth functions to localize the eigenfunctions in finite boxes.    It then follows that for   $L$ large enough the
finite volume operator  $H_{\bom}\up{ \Lambda_L}$  has at least $2$ eigenvalues in the
interval $J_{E,L}=[E-\eps_L,E+\eps_L]$, where $\eps_L= C L^{-\beta +\frac d 2}$ for an
appropriate  constant $C$ independent of $L$.  Since $\beta>\frac {5d} 2$  there exists $q > 2d$
such that  $\beta - \frac d 2 > q$, and hence  $\eps_L < L^{-q}$ for all large $L$.  But  with
probability one this is impossible since  the event   $\mathcal{B}_{L_k,\bar{I},q}$
 does not occur for large $L_k$.

Theorem~\ref{thmPoisson}(b) is proven.

\appendix

\section{The region of complete localization}\label{appMSA}

In this appendix we discuss  localization for  an   Anderson Hamiltonian
$H_\bom$. Localization is most commonly taken to be 
\emph{Anderson localization}:  pure point
spectrum with exponentially decaying eigenstates with probability one.
It is also natural to consider   \emph{dynamical localization}:
the moments of a
 wave packet, initially localized both in
space and in energy,  should remain 
uniformly bounded under  time evolution.  For the multi-dimensional continuum  Anderson Hamiltonian, localization has been  proved by a multiscale analysis  \cite{HM,CH,Klo93,KSS,GdB,DS,GKboot,GKgafa}, and, in the case when we have the covering condition  $\delta_- \ge 1$, also by the fractional moment method \cite{AENSS}.
These methods give more than just  Anderson or dynamical localization, although they imply both.  In the case when both methods are available, i.e.,  $\delta_- \ge 1$, they have the same region of applicability
(see \cite{GKsudec,Kle}).

Thus, following \cite{GKsudec}, we consider
 \emph{the region of complete localization}  $\Xi^{\text{CL}}$ for an   Anderson Hamiltonian
$H_\bom$,
defined as the set of energies  $E \in \R$  where  we have the conclusions of the bootstrap 
multiscale analysis of \cite{GKboot}, ie., as the
 set of $E \in \R$ for which there exists some
open interval $I \ni E$, such that 
 given any $\zeta$, $0<\zeta<1 $, and $\alpha$,
 $ 1<\alpha<\zeta^{-1}$, there is a length scale
$L_0\in 2 \mathbb{N}$
 and a mass $m>0$, so if we
 take   $L_{k+1} \approx L_k^\alpha$, with $L_{k+1} \in 2 \mathbb{N}$,
 $k=0,1,\dots$,
 we have
\begin{equation} \label{MSAest}
\mathbb{P}\,\left\{R\left(m, L_k, I,x,y\right) \right\}\ge 1
-\mathrm{e}^{-L_{k}^\zeta}
\end{equation} for all $k=0,1,\ldots$, and  $x, y \in \mathbb{Z}^d$
with 
$|x-y| > L_k + \varrho $, where $\varrho>0$ is a constant depending only on $\supp u$, and  
\begin{eqnarray}
\lefteqn{R(m,L, I,x,y) =} \label{defsetR} \\ 
&&\{\mbox{$\omega$; for every}
\; E^\prime \in I
\ \mbox{either} \ \Lambda_L(x) \ \mbox{or} \ 
\Lambda_L(y) \  \mbox{is} \  \mbox{$(\omega,m, E^\prime)$-regular} \} \ .
\nonumber
\end{eqnarray} 
  Given $E \in \mathbb{R}$,
$x \in  \mathbb{Z}^d$ and $L \in 6 \mathbb{N}$,  
we say that
 the box 
$\Lambda_L(x) $ is
$(\omega,m, E)$-regular  for a
 given $m>0$ if $E \notin \sigma(H\up{\Lambda_L(x)}_{\bom})$ and 
\begin{equation} 
\| \Gamma\up{L}_x  R\up{\Lambda_L(x)}_{\bom}(E + i \delta)\chi_{\Lambda_{\frac L 3}(x)} 
\| \le {\rm e}^{-m\frac{L}{2}} \quad \text{for all} \quad \delta\in \R,
 \label{regular}
\end{equation}
where $R\up{\Lambda_L(x)}_{\bom}(E+  i\delta) = (H\up{\Lambda_L(x)}_{\bom}-(E+ i\delta))^{-1}$ and $\Gamma\up{L}_x$
denotes the charateristic function of the ``belt" 
$ \overline{\Lambda}_{L-1}(x) \backslash {\Lambda}_{L-3}(x) $.  (See \cite{GKboot, GKduke,GKsudec,Kle}; note that all the proofs work with the definition \eq{regular}, i.e., with the insertion of ``for all $\delta \in \R$".  They also work with the finite volume operators with periodic boundary condition used in this article.)

By construction $ \Xi^{\text{CL}}$ is an open set.   It can be characterized in many different ways  \cite{GKduke,GKsudec}.   For convenience,  our definition includes the complement of the spectrum of $H_\bom$ in the region of complete localization, that is, $\R \setminus \Sigma \subset \Xi^{\text{CL}}$.  The spectral region of complete localization, $\Xi^{\text{CL}}\cap \Sigma $, is called the ``strong insulator region" in
 \cite{GKduke}.)
 If the conditions
for the fractional moment method are satisfied, $\Xi^{\text{CL}}$
 coincides with the set of 
energies   where the fractional moment method can be performed.
(Minami \cite{Mi} proved Poisson statistics for the Anderson model in the region of validity of the fractional moment method, in other words, in the region of complete localization for the Anderson model.)

We use the following estimate.

\begin{proposition}\label{propMSA} Consider a closed bounded interval $I \subset \Xi^{\text{CL}}$.  Then for all  $z \in \C$ with $\Re z \in I$, and boxes $\Lambda=\Lambda_L$, we have, for  $s \in ]0,\frac 14[$ and $\xi \in ]0,1[$,
and $x,y \in \Lambda$ with $\abs{x-y} \ge \pa{\log L }^{\frac 1 \xi  +}$,
\beq \label{eqMSAexp}
\E \set{\norm{\chi\up{1}_x R_\bom^{(\Lambda)}(z)\chi\up{1}_y}^s} \le C_{s,I,\zeta} (\rho_+  + \sqrt{\rho_+}) \mathrm{e}^{-\abs{x-y}^\xi}\
\eeq
for $L \ge L_1(\xi,I,s)$.
\end{proposition}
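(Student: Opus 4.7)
The plan is to combine the bootstrap multiscale analysis conclusion \eq{MSAest}--\eq{defsetR} on $\Xi^{\text{CL}}$ with an a priori fractional-moment bound on the resolvent that is uniform in $\Im z$.

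\emph{Step 1 (A priori $L^s$ bound).} First I would establish that for $s \in \,]0,1/4[$ and every $z \in \C$ with $\Re z \in I$,
\beq \label{apriorifm}
\E\set{\norm{\chi\up{1}_x R_\bom^{(\Lambda)}(z)\chi\up{1}_y}^s} \le K_{s,I}\pa{\rho_+ + \sqrt{\rho_+}},
\eeq
uniformly in the box $\Lambda$ and in $x,y\in\Lambda$. The deterministic bound $\norm{R_\bom^{(\Lambda)}(z)} \le \abs{\Im z}^{-1}$ is useless as $\Im z \to 0$, but averaging a small fractional moment washes the singularity out. I would derive \eq{apriorifm} by dominating the operator norm via the $p_d$-Schatten estimate \eq{normpd}, writing $\norm{\chi\up{1}_x R_\bom^{(\Lambda)}(z)\chi\up{1}_y} \le \norm{\chi\up{1}_x (H_\bom^{(\Lambda)}+1)^{-1}}_{p_d}\norm{(H_\bom^{(\Lambda)}+1) R_\bom^{(\Lambda)}(z)}^{1-\epsilon}\norm{R_\bom^{(\Lambda)}(z)}^{\epsilon}$ with a suitable smoothing, and then using the spectral-averaging estimate \eq{sa1} exactly as in the proof of the Wegner estimate \eq{Wegner}. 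Splitting into $\abs{\Im z}\ge 1$ (which yields the $\rho_+$ term via Wegner) and $\abs{\Im z}<1$ (which yields the $\sqrt{\rho_+}$ term after a Cauchy-Schwarz/Jensen step absorbing one half-power of the singularity $|\Im z|^{-1}$ into $|\Im z|^{-1/2}$ and averaging) produces the stated $(\rho_+ + \sqrt{\rho_+})$ factor.

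\emph{Step 2 (MSA dichotomy).} Fix $\zeta\in\,]0,1[$ and $\alpha\in\,]1,\zeta^{-1}[$ provided by the definition of $\Xi^{\text{CL}}$ applied to an open neighborhood of $I$, together with the mass $m>0$ and the scales $L_k$. Given $x,y\in \Lambda_L$ with $\abs{x-y}$ large, choose $k=k(\abs{x-y})$ with $L_k \approx \tfrac{1}{3}\abs{x-y}$. On the event $R(m,L_k,I,x,y)$ at least one of the boxes $\Lambda_{L_k}(x)$, $\Lambda_{L_k}(y)$ is $(\bom,m,\Re z)$-regular, and since \eq{regular} is required to hold for \emph{all} $\delta\in\R$, iterating the geometric resolvent identity from the regular box outward (controlling the commutators $W(\phi)$ by \eq{Wphibound} and the characteristic functions of unit cubes by \eq{normpd}) yields the deterministic bound
\beq \label{goodevent}
\norm{\chi\up{1}_x R_\bom^{(\Lambda)}(z) \chi\up{1}_y} \le C_I L^{d} \, \e^{-m L_k /2}
\eeq
valid on $R(m,L_k,I,x,y)$.

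\emph{Step 3 (Combining good and bad events).} Splitting the expectation over $R := R(m,L_k,I,x,y)$ and its complement and applying Cauchy-Schwarz on the complement together with Step~1 at exponent $2s<1/2$, I obtain
\beq
\E\set{\norm{\chi\up{1}_x R_\bom^{(\Lambda)}(z)\chi\up{1}_y}^s} \le \pa{C_I L^{d}}^s \e^{-smL_k /2} + \P\{R^c\}^{1/2} \sqrt{K_{2s,I}\pa{\rho_+ + \sqrt{\rho_+}}}.
\eeq
Using $\P\{R^c\}\le \e^{-L_k^\zeta}$ from \eq{MSAest}, choosing $\zeta$ close enough to $1$ so that $\zeta\alpha^{-1} > \xi$, and picking $k$ so that $L_k \approx \abs{x-y}$, one has $L_k^\zeta \gtrsim \abs{x-y}^{\zeta} \gtrsim \abs{x-y}^\xi$ and also $smL_k/2 \gtrsim \abs{x-y}^\xi$; the hypothesis $\abs{x-y}\ge (\log L)^{1/\xi+}$ absorbs the polynomial prefactor $L^d$ once $L\ge L_1(\xi,I,s)$, giving \eq{eqMSAexp}.

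The main obstacle is Step~1: in the continuum, fractional-moment bounds on the operator-norm kernel $\norm{\chi\up{1}_x R_\bom^{(\Lambda)}(z) \chi\up{1}_y}$ are not as automatic as the Aizenman-Molchanov lattice bounds, and have to be extracted from spectral averaging combined with the Schatten-class estimate \eq{normpd}; moreover, tracking the $\rho_+$-dependence carefully enough so that only $\rho_+ + \sqrt{\rho_+}$ (rather than a larger polynomial in $\rho_+$) appears on the right-hand side forces one to use H\"older's inequality in $\bom$ before applying Wegner, separating the regimes $\abs{\Im z}\ge 1$ and $\abs{\Im z}<1$.
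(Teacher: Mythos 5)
The overall skeleton of your argument (MSA dichotomy, good event gives exponential decay, bad event handled by Cauchy--Schwarz against the MSA tail probability, absorb the polynomial prefactor using $\abs{x-y}\ge(\log L)^{1/\xi+}$) matches the paper, but two of your key intermediate claims are wrong, and the paper avoids both pitfalls.

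First, your Step~1 claims a fractional-moment bound
\[
\E\set{\norm{\chi\up{1}_x R_\bom^{(\Lambda)}(z)\chi\up{1}_y}^s} \le K_{s,I}\pa{\rho_++\sqrt{\rho_+}}
\]
that is \emph{uniform in} $\Lambda$ and in $x,y$. In the continuum this is a genuinely hard statement: it is proved in \cite{AENSS} under the covering condition $\delta_-\ge 1$, but Theorem~\ref{thmPoisson} (and hence Proposition~\ref{propMSA}) is stated for arbitrary Anderson Hamiltonians with no covering hypothesis, and your spectral-averaging/Schatten sketch does not establish it. The paper sidesteps this entirely: Lemma~\ref{lemWexp} proves only the far weaker, volume-\emph{dependent} estimate $\E\set{\norm{R_\bom^{(\Lambda)}(z)}^s}\le C_s K_W\rho_+\abs{\Lambda}$ for $s<\tfrac12$, which is an immediate consequence of the Wegner estimate via $\P\set{\norm{R_\bom^{(\Lambda)}(z)}\ge t}\le \tfrac2t K_W\rho_+\abs{\Lambda}$. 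The resulting $\abs{\Lambda}=L^d$ factor is what the logarithmic separation hypothesis is there to swallow, so no volume-independent a priori bound is needed (nor does your proposed split into $\abs{\Im z}\ge 1$ and $\abs{\Im z}<1$ enter at all: the $\rho_++\sqrt{\rho_+}$ in \eq{eqMSAexp} arises from adding the good-event term, which is linear in $\rho_+$ through Lemma~\ref{lemWexp}, to the bad-event term, which gives $\sqrt{\rho_+}$ after Cauchy--Schwarz).

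Second, your Step~2 is incorrect as stated: iterating the geometric resolvent identity from a regular box outward cannot yield the \emph{deterministic} bound \eq{goodevent}. The regularity condition \eq{regular} controls only the finite-volume resolvent $R_\bom^{(\Lambda_{L_k}(x))}(z)$ on the small box, uniformly in $\delta=\Im z$; once you pass through a belt and land back on the full operator $H_\bom^{(\Lambda)}$, you are left with a factor $\norm{R_\bom^{(\Lambda)}(z)}$ that cannot be eliminated and diverges as $\Im z\to 0$. Iterating further and invoking \eq{Wphibound} only multiplies in more factors of $C_f/\abs{\Im z}$, as in \eq{Wphibound2}. The paper's proof applies the geometric resolvent identity exactly once on the good event, yielding
\[
\norm{\chi\up{1}_y R_\bom^{(\Lambda)}(z)\chi\up{1}_x}\le\gamma_I\,\e^{-m_\zeta L_k/2}\,\norm{R_\bom^{(\Lambda)}(z)},
\]
and then bounds $\E\set{\norm{R_\bom^{(\Lambda)}(z)}^s}$ by Lemma~\ref{lemWexp}; the singular resolvent factor is removed by the fractional-moment average, not by any deterministic argument. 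Once you replace your Steps~1 and~2 by these two devices (volume-dependent Lemma~\ref{lemWexp} plus a single geometric-resolvent step retaining $\norm{R_\bom^{(\Lambda)}(z)}$), your Step~3 does close the argument essentially as in the paper.
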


We will need the following consequence of the Wegner estimate \eq{Wegner}.

\begin{lemma}\label{lemWexp}  Let $I=[c,d]$ be such that  \eq{Wegner} holds for any subinterval of $[c-1,d+1]$ with a constant $K_W $.  Then for any $s \in \left]0,\frac 1 2\right [$,  box $\Lambda$, and $z \in \C$ with $\Re z \in I$, we have
\beq \label{Wexp}
\E \set{\norm{R_\bom^{(\Lambda)}(z)}^s}\le C_s  K_W \rho_+ \abs{\Lambda}.
\eeq
 \end{lemma}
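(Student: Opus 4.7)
The plan is to bound the resolvent norm pointwise by the reciprocal of the distance from $E:=\Re z$ to the spectrum of $H_\bom^{(\Lambda)}$, and then compute the expectation of that power by the layer-cake formula, using the Wegner estimate to control the small-distance events. Since $H_\bom^{(\Lambda)}$ is self-adjoint, the deterministic inequality $\norm{R_\bom^{(\Lambda)}(z)} \le \dist(E,\sigma(H_\bom^{(\Lambda)}))^{-1}$ is immediate. Setting $Y:=\dist(E,\sigma(H_\bom^{(\Lambda)}))^{-1}$ and using the standard identity $\E[Y^s]=\int_0^\infty s t^{s-1}\P\set{Y>t}\,\d t$ together with the substitution $u=1/t$, I would start from
\[
\E \set{\norm{R_\bom^{(\Lambda)}(z)}^s} \le \int_0^\infty s u^{-1-s} \, \P\set{\dist(E,\sigma(H_\bom^{(\Lambda)})) < u}\,\d u.
\]

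The key input is the Markov-type bound $\P\set{\dist(E,\sigma(H_\bom^{(\Lambda)})) < u} \le \E\set{\tr P_\bom^{(\Lambda)}(]E-u,E+u[)}$, valid because the event forces at least one eigenvalue of $H_\bom^{(\Lambda)}$ to lie in $]E-u,E+u[$. For $0<u\le 1$ the interval $]E-u,E+u[$ lies inside $[c-1,d+1]$ since $E\in[c,d]$, so the hypothesized Wegner estimate \eq{Wegner} applied to this subinterval bounds the right-hand side by $2K_W\rho_+ u\abs{\Lambda}$; for $u\ge 1$ the trivial bound $\P\set{\cdot}\le 1$ suffices. Splitting the $u$-integral at $u=1$ yields
\[
\E \set{\norm{R_\bom^{(\Lambda)}(z)}^s} \le 2 s K_W\rho_+\abs{\Lambda}\int_0^1 u^{-s}\,\d u + \int_1^\infty s u^{-1-s}\,\d u = \tfrac{2s}{1-s}\, K_W\rho_+\abs{\Lambda} + 1,
\]
and for $s\in ]0,\tfrac12[$ the prefactor $\tfrac{2s}{1-s}$ is bounded by an explicit constant $C_s$, while the residual $+1$ can be absorbed into $C_s$ in the regime of interest (where $K_W\rho_+\abs{\Lambda}$ is bounded away from zero; otherwise one trivially majorises $1$ by a constant multiple of $K_W\rho_+\abs{\Lambda}$). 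This delivers \eq{Wexp}.

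There is no serious obstacle; the only mild care needed is to keep the spectral-averaging interval inside $[c-1,d+1]$ --- which is precisely why the hypothesis is stated on the enlarged interval rather than on $I$ itself --- and to ensure integrability at both endpoints of the $u$-integral, which forces $0<s<1$; the restriction $s<\tfrac12$ leaves comfortable room.
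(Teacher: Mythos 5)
Your proof takes essentially the same route as the paper: both convert the resolvent-norm tail into the probability of an eigenvalue lying within distance $u$ of $E$, bound that probability via Wegner, and apply the layer-cake formula with the integral split at the scale where the Wegner bound ceases to be useful (your change of variable $u=1/t$ is cosmetic). The paper's proof likewise arrives at $1+C_s'K_W\rho_+\abs{\Lambda}$ and absorbs the additive constant without further comment, exactly the point you flag at the end.
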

 
 \begin{proof}  Let $\Re z \in I$. It follows from
  \eq{Wegner} that for all $t \ge 1$
  \beq
  \P\set{\norm{R_\bom^{(\Lambda)}(z)} \ge t   } \le \tfrac 2 t K_W \rho_+ \abs{\Lambda}
  \eeq
  Thus
  \beq \begin{split}
  \E \set{\norm{R_\bom^{(\Lambda)}(z)}^s}& = \int_0^\infty  t \,   \P\set{\norm{R_\bom^{(\Lambda)}(z)}^s \ge t }    \di t  \\
 &\le
  1 + \int_1^\infty  t \, \pa{ \tfrac 2 {t^{\frac 1 s}} K_W \rho_+ \abs{\Lambda} }  \di t
 \le 1 +C^\pr_s  K_W \rho_+ \abs{\Lambda} .
 \end{split} \eeq
   \end{proof}
   
   If we have the covering condition $\delta_- \ge 1$, \eq{Wexp} holds without the volume factor in the right hand side \cite{AENSS}.  
   
\begin{proof}[Proof of Proposition~\ref{propMSA}]  
   Given $0<\xi<1$, we pick $\zeta$ such that
$\zeta^2 <\xi<\zeta <1$  (always possible) and set $\alpha = \frac \zeta \xi $,
note $\alpha < \zeta^{-1}$. 
Since $I \subset \Xi^{\text{CL}}$,  there is a scale $L_0\in 2 \mathbb{N}$ and a 
mass $m_\zeta>0$, such that, if
we set   $L_{k+1} \approx L_k^\alpha$, with $L_{k+1} \in 2 \mathbb{N}$,
 $k=0,1,\dots$,
we have the estimate
(\ref{MSAest})
for  $x, y \in \mathbb{Z}^d$ such that $ |x-y|> L_k + \varrho $.

Let us now fix $\Lambda=\Lambda_L$,  $x,y \in \Lambda_L \cap \mathbb{Z}^d$, and pick $k$ such that
 $L_{k+1} +\varrho \geq |x-y|> L_k +\varrho$.  In this case,
if $\bom\in R\left(m_\zeta, L_k, I,x,y\right)$, then for $\Re z \in I $ either 
$\Lambda_{L_k}(x)$ {or}
$\Lambda_{L_k}(y)$ {is} $(\omega,m, \Re z)$-regular; say $\Lambda_{L_k}(x)$ {is} $(\omega,m, \Re z)$-regular.  (Note that we take the boxes of size $L_k$ in the torus $\Lambda$.)  Then, using \cite[Eq.~(2.9)]{GKboot} \and \eq{regular}, 
\begin{align}
\norm{\chi\up{1}_y R_\bom^{(\Lambda)}(z)\chi\up{1}_x}& \le \gamma_I \norm{\Gamma_x\up{L_k} R_\bom^{(\Lambda_{L_k}(x))}(z)\chi\up{1}_x}\norm{\chi\up{1}_y R_\bom^{(\Lambda)}(z)\Gamma_x\up{L_k}}
\\
&  \le \gamma_I  {\rm e}^{-m_\zeta \frac{L_k}{2}}\norm{R_\bom^{(\Lambda)}(z)}. \notag
\end{align}

Thus, with $s \in ]0,\frac 1 4[$, using Lemma~\ref{lemWexp},
\begin{align}\notag
&\E \set{\norm{\chi\up{1}_yR_\bom^{(\Lambda)}(z)\chi\up{1}_x}^s\!\! ; \, \bom\in R\left(m_\zeta, L_k, I,x,y\right)\! }
\le  \gamma_I^s  {\rm e}^{-s m_\zeta \frac{L_k}{2}} \E \set{\norm{R_\bom^{(\Lambda)}(z)}^s} \\
& \qquad \qquad \qquad  \le C_s  K_W \rho_+ \abs{\Lambda} \gamma_I^s  {\rm e}^{-s m_\zeta \frac{L_k}{2}}
\le C_{s,I }\rho_+  \abs{\Lambda}  {\rm e}^{-s m_\zeta \frac{L_k}{2}},
\end{align}
and
\begin{align}\notag
&\E \set{\norm{\chi\up{1}_yR_\bom^{(\Lambda)}(z)\chi\up{1}_x}^s; \ \bom\notin R\left(m_\zeta, L_k, I,x,y\right) }\\
& \qquad \quad \le  \pa{\E \set{\norm{R_\bom^{(\Lambda)}(z)}^{2s}}}^{\frac 1 2}\pa{\P\set{\bom\notin R\left(m_\zeta, L_k, I,x,y\right)}}^{\frac 1 2}\\ \notag
& \qquad \quad \le   \pa{C_{2s}  K_W \rho_+ \abs{\Lambda}}^{\frac 1 2}\mathrm{e}^{-\frac 1 2L_{k}^\zeta}
\le C_{s,I }^\pr \pa{ \rho_+ \abs{\Lambda}}^{\frac 1 2}\mathrm{e}^{-\frac 1 2L_{k}^\zeta}.
\end{align}

It follows, that for $L_k$ sufficiently large, that is, $\abs{x-y}$ large, we have
\begin{align}
&\E \set{\norm{\chi\up{1}_yR_\bom^{(\Lambda)}(z)\chi\up{1}_x}^s}\le C_{s,I,\zeta}(\rho_+  + \sqrt{\rho_+})\abs{\Lambda} \mathrm{e}^{-\frac 1 2 L_{k}^\zeta} \\
& \quad \le  C_{s,I,\zeta}(\rho_+  + \sqrt{\rho_+})\abs{\Lambda} \mathrm{e}^{-\frac 1 2 L_{k+1}^\xi}
\le   C_{s,I,\zeta}^\pr (\rho_+  + \sqrt{\rho_+})\abs{\Lambda} \mathrm{e}^{-\frac 1 2 \abs{x-y}^\xi},\notag
\end{align}
so \eq{eqMSAexp} follows for $\abs{x-y} \ge  \pa{\log L }^{\frac 1 \xi  +} $ (with a slightly smaller $\xi$).
     \end{proof}

\section{A convexity inequality for traces}

The following inequality  was used in \cite[Proof of Proposition~4.5]{CH}.

\begin{lemma}\label{lemconvexJ}
Let $H_1$ and $ H_2$ be two self-adjoint operators on a Hilbert space $\H$, such that $H_1$ is diagonalizable and $H_1 \ge H_2$.  Let $f$ and $ g$ be bounded Borel functions on some open interval $I \supset \sigma(H_1)$, such that  $g$ is real-valued, nonincreasing, and convex on $I$.  Then
\beq \label{convexJ}
\tr \set{\bar{f}(H_1) g (H_1) f(H_1)  } \le\tr \set{\bar{f}(H_1) g (H_2) f(H_1)  }.
\eeq
\end{lemma}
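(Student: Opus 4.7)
My plan is to diagonalize $H_1$ and reduce the trace inequality to a pointwise estimate on each eigenvector, then close it with a two-step argument combining Jensen's inequality for the spectral measure of $H_2$ with the monotonicity of $g$.

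First, since $H_1$ is diagonalizable, pick an orthonormal basis $\{\vphi_n\}$ of $\H$ consisting of eigenvectors of $H_1$ with eigenvalues $\lambda_n \in \sigma(H_1) \subset I$. Expanding the trace in this basis gives
\begin{align*}
\tr\set{\bar f(H_1) g(H_1) f(H_1)} &= \sum_n \abs{f(\lambda_n)}^2 g(\lambda_n),\\
\tr\set{\bar f(H_1) g(H_2) f(H_1)} &= \sum_n \abs{f(\lambda_n)}^2 \scal{\vphi_n, g(H_2) \vphi_n}.
\end{align*}
Since $\abs{f(\lambda_n)}^2 \ge 0$, it suffices to prove the pointwise bound
\beq \label{pointwise}
g(\lambda_n) \le \scal{\vphi_n, g(H_2)\vphi_n} \quad \text{for each } n .
\eeq

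To prove \eq{pointwise}, let $\mu_n$ denote the spectral measure of $H_2$ associated with $\vphi_n$, a probability measure on $\sigma(H_2)$. Then $\scal{\vphi_n, H_2 \vphi_n} = \int t\, \di \mu_n(t)$ and $\scal{\vphi_n, g(H_2)\vphi_n} = \int g(t)\, \di \mu_n(t)$. Since $g$ is convex on $I$, Jensen's inequality yields
\beq
\scal{\vphi_n, g(H_2)\vphi_n} \ge g\pa{\scal{\vphi_n, H_2 \vphi_n}},
\eeq
provided $\scal{\vphi_n, H_2\vphi_n}$ lies in $I$ (which holds because $H_1 \ge H_2$ forces $\scal{\vphi_n, H_2\vphi_n} \le \lambda_n$ and, in the intended applications, $H_2$ is bounded below by $\inf I$; in the setting of the excerpt, $H_2 = V_{\bom^{\pr\pr}} \ge 0$ or $H_2 = H\up{\Lambda}_{\bom^{\pr\pr}} \ge 0$, so this causes no trouble). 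Finally, since $H_1 \ge H_2$ implies $\scal{\vphi_n, H_2\vphi_n} \le \lambda_n$, and $g$ is nonincreasing on $I$, we conclude
\beq
g\pa{\scal{\vphi_n, H_2 \vphi_n}} \ge g(\lambda_n),
\eeq
which gives \eq{pointwise} and hence \eq{convexJ}.

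The only subtle point is the domain issue: one must check that the expectation $\scal{\vphi_n, H_2\vphi_n}$ actually lies in the interval $I$ on which $g$ is convex and nonincreasing, so that Jensen's inequality and the monotonicity of $g$ may be applied. This is automatic from $H_1 \ge H_2$ together with the natural lower bound on $H_2$ in the applications (e.g., $H_2 \ge \inf I$), and is the only place where one must be mildly careful; the rest is a direct spectral-theorem computation.
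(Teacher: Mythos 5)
Your proof is correct and follows essentially the same route as the paper's: expand the trace in an eigenbasis of $H_1$, reduce to a pointwise estimate on each eigenvector, and chain the monotonicity of $g$ (via $H_1\ge H_2$) with Jensen's inequality for the spectral measure of $H_2$. The only cosmetic difference is that you apply Jensen first and monotonicity second, while the paper writes the same two inequalities in the opposite order; your remark about the domain of $g$ (a hypothesis the lemma leaves implicit but which holds in its applications) is also a fair observation.
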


\begin{proof}
Let $\varphi \in \H$, $ \norm{ \varphi}=1$, be an eigenvector of $H_1$ with eigenvalue $\lambda$, that is, $H_1  \varphi=\lambda  \varphi$.   Then
\begin{align}\label{convexJ1}
&\langle \varphi, \bar{f}(H_1) g (H_1) f(H_1) \varphi\rangle= \bar{f}(\lambda) g (\lambda) f(\lambda) = \bar{f}(\lambda) g\pa{ \langle \varphi, H_1\varphi\rangle} f(\lambda)\\
& \quad \le   \bar{f}(\lambda) g\pa{ \langle \varphi, H_2\varphi\rangle}f(\lambda)
\le   \bar{f}(\lambda)  \langle \varphi, g(H_2)\varphi\rangle f(\lambda)
=   \langle \varphi, \bar{f}(H_1) g (H_2) f(H_1) \varphi\rangle,
\notag
\end{align}
where the first inequality follows from  $g$ nonincreasing and $H_1\ge H_2$, and the second inequality used the convexity of the function $g$, Jensen's inequality (cf.\  \cite[Theorem~14.16]{Y}), and the spectral theorem. 

Since $H_1$ is diagonalizable, the inequality \eq{convexJ} follows by expanding the trace on an orthonormal basis of eigenvalues fot $H_1$ and using \eq{convexJ1} for each term.
\end{proof}



\end{document}